\DeclareMathOperator*{\argmax}{arg\,max}
\newtheorem{theorem}{Theorem}
\newtheorem{corollary}{Corollary}
\newtheorem{lemma}{Lemma}
\newtheorem{definition}{Definition}
\newcommand\numberthis{\addtocounter{equation}{1}\tag{\theequation}}
\newenvironment{customthm}[1]
  {\innercustomthm}
  {\endinnercustomthm}
\newenvironment{customco}[1]
  {\innercustomco}
  {\endinnercustomco}
\title{On the Trainability and Classical Simulability of Learning Matrix Product States Variationally}
\author{
    Afrad Basheer\footnote{Email: Afrad.M.Basheer@student.uts.edu.au}\textsuperscript{\rm 1},
    Yuan Feng\textsuperscript{\rm 2},
    Christopher Ferrie\textsuperscript{\rm 1},
    Sanjiang Li\textsuperscript{\rm 1} and 
    Hakop Pashayan\textsuperscript{\rm 3}
}
\begin{document}

\maketitle

\begin{abstract}

We prove that using global observables to train the matrix product state ansatz results in the vanishing of all partial derivatives, also known as barren plateaus, while using local observables avoids this. This ansatz is widely used in quantum machine learning for learning weakly entangled state approximations. Additionally, we empirically demonstrate that in many cases, the objective function is an inner product of almost sparse operators, highlighting the potential for classically simulating such a learning problem with few quantum resources. All our results are experimentally validated across various scenarios. 

\end{abstract}

\section{Introduction}

    Quantum computing has made major strides in the past few years in terms of qubit capacity \cite{Chow2021,Gambetta2022} and error correction \cite{Dasilva2024}, prompting significant efforts in demonstrating concrete advantages over classical computers. Although this has been achieved in some cases~\cite{Arute2019,Zhong2020,Madsen2022}, superiority in solving practically useful problems remains to be seen. With the emergence of Noisy Intermediate Scale Quantum (NISQ) devices~\cite{Preskill2018}, recent focus has shifted to quantum optimization algorithms, especially Variational Quantum Algorithms (VQAs)~\cite{Cerezo2021_VQA}. In VQAs, we use quantum computers to estimate objective functions involving quantum \textit{states} and parameterized quantum circuits (also called \textit{ansatzes}) and update the parameters of these functions classically. Popular examples include variational quantum eigensolver~\cite{Peruzzo2014}, quantum support vector machines~\cite{Havlicek2019}, quantum approximate optimization algorithm~\cite{Farhi2014}, etc.
    
    An important example that features extensively in quantum information is learning cost-effective approximations of quantum states~\cite{Cerezo2021,Matos2021,Gard2020}. These techniques can be used to learn ansatzes that can approximately prepare states using fewer gates than initially required. Among the many choices of ansatzes used, interest has surged in the Matrix Product State (MPS) ansatz (cf. Figure~\ref{fig:ansatz} (a))~\cite{Rudolph2022,Ran2019,Dov2022,Lin2021,Rieser2023}. The MPS ansatz leverages the MPS data structure, which stores quantum states with space complexity that scales polynomially with the \textit{bond dimensions} (parameters that measure entanglement in linear qubit arrays). Consequently, the MPS ansatz is particularly effective for learning weakly entangled approximations, potentially resulting in fewer gate counts and simpler gate connectivity requirements compared to other approaches. 

    Learning state approximations variationally can be performed using either global or local observables~\cite{Cerezo2021}. Experimental results in~\cite{Dov2022} showed that using the MPS ansatz along with global observables for state approximations can result in barren plateaus, where all partial derivatives become exponentially small in the number of qubits. This makes estimating these derivatives using quantum devices require exponentially many executions. Moreover, the parameter updates also become exponentially small. In contrast, it was empirically shown that using local observables can help mitigate this issue.

    Theoretically proving this phenomenon is a challenging task. Although this has been achieved for similar problems such as optimization over the Hardware Efficient Ansatz (HEA)~\cite{Cerezo2021} (cf. Figure~\ref{fig:ansatz} (b)) and tensor network-based optimization in quantum information~\cite{Liu2022}, these results cannot be used to explain exponentially vanishing objective functions and gradients for the MPS ansatz.
    
    This work provides rigorous trainability proofs for the MPS ansatz. We prove that under uniformly random initialization of the circuit parameters, when using global observables, the variance of the objective function decreases exponentially while the usage of local observables ensures that the same variance is lower bounded by a quantity whose dependence on the number of qubits is linear and scales exponentially only in the width of the subcircuit involved. We also relate this with the variance of the partial derivatives and show that similar results hold for them as well.

    Trainability is closely interrelated with classical simulability. In~\cite{Cerezo2024}, it was conjectured, with evidence, that provably avoiding barren plateaus in this manner could imply classical simulability with few quantum resources. That is, for all provably trainable VQA objective functions, one can simulate the whole optimization classically using the outputs of a few quantum measurements implemented beforehand on the input state. By proving the trainability of the MPS ansatz and local observable combination, our work prepares the groundwork for studying its classical simulability.

    On this side, we demonstrate that these trainable VQA objective functions exhibit \textit{effective subspaces}. These subspaces are loosely defined as the subspaces where the observables, when conjugated with the ansatzes, tend to be mostly concentrated, for almost all input parameters~\cite{Cerezo2024}. If the objective function exhibits this property, then most function evaluations, which are nothing but inner products of the state with these conjugated observables (cf. Eq~\eqref{eq:main_optimization}), could potentially be classically estimated using the input state's coefficients in this subspace estimated beforehand using a quantum device. We first characterize the property of exhibiting effective subspaces by introducing an efficiently estimable norm for observables, the $C$-$\mathbb{K}$ norm, which we use to experimentally show that the MPS ansatz and local observable combination exhibits an effective subspace within the Pauli basis. 
    
    Our main contributions can be summarized as follows: 

    \begin{enumerate}
        \item For the problem of learning weakly entangled state approximations variationally, we rigorously prove that the usage of global observables will induce barren plateaus, while the usage of local observables will avoid them.
        \item We empirically show that the MPS ansatz, when used in combination with local observables, exhibits an effective subspace within the Pauli basis, which is conjectured to be a consequence of avoiding cost concentration and a sufficient condition for the ansatz to be classically simulable using few quantum resources as per~\cite{Cerezo2024}.
    \end{enumerate}
    Finally, we experimentally validate our results across various scenarios, including the impact of observable choices on MPS ansatz optimization and the detection of effective subspaces in MPS ansatz as well as other ansatzes such as HEA, and Quantum Convolutional Neural Network (QCNN) (cf. Figure~\ref{fig:ansatz} (c)).  
\section{Related Works} \label{sec:related_works}

    In \cite{Liu2022,Garcia2023,Barthel2024}, the theoretical study of barren plateaus in tensor-network-based machine learning with MPS inputs reveals that using global observables in the objective function introduces barren plateaus, whereas local observables avoid them. However, as mentioned in~\cite{Liu2022}, their model and assumptions differ from a variational circuit model. They model the input using the unitary decomposition of MPS, where each component tensor is reshaped into a $2D \times 2D$ unitary matrix, with $D$ as the bond dimension. The randomness is modeled by assuming these unitaries form \textit{unitary $2$-designs}~\cite{Dankert2009}, which are ensembles of unitaries such that integrating quadratic polynomials over them is equivalent to integrating the same over the Haar measure. In contrast, we assume that the subcircuits are sampled from unitary $2$-designs, which is more natural for circuit-based problems as a circuit depth polynomial in the width of the subcircuits suffices for them to behave like a unitary $2$-design under uniformly random parameter initialization \cite{Harrow2009}.

    In~\cite{Dov2022}, it was experimentally observed that while using the MPS ansatz, the usage of global observables leads to exponentially decreasing gradient magnitudes, whereas local observables avoid this issue. In our work, we study this phenomenon theoretically as well as similar trends in cost concentration. The existence of exponentially decreasing partial derivatives in MPS ansatz-based VQAs is proved using ZX-calculus in~\cite{Zhao2021}. However, the method can only be used to prove this for individual examples of the MPS ansatz, with pre-defined structures for the subcircuits. In contrast, our proofs consider the most generalized form of the ansatz, with the only assumption being that the subcircuits form unitary $2$ designs. Also in~\cite{Zhao2021}, there are no discussions regarding the impact that observables and subcircuit widths can have on trainability, which we theoretically demonstrate in the case of cost concentration as well as barren plateaus. 
    
\section{Background} \label{sec:background}
    We denote column vectors as $\ket{\psi}$ ('ket' notation) and their conjugate transposes as $\bra{\psi}$ ('bra' notation). The vector $\ket{i} \in \mathbb{C}^d$ represents the $i^{\text{th}}$ computational basis vector. We also define $\ket{\mathbf{0}} \coloneqq \ket{0} ^ {\otimes n}$, where $n$ is the number of qubits involved. We define $\mathds1_{t}$ to be the identity matrix acting on $\mathbb{C}^{2^t} $, with $\mathds1 \coloneqq \mathds1_1$, and for any $A \in \mathbb{C}^{2 \times 2}$, we define $A_t^{(n)} \coloneqq \mathds1_{n-t} \otimes A \otimes \mathds1_{t-1}. $ For a set of matrices $\{ A^{(1)}, \dots, A^{(p)}\}$, we define $\prod_{i=1}^p A^{(i)} \coloneqq A^{(1)} \dots A^{(p)}$. For any complex matrix $A = \sum_{ij} A_{ij} \ket{i} \bra{j}$, $\| A \|_1 \coloneqq \sum_{ij} | A_{ij}|$ and $\| A\|_{\text{tr}} \coloneqq \text{tr}(\sqrt{A^{\dag} A})$. Any matrix $A \in \mathbb{C}^{2^t \times 2^t}$ such that $A = A^{(1)} \otimes \dots \otimes A^{(t)}$ for some $A^{(1)}, \dots, A^{(n)} \in \mathbb{C}^{2 \times 2}$ is called a \textit{product} matrix. 
    
\subsection{Quantum Computing}
    A (pure) quantum \textit{state} is a (rank one) positive semidefinite operator $\sigma \in \mathbb{C}^{d \times d}$, such that $\text{tr}(\sigma) = 1$. In quantum computing, a \textit{qubit} is the quantum counterpart of a classical bit. An $n$-qubit system's state can be represented by a quantum state in $\mathbb{C}^{2^n \times 2^n}$.
    
    A quantum \textit{gate} operating on $n$ qubits is represented by a unitary operator $U \in \mathbb{C}^{2^n \times 2^n}$, which transforms the state of an $n$-qubit system from $\sigma$ to $\sigma_U \coloneqq U \sigma U^{\dag}$. In particular, the one-qubit \textit{Pauli gates} are defined as 
    \begin{align}
            X \coloneqq 
            \begin{bmatrix}
                0 & 1 \\
                1 & 0
            \end{bmatrix}, \  
            Y \coloneqq 
            \begin{bmatrix}
                0 & -i \\
                i & 0
            \end{bmatrix},\ 
            Z \coloneqq 
            \begin{bmatrix}
                1 & 0 \\
                0 & -1
            \end{bmatrix}.
    \end{align}
    We define $\mathbb{P}_n $ as the set of all $n$-fold tensor products of the elements in $\{ \mathds{1}, X, Y, Z\}$. 
    
    A quantum \textit{circuit} is defined as a composition of multiple quantum gates. The \textit{width} of a quantum circuit is defined as the number of qubits on which it is defined.
\begin{figure*}[tbh] 
    \centering
    \begin{tabular}{cccc} 
         \includegraphics[width=0.4\columnwidth]{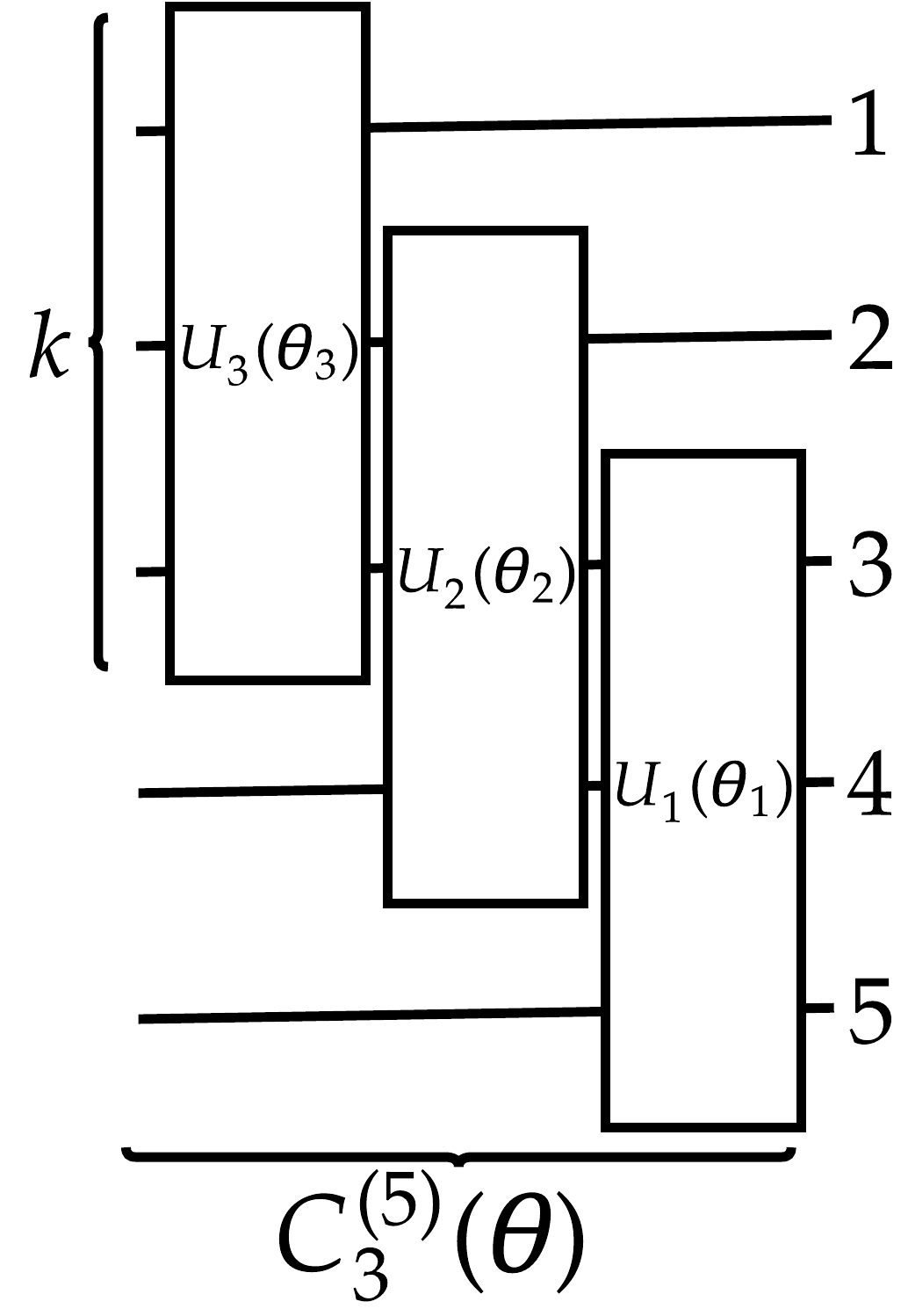} 
         &
        \includegraphics[width=0.39\columnwidth]{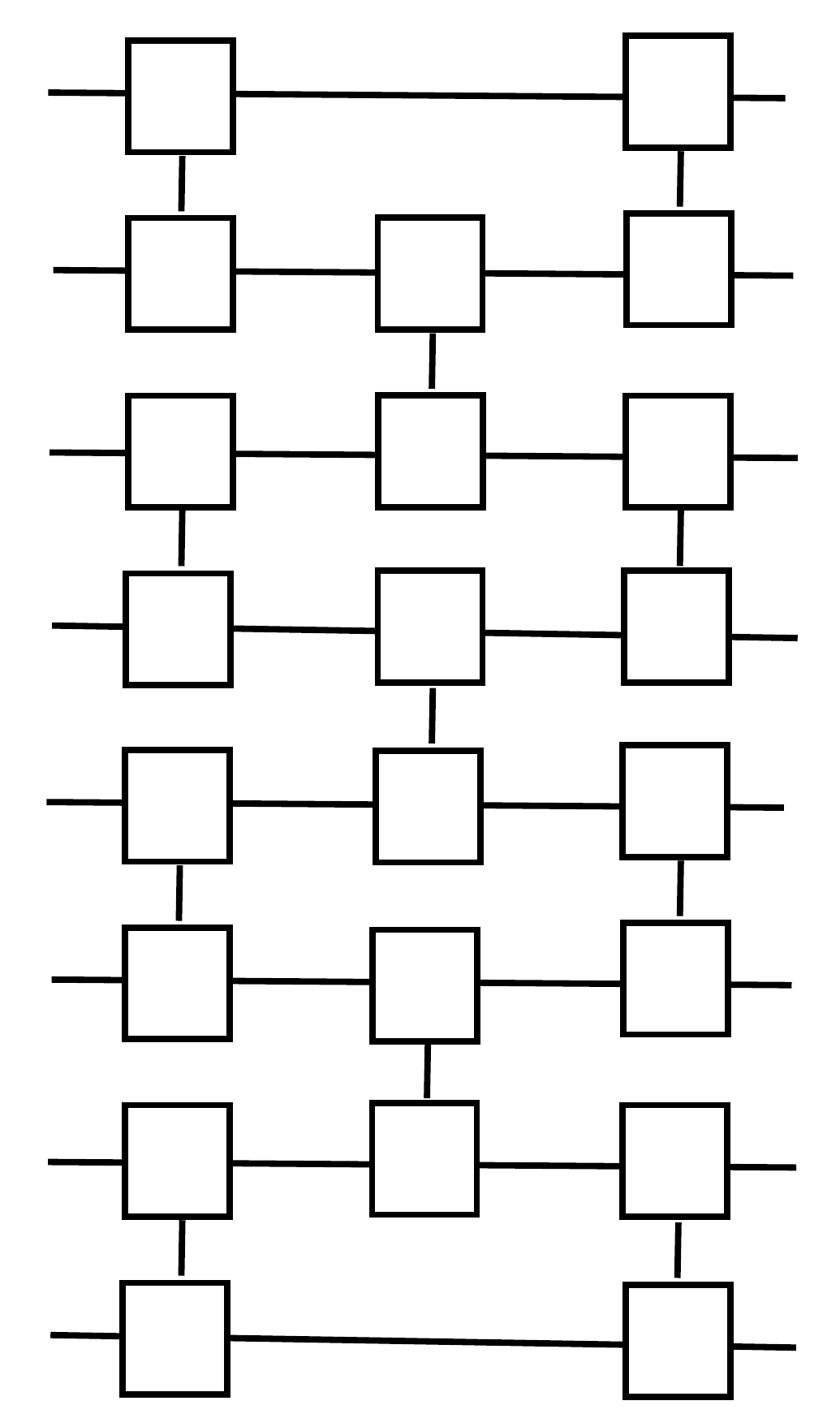}
        &
        \includegraphics[width=0.24\columnwidth]{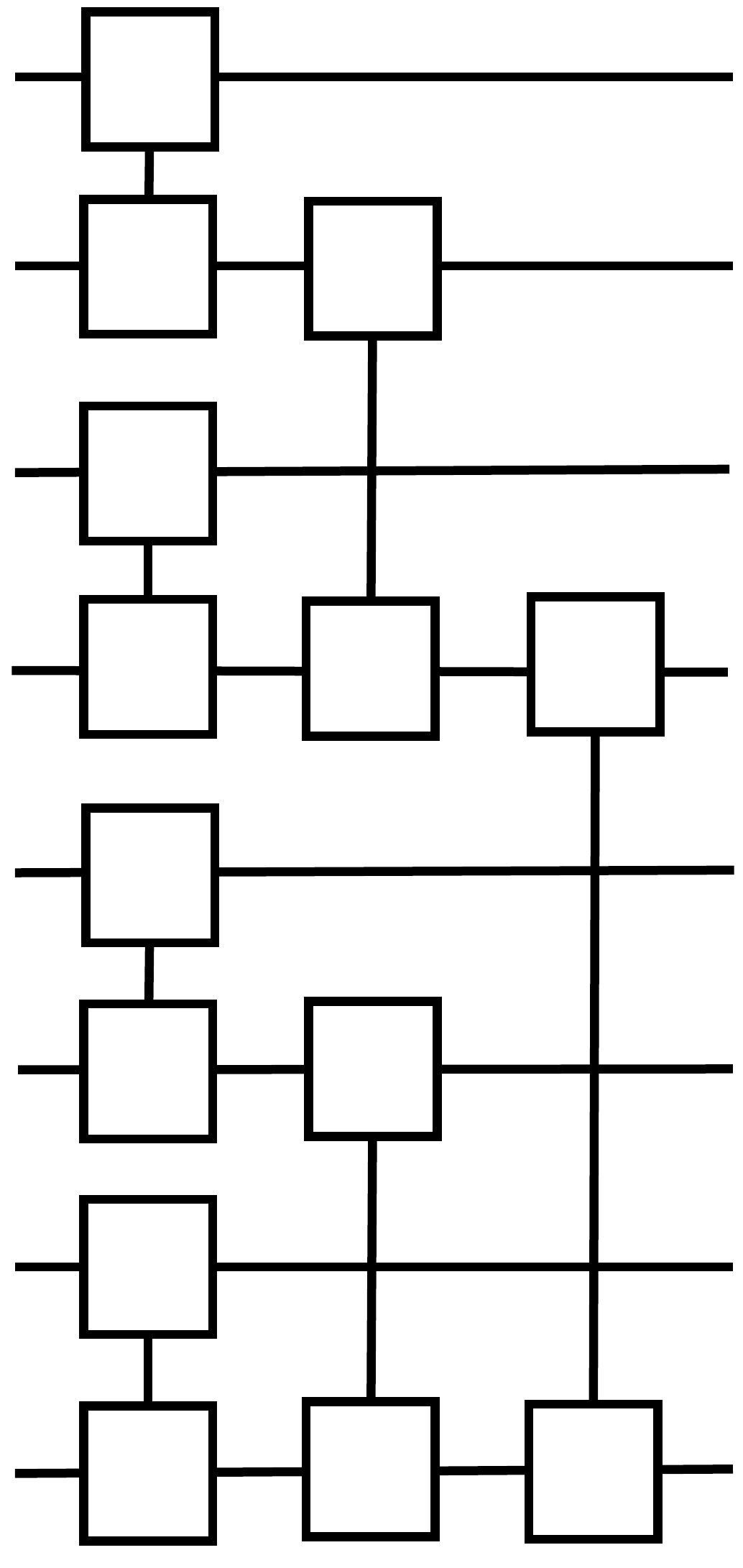}
        &
        \includegraphics[width=0.32\columnwidth]{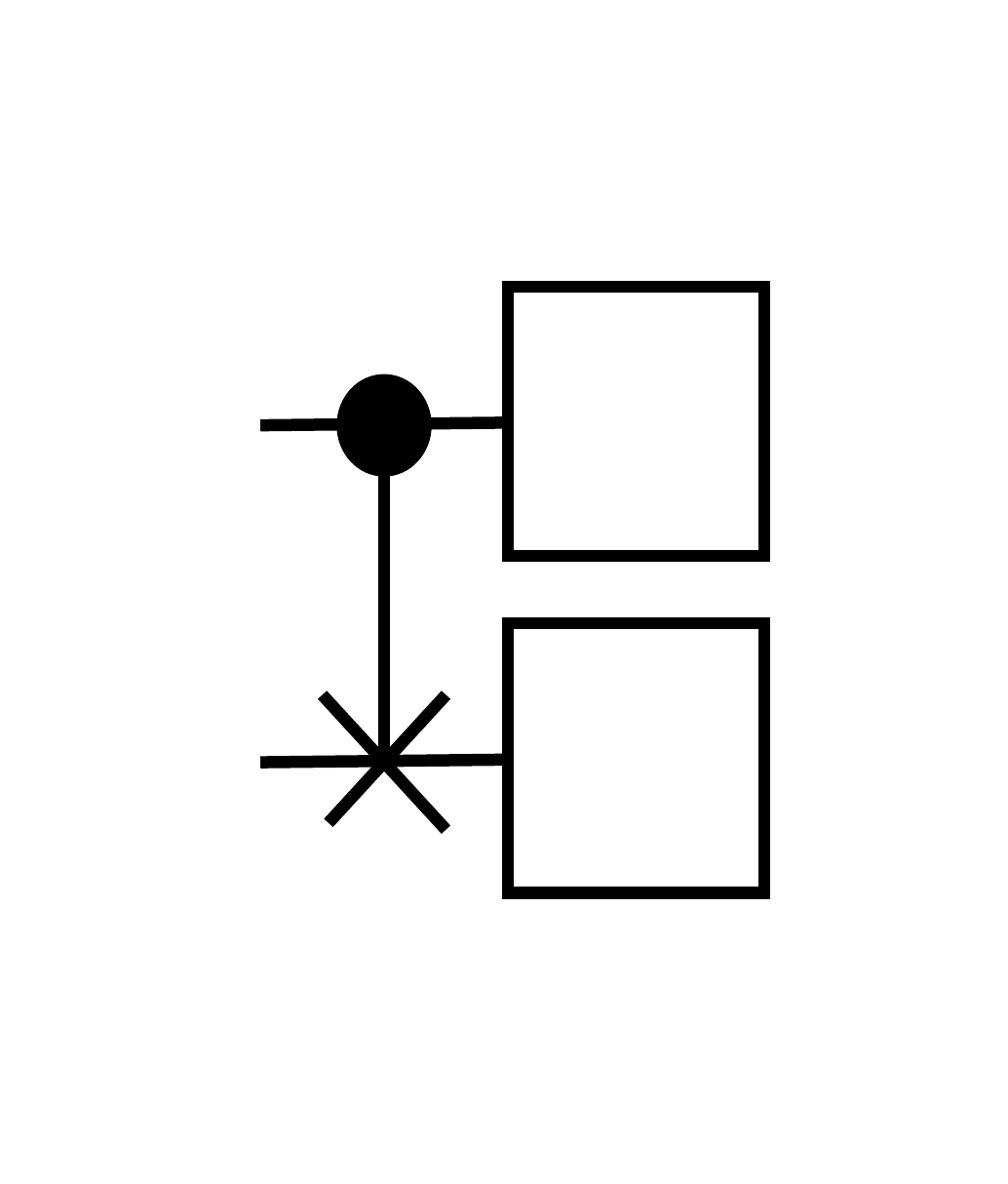}
        \\
        (a) & (b) & (c) & (d)
    \end{tabular} 
    \caption{(a) Example of an MPS ansatz with $n=5$ and $t = 3$. The numbers on the end of each qubit wire are the indices of the qubits. (b) HEA, where each connected pair of boxes are $2$-qubit subcircuits. (c) QCNN ansatz (d) The $2$-qubit subcircuit used in our simulations within HEA and QCNN, where each box is an arbitrary single qubit gate} \label{fig:ansatz}
    \end{figure*}
    A Hermitian operator, also known as an \textit{observable}, can be used to define a \textit{measurement} and an \textit{expectation value}. 
    A measurement of a system in a state $\sigma$, using an observable $W$ with eigendecomposition $W=\sum_i \omega_i \ket{w_i} \bra{w_i}$ results in an outcome $\omega_i$ with probability $ \bra{w_i} \sigma \ket{w_i}$. Thus, the expected value of the measurement outcome is $\text{tr}(W \sigma)$.
    
    We define $\mathbb{D}_n, \mathbb{U}_n$ and $\mathbb{H}_n$ as the set of all $n$-qubit pure states, gates, and observables. If a matrix $A$ acting on an $n$-qubit system acts as $\mathds1_{n-k}$ on $n-k$ qubits, then we call the matrix a $k$-local matrix. For any two pure $n$-qubit states $\sigma$ and $\rho$, the quantum \textit{fidelity}, defined as $F(\rho,\sigma) \coloneq \text{tr}(\rho \sigma)$, serves as a measure of similarity between the states with a fidelity of $1$ indicating that $\sigma = \rho$.

\subsection{Variational Quantum Algorithms}

Many optimization problems in quantum information can be formulated as the optimization of outputs from parameterized quantum circuits, also called \textit{ansatzes}, denoted as $C(\boldsymbol{\theta}) \in \mathbb{U}_n$, where $\boldsymbol{\theta}$ is a vector of parameters. Typically, these problems involve optimizing (over $\boldsymbol{\theta}$) the function defined as
\begin{align} \label{eq:main_optimization}
    f_{\sigma, W, C}(\boldsymbol{\theta}) = \text{tr}\left(WC(\boldsymbol{\theta}) \sigma C(\boldsymbol{\theta})^{\dag}\right) = \text{tr}\left(W\sigma_{C(\boldsymbol{\theta})} \right),
\end{align}
where $\sigma$ is the input quantum state, $\sigma_{C(\boldsymbol{\theta})}$ is $C(\boldsymbol{\theta})\sigma {C(\boldsymbol{\theta})}^\dagger$ by our notation,  and $W$ is an output observable. One can estimate this function for any input $\boldsymbol{\theta}$ by sample means estimation of the expected value of measuring the state $\sigma_{C(\boldsymbol{\theta})}$ using $W$. With this capability, all partial derivatives can also be estimated using standard methods such as finite differencing or quantum-specific ones such as the parameter shift rule~\cite{Mitarai2018}. The $\boldsymbol{\theta}$ search space has $ \mathcal{O}(\text{poly}(n))$ dimensions. Hence, $\boldsymbol{\theta}$ can be updated towards the optimum classically. Throughout this work, we omit the subscript $C$ and use the notation $f_{\sigma, W}$ since the choice of ansatz $C$ is usually implicitly understood from the context.

\subsection{MPS Ansatz} \label{sec:MPS_ansatz}
    
The MPS ansatz is given in Figure~\ref{fig:ansatz} (a). It is built using $ t$ cascading layers of smaller parameterized subcircuits $U_i(\boldsymbol{\theta}_i)$ for $i=1,\dots,t$, each with width $k$. Mathematically, the ansatz is defined as 
\begin{align*}
    C_t^{(n)}(\boldsymbol{\theta})
    = \prod \limits_{p=1}^t \mathds1 ^ {\otimes n-k-p+1} \otimes U_p(\boldsymbol{\theta}_p) \otimes \mathds1 ^ {\otimes p - 1}, 
\end{align*}
where $t \leq n-k+1$, $\boldsymbol{\theta} = \boldsymbol{\theta}_1 \oplus \dots \oplus \boldsymbol{\theta}_t$ with $ \boldsymbol{\theta}_p = [\theta_{p1}, \theta_{p2}, \dots, \theta_{pm}]$ and $U_p(\boldsymbol{\theta}_p) = \prod_{q=1}^{m} e^{-i\theta_{pq}H_{pq}}$ are $k$-qubit parameterized subcircuits, with $H_{pq} \in \mathbb{H}_k$.

This ansatz has a close relationship with the MPS data structure, which can store quantum states with space complexity that is positively correlated with the entanglement between neighboring qubits measured using their bond dimensions~\cite{Cirac2021}. From~\cite{Cramer2010}, we can see that every state that can be represented efficiently as an MPS with bond dimensions at most $2^{k-1}$ can be implemented using this ansatz (assuming that $U_p$ can implement any $k$-qubit unitary). This is what led many works to use the MPS ansatz to solve state approximation problems variationally~\cite{Lin2021,Rudolph2022,Dov2022,Ran2019,Rieser2023}.

Throughout this work, we set $T = n - k + 1$, and our focus is on $C_T^{(n)}$. Also, in appropriate contexts, we denote $C_T^{(n)}$ as $C_T$, as the dependency on $n$ is implied by the system's size. 

\subsection{State Approximation Using MPS Ansatz} \label{subsec:state_approximation}

The problem that we consider here is as follows: given many copies of an $n$-qubit pure state $\sigma$ (or stronger access to a circuit that can prepare $\sigma$), output the parameters of an MPS ansatz such that we can approximately prepare $\sigma$ using these parameters. We present two different approaches to solve this problem, differing only in their measurement strategies.

In this first method, the idea is to maximize the fidelity between $  \ket{\mathbf{0}}\bra{\mathbf{0}}_{{C_T (\boldsymbol{\theta})} ^ {\dag}} $ and $\sigma$, over $\boldsymbol{\theta}$. That is, find $\argmax_{\boldsymbol{\theta}} f_{\sigma,\ket{\mathbf{0}}\bra{\mathbf{0}}}(\boldsymbol{\theta})$. This fidelity can be estimated by applying $ C_T (\boldsymbol{\theta})$ on $\sigma$, measuring all qubits simultaneously using the observable $Z$, and estimating the probability of all measurements resulting in $+1$. So the observable whose expectation features within the objective function is the \textit{global observable} $\ket{\mathbf{0}} \bra{\mathbf{0}}$. 

Alternatively, in the second method, we employ the expectation of $O \coloneqq 1/n\sum_{i=1}^n \ket{0} \bra{0}_i $, which is a sum of $1$-local observables. The intuition here is that if $\boldsymbol{\theta}^*$ maximizes $f_{\sigma,\ket{\mathbf{0}}\bra{\mathbf{0}}}$, then $ \sigma_{C_T(\boldsymbol{\theta}^*)} = \ket{\mathbf{0}}\bra{\mathbf{0}}$ and hence $ f_{\sigma, O}$ also attains its maximum on $\boldsymbol{\theta}^*$. Moreover, for any $\boldsymbol{\theta}$, within $f_{\sigma, O}(\boldsymbol{\theta}) = 1/n \sum_{i=1}^n f_{\sigma, \ket{0} \bra{0}_i}(\boldsymbol{\theta}) $, the $i^{\text{th}}$ summand can be estimated by measuring the $i^{\text{th}}$ qubit of $\sigma_{C_T(\boldsymbol{\theta})}$ using $ Z$ and estimating the probability that it yields $ +1$.  

\section{Trainability}
    In this section, we present our theoretical results regarding cost concentration and barren plateaus of state approximation carried out using global and local observables. 

\subsection{Cost Concentration} \label{sec:cost_concentration}

We start with a formal definition of cost concentration.

\begin{definition}
    Let $\sigma \in \mathbb{D}_n$, $W \in \mathbb{H}_n$, and $\boldsymbol{\theta}$ be distributed over a compact parameter space. For any ansatz $C$, $f_{\sigma,W}$ exhibits cost concentration if 
    \begin{align}
        \text{Var}_{\boldsymbol{\theta}} \left( f_{\sigma, W}(\boldsymbol{\theta})\right) \in \mathcal{O}\left(\frac{1}{b^n} \right)
    \end{align}
    for some $b>1$.
\end{definition}

From the above definition, we can see that for any VQA objective function that exhibits cost concentration, the output would be exponentially concentrated around $ \mathbb{E}_{\boldsymbol{\theta}} \left( f_{\sigma, W}(\boldsymbol{\theta}) \right)$. Thus, if there exists a space $\mathbb{V}$ such that $\forall \ \boldsymbol{\theta} \in \mathbb{V}$, $f_{\sigma, W}(\boldsymbol{\theta}) \geq \mathbb{E}_{\boldsymbol{\theta}} \left( f_{\sigma, W}(\boldsymbol{\theta}) \right) + \Omega(1)$, then, $\mathbb{V}$ must have $\mathcal{O}(1/b^n)$ measure. Further, cost concentration ensures that the landscape of $f_{\sigma, W}(\boldsymbol{\theta})$ is extremely flat for almost all $\boldsymbol{\theta}$. Thus gradient variations are too small to distinguish from zero using at most $\mathcal{O}(\text{poly}(n))$ samples ensuring there is no strong gradient to follow to reach $\mathbb{V}$. Therefore, to estimate these outputs with meaningful precision, one would require an exponentially large number of samples, or equivalently, copies of $\sigma$.

Now, we present our theoretical results regarding cost concentration in learning MPS approximations variationally using $C_T$. Many trainability results in the literature assume one of two assumptions on the input state~\cite{Cerezo2024}; either they are "close" to product states~\cite{Pesah2021,Cerezo2021} or they are sparse~\cite{Monbroussou2023,Larocca2022,Cherrat2023}. Our results also make such assumptions and hence use $h_1(\sigma) \coloneqq \min_{V_1, \dots, V_n \in \mathbb{U}_2} \| \sigma_{V_1 \otimes \dots \otimes V_n}\|_1^2$ and $h_2(\sigma) = \min_{\rho_1, \dots, \rho_n \in \mathbb{D}_1} \| \rho_1 \otimes \dots \otimes \rho_n - \sigma\|_{\text{tr}}$, to quantify sparsity and proximity to product states respectively.

We start by proving that using global observables for state approximation can give rise to an objective function that exhibits cost concentration.
\begin{theorem} \label{th:upper_bound}
    Let $\sigma \in \mathbb{D}_n$ and $C_T^{(n)}$ be an MPS ansatz where each parameterized subcircuit $U_i$ forms a unitary $2$-design. Then, we have
    \begin{align}
        \text{Var}_{\boldsymbol{\theta}} \left(  f_{\sigma,\ket{\mathbf{0}}\bra{\mathbf{0}}}(\boldsymbol{\theta}) \right) \leq \frac{h_1(\sigma)}{4^ {n-k-1}}.
    \end{align}
\end{theorem}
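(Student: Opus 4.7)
The strategy is to bound the variance via $\mathrm{Var}(f) \leq \mathbb{E}[f^2]$ and to compute the right-hand side by iterated $2$-design integration of each subcircuit $U_p$. Using the replica trick,
$$\mathbb{E}[f^2] \;=\; \mathrm{tr}\!\left((\sigma\otimes\sigma)\,\mathcal{E}\!\left[(\ket{\mathbf{0}}\bra{\mathbf{0}})^{\otimes 2}\right]\right),$$
where $\mathcal{E}[X] = \mathbb{E}[(C^{\dag})^{\otimes 2} X C^{\otimes 2}]$ factors as $\mathcal{E}_T\circ\cdots\circ\mathcal{E}_1$ by independence of the $U_p$, with each $\mathcal{E}_p$ a $2$-design integration acting non-trivially only on the $2$-copy of $U_p$'s $k$-qubit support.

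Next, I would invoke an absorption argument to assume $\|\sigma\|_1^2 = h_1(\sigma)$ in the computational basis. If $V = V_1\otimes\cdots\otimes V_n$ achieves the $h_1$-minimum, then because each $\tilde{U}_p$ is a $2$-design and its distribution is invariant under right multiplication by a fixed unitary, $V$ can be commuted through the staircase and absorbed qubit-by-qubit into the subcircuits (each $V_i$ going into one $\tilde{U}_p$ whose support contains qubit $i$), without changing the second-moment distribution of $f$. This replaces $\sigma$ by its sparsest computational-basis representative.

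Applying the entrywise H\"older inequality,
$$\mathbb{E}[f^2] \;\leq\; \|\sigma\otimes\sigma\|_1 \cdot \max_{\mathbf{a},\mathbf{b}}|M_{\mathbf{a},\mathbf{b}}| \;=\; h_1(\sigma)\cdot \max_{\mathbf{a},\mathbf{b}}|M_{\mathbf{a},\mathbf{b}}|,$$
where $M = \mathcal{E}[(\ket{\mathbf{0}}\bra{\mathbf{0}})^{\otimes 2}]$, so it suffices to show $\max_{\mathbf{a},\mathbf{b}}|M_{\mathbf{a},\mathbf{b}}| \leq 4^{-(n-k-1)}$. By the Weingarten formula for $2$-designs, each $\mathcal{E}_p$ replaces the operator on $U_p$'s $2$-copy support with a combination $\alpha\mathds{1} + \beta S$ of identity and SWAP, where $\alpha,\beta$ are determined by the traces of the incoming operator against $\mathds{1}$ and $S$. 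The key structural feature is that, in the MPS staircase, consecutive subcircuits share $k-1$ qubits and differ by exactly one new qubit still carrying $\ket{00}\bra{00}$ on its $2$-copy; the Weingarten step therefore effectively processes one new qubit at a time, shrinking the magnitudes of the entries of $M$ by a factor of $\Theta(1/4)$ per iteration rather than $\Theta(1/4^k)$.

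The main technical obstacle will be the clean bookkeeping of these coefficients through all $T = n-k+1$ iterations. After $j$ steps, the operator is a linear combination of $2^j$ identity/SWAP patterns whose coefficients are coupled through the Weingarten recursion. I would handle this by setting up a transfer-matrix-type recurrence on the coefficient pair $(\alpha_p,\beta_p)$ at each step, showing that both entries contract by $\Theta(1/4)$ per iteration in magnitude. Propagating this contraction through all $T$ integrations, while accounting separately for the first step (which absorbs the initial $k$-qubit block) and a final boundary normalization, should yield the desired bound on $\max_{\mathbf{a},\mathbf{b}}|M_{\mathbf{a},\mathbf{b}}|$ and hence complete the proof.
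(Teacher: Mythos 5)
Your proposal follows essentially the same route as the paper's proof: bounding the variance by the second moment, absorbing the optimal local unitaries into the $2$-design subcircuits to replace $\|\sigma\|_1^2$ by $h_1(\sigma)$, reducing to an entrywise bound on the twirled projector via the $1$-norm, and controlling the iterated Weingarten integration with a two-dimensional transfer-matrix recursion whose per-qubit contraction factor is $1/4$ (the paper's $2\times 2$ matrices $M_t$ with eigenvalues bounded by $1/4$ and boundary factors $\|M_T\|_2\|M_1\|_2\leq 1$). The key structural insight you identify --- that the recursion closes on a two-dimensional space of identity/SWAP-type terms rather than blowing up exponentially --- is exactly the paper's mechanism, so the approach is sound and matches.
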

Hence, for states with $h_1(\sigma) \in \mathcal{O}(4^{n/p})$ with $p > 1$, we see that the upper bound will decrease exponentially. The proof of Theorem~\ref{th:upper_bound}, as well as all other theorems introduced in this work, can be found in the Appendix. 

In contrast,  the next theorem shows that the alternative method leveraging local observables provably avoids cost concentration. 
\begin{theorem} \label{th:lower_bound}
    Let $\sigma \in \mathbb{D}_n$, $O \coloneqq 1/n\sum_{i=1}^n \ket{0} \bra{0}_i$, and $C_T^{(n)}$ be an MPS ansatz, where each parameterized subcircuit $U_i$ forms a unitary $2$-design. Then, we have 
    \begin{align}
        \text{Var}_{\boldsymbol{\theta}} \left(f_{\sigma, O}(\boldsymbol{\theta}) \right)  \geq \frac{1}{n(2 ^ {2k+1} + 4)} - \frac{h_2(\sigma)}{2n}.
    \end{align}
\end{theorem}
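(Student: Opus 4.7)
The plan is to reduce the bound to a product-state computation via the $h_2(\sigma)$ approximation, and then isolate a single subcircuit's $2$-design variance by folding the sum $O = \frac{1}{n}\sum_i \ket{0}\bra{0}_i$ into one aggregated observable before applying Weingarten integration.

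First, I would fix $\rho_1, \dots, \rho_n \in \mathbb{D}_1$ achieving $\|\rho_1 \otimes \dots \otimes \rho_n - \sigma\|_{\text{tr}} = h_2(\sigma)$ and set $\sigma' \coloneqq \rho_1 \otimes \dots \otimes \rho_n$. The difference $Y(\boldsymbol{\theta}) \coloneqq f_{\sigma,O}(\boldsymbol{\theta}) - f_{\sigma',O}(\boldsymbol{\theta})$ equals $\text{tr}(O\, C_T(\boldsymbol{\theta})(\sigma-\sigma')C_T(\boldsymbol{\theta})^\dagger)$, which is bounded uniformly in $\boldsymbol{\theta}$ by $\|O\|_\infty\, h_2(\sigma) \leq h_2(\sigma)$. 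Using
\begin{align*}
\text{Var}_{\boldsymbol{\theta}}(f_{\sigma,O}) = \text{Var}_{\boldsymbol{\theta}}(f_{\sigma',O}) + 2\,\text{Cov}_{\boldsymbol{\theta}}(f_{\sigma',O}, Y) + \text{Var}_{\boldsymbol{\theta}}(Y)
\end{align*}
together with Cauchy--Schwarz and the $\tfrac{1}{n}$ normalization inside $O$ (which I expect to tighten the naive bound by a factor of $n$), the perturbation term will be absorbable into the claimed $-\tfrac{h_2(\sigma)}{2n}$ residual. This reduces the problem to lower-bounding $\text{Var}_{\boldsymbol{\theta}}(f_{\sigma',O})$.

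Second, I would rewrite $f_{\sigma',O} = \tfrac{1}{n}\, f_{\sigma',A}$ with $A \coloneqq \sum_i \ket{0}\bra{0}_i$, giving $\text{Var}(f_{\sigma',O}) = \tfrac{1}{n^2}\text{Var}(f_{\sigma',A})$. To lower-bound $\text{Var}(f_{\sigma',A})$, I would apply the law of total variance by conditioning on every parameter except those of a well-chosen subcircuit $U_p$ (for instance, a boundary subcircuit). Splitting $C_T = V_{\text{after}}\, U_p\, V_{\text{before}}$, the conditional dependence on $U_p$ reduces to $\text{tr}(\tilde A\, U_p\, \tilde B\, U_p^\dagger)$, where $\tilde A$ is the $k$-qubit operator obtained by partially tracing $V_{\text{after}}^\dagger A V_{\text{after}}$ against the reduced state on $U_p$'s complement, and $\tilde B$ is the $k$-qubit reduced state of $V_{\text{before}}\sigma'V_{\text{before}}^\dagger$ on $U_p$'s support. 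The $2$-design Weingarten identity
\begin{align*}
\text{Var}_U\bigl[\text{tr}(A U B U^\dagger)\bigr] = \frac{\bigl(\text{tr}(A^2)-\text{tr}(A)^2/d\bigr)\bigl(\text{tr}(B^2)-\text{tr}(B)^2/d\bigr)}{d^2-1}, \quad d=2^k,
\end{align*}
then evaluates this in closed form. Since $\sigma'$ is a product of pure states, $\tilde B$ remains pure ($\text{tr}(\tilde B^2)=1$); meanwhile $\text{tr}(\tilde A)$ and $\text{tr}(\tilde A^2)$ are controlled by the light-cone combinatorics of $A$ through the cascade, and I expect them to scale as $n$ and $n^2$ respectively after expectation over the conditioned parameters, yielding $\text{Var}(f_{\sigma',A}) = \Omega(n/4^k)$ and hence the claimed main term after exact constant tracking.

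The hardest step will be the light-cone accounting across the MPS staircase. Because the backward light cones of different $\ket{0}\bra{0}_i$'s overlap substantially, treating the $f_{\sigma',\ket{0}\bra{0}_i}$ independently would lose a factor of $n$; it is essential to aggregate into a single observable $A$ before Weingarten integration so that all cross-covariances are captured at once. A related subtlety is matching the precise constants $2^{2k+1}+4$ and the $\tfrac{1}{2n}$ factor on $h_2(\sigma)$, which requires handling boundary versus interior subcircuits uniformly in the Weingarten expansion and threading the $\tfrac{1}{n}$ normalization of $O$ tightly through the perturbation inequality.
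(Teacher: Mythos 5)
Your overall architecture (reduce to a nearby product state via $h_2$, then lower-bound the product-state variance by $2$-design integration) matches the paper's, but both halves have concrete gaps. On the perturbation step: decomposing $\text{Var}(f_{\sigma,O}) = \text{Var}(f_{\sigma',O}) + 2\,\text{Cov}(f_{\sigma',O},Y) + \text{Var}(Y)$ and applying Cauchy--Schwarz gives a penalty of order $\sqrt{\text{Var}(f_{\sigma',O})}\cdot h_2(\sigma) \sim h_2(\sigma)/\sqrt{n}$, not $h_2(\sigma)/(2n)$; the extra factor of $n$ you hope to extract from the normalization of $O$ is not there, since $\|O\|_\infty = 1$. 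With an $h_2/\sqrt{n}$ penalty, the downstream corollary would require $h_2 = o(1/\sqrt{n})$ rather than a constant threshold in $k$. The paper gets the $1/(2n)$ factor by a different route: it first shows that $\text{Var}_{\boldsymbol{\theta}}(f_{\sigma,O}) = \frac{1}{4n^2}\sum_i \mathbb{E}_{\boldsymbol{\theta}}\left(f_{\sigma,Z_i}(\boldsymbol{\theta})\right)^2$ holds exactly for \emph{every} $\sigma$ (the means and all cross-covariances vanish identically under the $2$-design assumption, via Lemmas~\ref{le:1_design_multi} and~\ref{le:2_design_multi}), and then perturbs each second moment individually using $|a^2-b^2|=|a-b|\,|a+b|$ and the tracial H\"older inequality, giving $2h_2$ per term; the $\frac{1}{4n^2}$ prefactor against $n$ such terms yields exactly $h_2/(2n)$.

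On the main term, the reduction of the conditional objective to $\text{tr}(\tilde A\,U_p\tilde B\,U_p^\dagger)$ with $\tilde A,\tilde B$ given by partial traces is invalid: neither $V_{\text{before}}\,\sigma'\,V_{\text{before}}^\dagger$ nor $V_{\text{after}}^\dagger A V_{\text{after}}$ factorizes across the cut between $U_p$'s support and its complement (the staircase entangles every such cut), so one must expand in an operator basis across the cut, which reintroduces exactly the cross terms you were hoping to avoid; the claimed scalings $\text{tr}(\tilde A)\sim n$ and $\text{tr}(\tilde A^2)\sim n^2$ are unsubstantiated. Moreover, the premise that aggregating into $A=\sum_i\ket{0}\bra{0}_i$ is essential to avoid losing a factor of $n$ is backwards: the cross-covariances $\mathbb{E}_{\boldsymbol{\theta}}(f_{\sigma,Z_i}f_{\sigma,Z_j})$ for $i\neq j$ vanish exactly, so $\text{Var}(f_{\sigma',A})=\sum_i\text{Var}(f_{\sigma',\ket{0}\bra{0}_i})$ and it suffices to lower-bound each summand by a constant. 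Finally, conditioning on a single subcircuit cannot by itself produce that constant: the expectation over the other $T-1$ Haar-random subcircuits still has to be controlled. The paper does this with a transfer-matrix recursion --- integrating $U_T, U_{T-1},\dots$ one layer at a time and using the invariance in Lemma~\ref{le:equivalence} to collapse each step to a linear combination of only two canonical terms $\Delta_t^{(=)}$ and $\Delta_t^{(\neq)}$, so the full integral becomes a product of fixed $2\times 2$ matrices whose top eigenvalue is $1$, yielding the non-vanishing constant $B_0=1/(2^{2k-1}+1)$. Your proposal contains no mechanism playing this role, and this is where the stated constant $2^{2k+1}+4$ actually comes from.
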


We note that for any product state $\sigma$, $h_2(\sigma)=0$. More generally, when $h_2(\sigma) \ll  1/(2 ^ {2k} + 2) $, the lower bound scales linearly in $n$ and exponentially only in $k$.

The core idea behind both proofs is to integrate each $ U_t$ starting from $U_T$ using standard Haar random integration methods (cf. Lemma~\ref{le:2_design} in Appendix). Typically, this would yield a linear combination of multiple terms, each being an expectation of MPS ansatz circuit outputs with the same observables but defined over $n-T+t-1$ qubit systems and different input states that were \textit{dependent} on the previous state. Thus, naively integrating each $U_t$ one at a time requires integrating a number of terms exponential in $T$. However, we demonstrate that for the MPS ansatz and the state classes in Theorems~\ref{th:upper_bound} and~\ref{th:lower_bound}, integrating any $U_t$ results in a linear combination of such terms that are \textit{independent} of the previous state, with such state dependency only in the coefficients. This allows us to compute all $T $ integrations using products of $T $ matrices, whose dimension is the number of terms in the linear combination, which in our case, is 2.

Our experimental results discussed later in this work used input states with $h_1$ and $h_2$ not necessarily small, suggesting this or similar bounds for a wider variety of states may hold.

Some works in the literature that use the MPS ansatz consider efficient MPS descriptions of states as input, rather than actual quantum states. In such cases, the entire VQA optimization can be efficiently implemented on classical computers using tensor network simulation~\cite{Jozsa2006}. Within such methods, the objective function is evaluated exactly, not estimated, so cost concentration is not an issue. However, as we will see in the next section, cost concentration also leads to barren plateaus, which can cause parameter updates to be exponentially small, thus hindering even fully classical optimization protocols.

   \subsection{From Cost Concentration to Barren Plateaus} \label{sec:barren_plateaus}
    In this section, we discuss the relationship of Theorems~\ref{th:upper_bound} and~\ref{th:lower_bound} to barren plateaus. First, we formally define barren plateaus, as per~\cite{Arrasmith2022}.

    \begin{definition}
        Let $\sigma \in \mathbb{D}_n$ and let $W \in \mathbb{H}_n$. For any ansatz $C(\boldsymbol{\theta}) = \prod \limits_{p=1}^t U_p(\boldsymbol{\theta}_p) $, where $ U_p(\boldsymbol{\theta}_p) = \prod \limits_{q=1}^m e ^ {-i \theta_{pq} H_{pq}}$, $\boldsymbol{\theta}_p = [\theta_{p1} \dots \theta_{pm}]$, $H_{pq} \in \mathbb{H}_n$ and $ \boldsymbol{\theta} = \boldsymbol{\theta}_1 \oplus \dots \oplus \boldsymbol{\theta}_t$, and for any $p,q$, define
        \begin{align}
            U_p^{(L,q)} \left(\boldsymbol{\theta}_p \right) &= \prod \limits_{j=1}^{q-1} e^{-i \theta_{pj} H_{pj}}, \\
            U_p^{(R,q)} \left(\boldsymbol{\theta}_p \right) &= \prod \limits_{j=q+1}^{m} e^{-i \theta_{pj} H_{pj}}.
        \end{align}
        Then, $ f_{\sigma, W}$ exhibits a barren plateau if $\forall \ p,q$ satisfying $ 1 \leq p \leq t, 1 \leq q \leq m$, we have
        \begin{align}
            \text{Var}_{\boldsymbol{\theta}} \left( \partial_{\theta_{pq}} f_{\sigma, W}(\boldsymbol{\theta})\right) \in \mathcal{O}\left(\frac{1}{b^n}\right),
        \end{align}
        for some constant $b>1$, where $ \partial_{\theta_{pq}} f_{\sigma, W}(\boldsymbol{\theta})$ is it's partial derivative with respect to $\theta_{pq}$ and $U_1, \dots U_{p-1}, U_{p+1}, \dots U_t$, along with one of $U_p^{(L,q)}$ or $U_p^{(R,q)}$ are distributed according to the Haar measure and $\theta_{pq}$ is distributed uniformly.  
    \end{definition}
     Just as cost concentration makes the outputs of most inputs exponentially concentrated, barren plateaus cause most partial derivatives to be exponentially small, since $\mathbb{E}_{\boldsymbol{\theta}} \left( \partial_{\theta_{pq}}\text{tr}(W\sigma_{C(\boldsymbol{\theta})}) \right) = 0 \ \forall \ p,q$~\cite{Cerezo2021}. This means that estimating these derivatives will require exponential resources, and in most cases, the parameter updates that gradient-based classical optimizers make will be exponentially small. 
     
    Next, we will use Theorem~\ref{th:upper_bound} to demonstrate that employing the MPS ansatz for learning state approximations leads to barren plateaus when global observables are used.


    \begin{corollary} \label{co:upper_bound}
        Let $\sigma \in \mathbb{D}_n$ and $C_T^{(n)}$ be an MPS ansatz. Then, $f_{\sigma, \ket{\mathbf{0}} \bra{\mathbf{0}}}$ exhibitis barren plateaus if $ h_1(\sigma) \in \mathcal{O}(4^{n/p})$.
    \end{corollary}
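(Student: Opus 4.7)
The plan is to reduce the corollary to Theorem~\ref{th:upper_bound} by showing that, for the MPS ansatz and global observable $\ket{\mathbf{0}}\bra{\mathbf{0}}$, the variance of every partial derivative is upper bounded by a constant multiple of the variance of $f_{\sigma, \ket{\mathbf{0}}\bra{\mathbf{0}}}$ itself. This is essentially the ``cost concentration implies barren plateau'' direction of the equivalence established in~\cite{Arrasmith2022}, adapted to our MPS setting.

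First I would express the partial derivative in a form that transparently relates it to $f_{\sigma,W}$. Splitting $U_p(\boldsymbol{\theta}_p) = U_p^{(R,q)} \, e^{-i\theta_{pq}H_{pq}} \, U_p^{(L,q)}$, when $H_{pq}$ has the standard two-eigenvalue form, the parameter shift rule gives
\begin{align*}
    \partial_{\theta_{pq}} f_{\sigma, W}(\boldsymbol{\theta})
    = \tfrac{1}{2}\bigl(f_{\sigma, W}(\boldsymbol{\theta}^+) - f_{\sigma, W}(\boldsymbol{\theta}^-)\bigr),
\end{align*}
where $\boldsymbol{\theta}^\pm$ differ from $\boldsymbol{\theta}$ only by a $\pm\pi/2$ shift in the $(p,q)$ coordinate. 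Under the barren plateau definition, one of $U_p^{(L,q)}$ or $U_p^{(R,q)}$ is Haar; by the left/right invariance of the Haar measure, the full subcircuit $U_p$ is then itself Haar distributed, and in particular forms a unitary 2-design, so Theorem~\ref{th:upper_bound} applies. Moreover, since $\theta_{pq}$ is uniform modulo $2\pi$, a shift by $\pm\pi/2$ preserves its distribution, hence $f_{\sigma, W}(\boldsymbol{\theta}^\pm)$ has the same variance as $f_{\sigma, W}(\boldsymbol{\theta})$. The elementary bound $\text{Var}(X-Y) \leq 2\text{Var}(X) + 2\text{Var}(Y)$ then yields $\text{Var}_{\boldsymbol{\theta}}(\partial_{\theta_{pq}} f_{\sigma, W}) \leq \text{Var}_{\boldsymbol{\theta}}(f_{\sigma, W})$. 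For generic $H_{pq}$ with bounded operator norm, the analogous bound (with a constant factor depending only on $\|H_{pq}\|_\infty$) follows from the commutator identity $\partial_{\theta_{pq}} f = -i\,\text{tr}(H_{pq}[\hat{\sigma}(\boldsymbol{\theta}), \hat{W}(\boldsymbol{\theta})])$ together with the same Haar invariance argument.

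Combining this variance inequality with Theorem~\ref{th:upper_bound} and the hypothesis $h_1(\sigma) \in \mathcal{O}(4^{n/p})$ with $p > 1$ gives
\begin{align*}
    \text{Var}_{\boldsymbol{\theta}}\bigl(\partial_{\theta_{pq}} f_{\sigma, \ket{\mathbf{0}}\bra{\mathbf{0}}}\bigr)
    \leq \frac{h_1(\sigma)}{4^{n-k-1}}
    \in \mathcal{O}\bigl(4^{-n(1-1/p)}\bigr),
\end{align*}
which is exponentially small in $n$ since $k$ is a constant and $1-1/p > 0$, proving the corollary. The main obstacle will be making the variance comparison $\text{Var}(\partial_{\theta_{pq}} f) \lesssim \text{Var}(f)$ fully rigorous across all choices of $(p,q)$ with constants independent of $n$: parameter shift handles the common two-eigenvalue case, but the general bound requires a careful commutator and H\"older-type argument, together with the observation that a Haar distribution on one side of $U_p$ makes the full $U_p$ Haar so that the 2-design hypothesis of Theorem~\ref{th:upper_bound} is preserved.
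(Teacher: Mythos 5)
Your proposal is correct and follows essentially the same route as the paper: both reduce $\mathrm{Var}_{\boldsymbol{\theta}}(\partial_{\theta_{pq}} f)$ to the second moment of $f_{\sigma,\ket{\mathbf{0}}\bra{\mathbf{0}}}$ via the parameter shift rule together with invariance of the parameter distribution under the $\pm\pi/2$ shift, and then invoke Theorem~\ref{th:upper_bound} (whose hypothesis is preserved because a Haar-distributed $U_p^{(L,q)}$ makes the full $U_p$ Haar). The only cosmetic difference is that you use the elementary bound $\mathrm{Var}(X-Y)\leq 2\mathrm{Var}(X)+2\mathrm{Var}(Y)$ where the paper expands the square and applies Cauchy--Schwarz to the cross term after noting $\mathbb{E}_{\boldsymbol{\theta}}(\partial_{\theta_{pq}}f)=0$; your caveat about non-involutory generators $H_{pq}$ is fair but applies equally to the paper, which states the parameter-shift lemma for arbitrary Hermitian generators.
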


    Similarly, we extend Theorem~\ref{th:lower_bound} to demonstrate that using the MPS ansatz with local observables prevents barren plateaus.

    \begin{corollary} \label{co:lower_bound}
        Let $\sigma \in \mathbb{D}_n$, $C_T^{(n)}$ be an MPS ansatz, and $O \coloneqq 1/n\sum \limits_{i=1}^n \ket{0}\bra{0}_i$. Then, $f_{\sigma, O}$ exhibitis barren plateaus if $ h_2(\sigma) \ll  1/(2 ^ {2k} + 2)$.
    \end{corollary}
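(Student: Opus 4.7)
The plan is to convert the polynomial lower bound on $\mathrm{Var}_{\boldsymbol\theta}(f_{\sigma,O})$ from Theorem~\ref{th:lower_bound} into a matching polynomial lower bound on the variance of at least one partial derivative of $f_{\sigma,O}$, thereby blocking the barren plateau condition, which requires \emph{every} $\mathrm{Var}_{\boldsymbol\theta}(\partial_{\theta_{pq}}f_{\sigma,O})$ to decay exponentially in $n$. The vehicle is the quantitative equivalence between cost concentration and barren plateaus for $2$-design-based ansatzes established in~\cite{Arrasmith2022}: under the assumption that each $U_p$ forms a unitary $2$-design and that the generators $H_{pq}$ are bounded, the function variance and the partial-derivative variances are tied together up to polynomial factors in $n$.

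First, I would substitute $h_2(\sigma)\ll 1/(2^{2k}+2)$ into the bound of Theorem~\ref{th:lower_bound}. Since $h_2(\sigma)/(2n)$ is then dwarfed by $1/(n(2^{2k+1}+4))$, the lower bound becomes
$$
\mathrm{Var}_{\boldsymbol\theta}\bigl(f_{\sigma,O}(\boldsymbol\theta)\bigr) \;=\; \Omega\!\left(\tfrac{1}{n\cdot 2^{2k}}\right),
$$
which, for fixed or mildly growing $k$, is $\Omega(1/\mathrm{poly}(n))$ and in particular is not $\mathcal{O}(1/b^n)$ for any $b>1$. Hence $f_{\sigma,O}$ cannot exhibit cost concentration.

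Second, I would apply the equivalence from~\cite{Arrasmith2022} contrapositively: if every $\mathrm{Var}_{\boldsymbol\theta}(\partial_{\theta_{pq}}f_{\sigma,O})$ were in $\mathcal{O}(1/b^n)$, the same exponential decay would force $\mathrm{Var}_{\boldsymbol\theta}(f_{\sigma,O})$ into $\mathcal{O}(1/b^n)$, contradicting Step~1. Hence at least one $\mathrm{Var}_{\boldsymbol\theta}(\partial_{\theta_{pq}}f_{\sigma,O})$ must be $\Omega(1/\mathrm{poly}(n))$, negating the universal-quantifier condition in the definition of a barren plateau---in line with the paragraph immediately preceding the corollary, which asserts that the MPS ansatz with local observables prevents barren plateaus.

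The main obstacle is that the equivalence of~\cite{Arrasmith2022} is stated most cleanly for ansatzes whose $2$-design structure is concentrated around the parameter $\theta_{pq}$, whereas in the cascaded MPS structure only one subcircuit $U_p$ sits adjacent to $\theta_{pq}$ while the rest of the cascade contributes Haar randomness more indirectly. If black-box invocation fails, the fallback is to repeat the proof strategy of Theorem~\ref{th:lower_bound} at the level of the gradient: split $C_T^{(n)}$ around $\theta_{pq}$, write the partial derivative as a commutator with $H_{pq}$, and apply Lemma~\ref{le:2_design} to each $U_i$ in the cascading fashion to obtain the second moment of $\partial_{\theta_{pq}}f_{\sigma,O}$ directly, yielding a bound of the same flavour.
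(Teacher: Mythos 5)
Your proposal is correct and follows essentially the same route as the paper: the paper's proof simply quotes the lemma from~\cite{Arrasmith2022} stating that exponentially vanishing variances of all partial derivatives force an exponentially vanishing cost-function variance, and then invokes its contrapositive against the $\Omega(1/(n\cdot 2^{2k}))$ lower bound supplied by Theorem~\ref{th:lower_bound} under the hypothesis $h_2(\sigma)\ll 1/(2^{2k}+2)$. Your additional worry about whether the equivalence applies to the cascaded MPS structure (and the proposed fallback) is unnecessary here, since the paper treats the lemma as applying to any ansatz of the stated form.
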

    
\section{Towards Classical Simulation Through Effective Subspaces} \label{sec:classical_simulation}

    In this section, we discuss the possibility of designing an efficient classical algorithm capable of simulating state approximation VQAs involving MPS ansatzes and local observables, using very few copies of the input quantum state. 
    
    The idea builds on the conjecture from~\cite{Cerezo2024} which says that any objective function avoiding cost concentration exhibits effective subspaces, a property useful for designing classical simulation algorithms with minimal quantum resources. Our simulations demonstrate that objective functions involving MPS ansatz and local observables, which we previously proved to avoid cost concentration, indeed exhibit effective subspaces within the Pauli basis, further supporting this conjecture.

    Note that in this work, we do not present an explicit algorithm for the aforementioned classical simulation, but rather present evidence that such a protocol could exist. First, we introduce effective subspaces as outlined in~\cite{Cerezo2024}.
    
    \subsection{Effective Subspace}
    
    Let $C(\boldsymbol{\theta})$ be an $n$-qubit ansatz and let $W \in \mathbb{H}_n, \sigma \in \mathbb{D}_n$. Effective subspaces are loosely defined as follows:

    \begin{definition} \cite{Cerezo2024}
        For any orthonormal basis $ \mathbb{K} = \{ K_1, K_2, \dots K_{4^n}\}$ of $\mathbb{C}^{2^n \times 2^n}$, and for any $\boldsymbol{\theta}$, define a distribution $\mathcal{P}_{\boldsymbol{\theta}, W, \mathbb{K}}$ over $\mathbb{K}$ as 
        \begin{align}
        \mathcal{P}_{\boldsymbol{\theta}, W,\mathbb{K}}(K_j) = \frac{ f_{K_j,W}(\boldsymbol{\theta}) ^ 2}{\| W\|_2^2}.
        \end{align}
        An ansatz and observable combination exhibits an effective subspace if there exists a basis $\mathbb{K}$ such that for almost all $\boldsymbol{\theta}$, $\mathcal{P}_{\boldsymbol{\theta}, W,\mathbb{K}}(K_j)$ is large only for $K_j$ in a subset $ \mathbb{K}_s \subset \mathbb{K}$, that is independent of $\boldsymbol{\theta}$ and has $| \mathbb{K}_s| \in \mathcal{O} (\text{poly(n)})$.
    \end{definition}  
    Note that the elements of $\mathbb{K}$ are not restricted to quantum states. Also, in appropriate contexts, we denote $ \mathcal{P}_{\boldsymbol{\theta}, W,\mathbb{K}} (K_j)$ as $\mathcal{P}_{\boldsymbol{\theta}, W}(K_j)$ as the dependency on $\mathbb{K}$ is implicitly understood from the context.
    
        In~\cite{Cerezo2024}, it is conjectured, with evidence, that all ansatz-observable combinations that have been shown to provably avoid barren plateaus exhibit an effective subspace, at least for some subset of input states. Popular examples involving shallow ($\mathcal{O}(\log n)$-depth) ansatzes include HEA-local observable and the QCNN-local observable combinations. In both these cases, the basis $\mathbb{K}$ can be $\mathbb{P}_n$. The presence of effective subspaces means that if we estimate $ \text{tr}(K\sigma)\ \forall \ K \in \mathbb{K}_s$ as a preprocessing step, and if we can classically compute $ \text{tr}(EW_{C(\boldsymbol{\theta})^{\dag}}) \ \forall \ K \in \mathbb{K}_s \ \text{and} \ \forall \  \boldsymbol{\theta}$ efficiently, then in many cases, $ f_{\sigma, W}(\boldsymbol{\theta})$ can be classically estimated with good precision, because   
        \begin{align*}
            f_{\sigma,W}(\boldsymbol{\theta}) = \text{tr}\left(W\sigma_{C(\boldsymbol{\theta})} \right) &= \text{tr}\left(W_{C(\boldsymbol{\theta})^{\dag}}\sigma \right) \\
            &= \sum \limits_{K \in \mathbb{K}} \text{tr}\left(KW_{C(\boldsymbol{\theta})^{\dag}}\right) \text{tr}(K \sigma),
        \end{align*}
        and if most $\text{tr}\left(KW_{C(\boldsymbol{\theta})^{\dag}}\right)$ is large only for those $K \in \mathbb{K}_s$, then
        \begin{align}
            f_{\sigma,W}(\boldsymbol{\theta}) \approx \sum \limits_{K \in \mathbb{K}_s} \text{tr}\left(KW_{C(\boldsymbol{\theta})^{\dag}}\right) \text{tr}(K \sigma).
        \end{align}
        This is the underlying principle behind designing classical simulations using effective subspaces.  

        When it comes to the classical simulation of $ f_{\sigma, O, C_T}(\boldsymbol{\theta}) = 1/n\sum_i f_{\sigma, \ket{0}\bra{0}_i, C_T}(\boldsymbol{\theta})$, it is sufficient to be able to classically estimate each $ f_{\sigma, \ket{0}\bra{0}_i}(\boldsymbol{\theta})$ efficiently. From Figure~\ref{fig:ansatz} (a), it is easy to see that for any product state $\sigma$, among these $n$ terms, the hardest to estimate are $f_{\sigma, \ket{0}\bra{0}_i}(\boldsymbol{\theta})$ for $i \in \{n-k+1, \dots, n\}$, because for all other $i$, at least one subcircuit within $C_T(\boldsymbol{\theta})$ will be canceled leaving an expression of the same form on fewer qubits. When $q$ subcircuits are canceled, at least $4^{n-q}(4^q-1)$ outcomes of $\mathcal{P}_{\boldsymbol{\theta}, \ket{0}\bra{0}_i,\mathbb{P}_n}$ will be zero for any $\boldsymbol{\theta}$, making the distribution very concentrated. Using Lemma~\ref{le:1_design_multi} in the Appendix, we find that for any $i,j \in \{n-k+1, \dots, n\}$, $\mathbb{E}_{\boldsymbol{\theta}} ]( f_{\sigma, \ket{0}\bra{0}_i}(\boldsymbol{\theta}) ) = \mathbb{E}_{\boldsymbol{\theta}} ( f_{\sigma, \ket{0}\bra{0}_j}(\boldsymbol{\theta}))$. Therefore, we focus on $f_{\sigma, \ket{0}\bra{0}_n}(\boldsymbol{\theta})$ and aim to show that the $C_T$-$\ket{0}\bra{0}_n$ combination also exhibits an effective subspace with $\mathbb{K} = \mathbb{P}_n$.

\begin{figure}[tbh]
    \centering
    \begin{tabular}{c}
        \includegraphics[width=0.9\columnwidth]{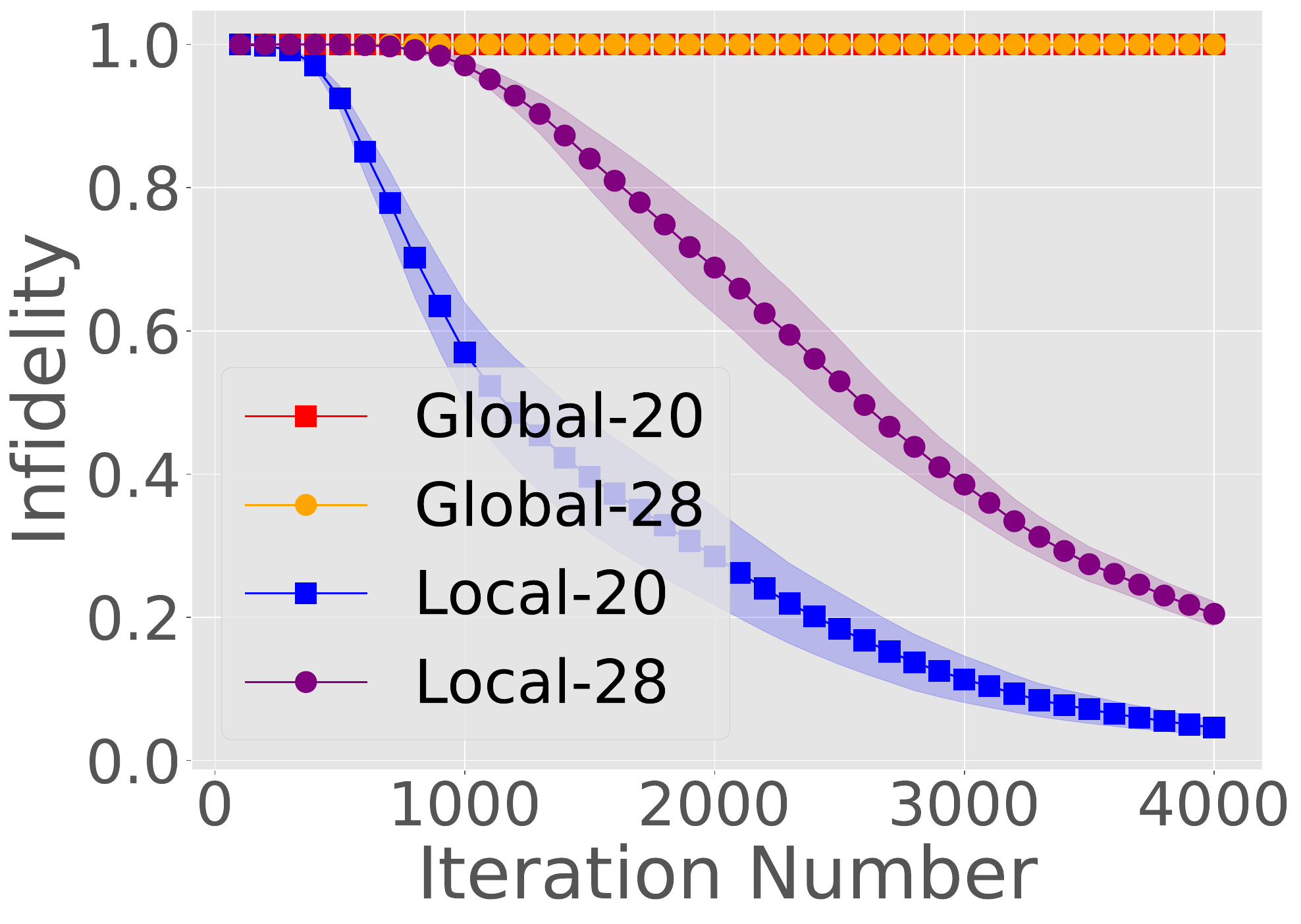} \\
        (a) \\
         \includegraphics[width=0.9\columnwidth]{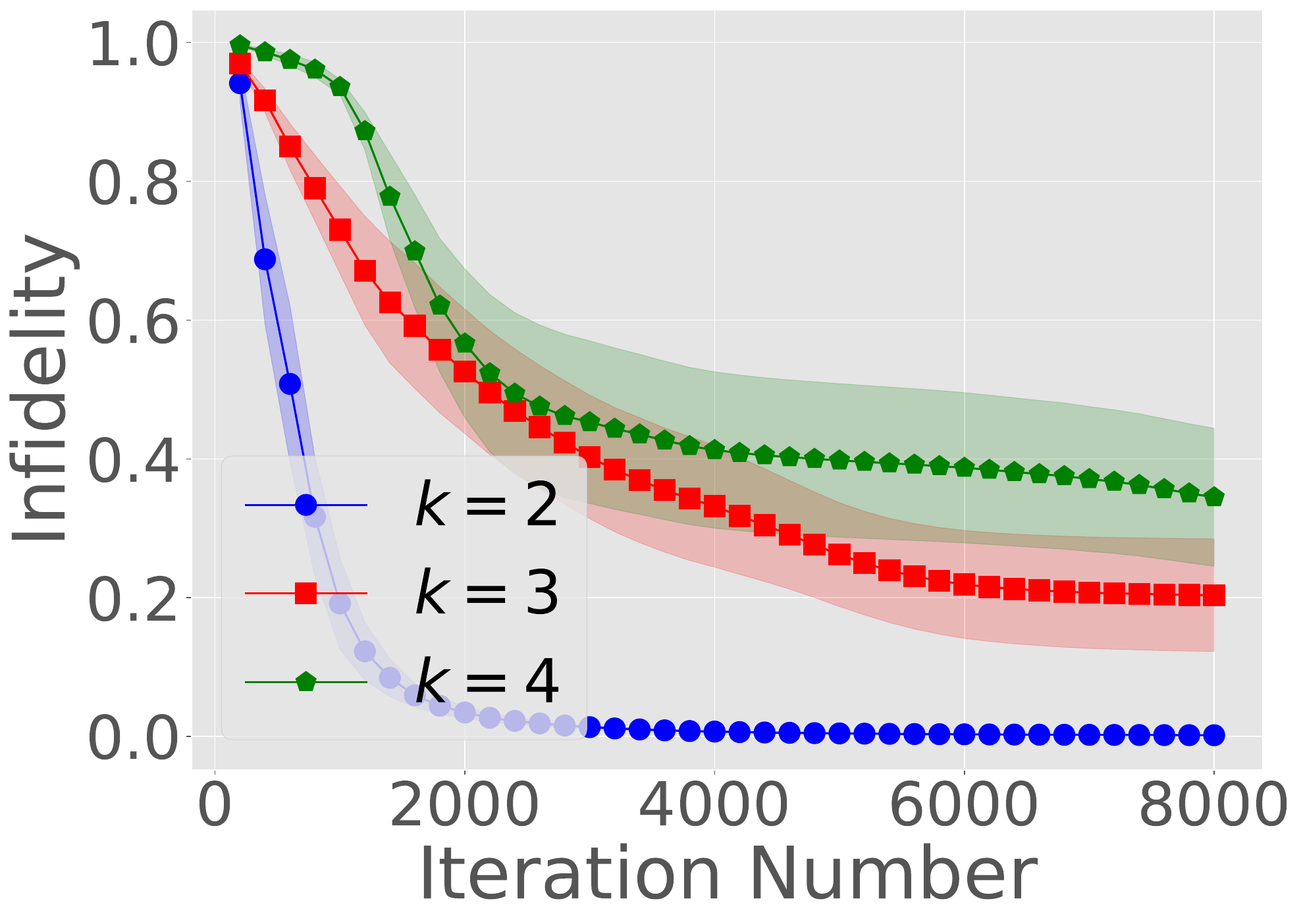}
         \\
         (b)
    \end{tabular}
    \caption{Simulation results of state approximation using MPS ansatz. In (a), we plot the learning curves of state approximation using the MPS ansatz with subcircuit width $2$ for the target state $\ket{\mathbf{0}} \bra{\mathbf{0}}$, optimized by SPSA with $n=20$ and $n=28$. The results demonstrate that global observables significantly hinder the learning process. In (b) state approximation results for the same target state using the MPS ansatz with local observables are plotted for $n=12$ with varying subcircuit widths $k$. The plots indicate that increasing the subcircuit width progressively impairs learning efficiency.} \label{fig:state_approximation}
    \end{figure} 

        \subsection{C-$\mathbb{K}$ Norm} \label{subsec:ce_norm}
           Now, we introduce a norm that can be used to measure how concentrated the distributions $ \mathcal{P}_{\boldsymbol{\theta}, W}$ would be, for typical values of $\boldsymbol{\theta}$. Given any discrete distribution $\mathcal{P}$, $\| \mathcal{P}\|_2$ can be used to measure how concentrated the distribution is. A higher $\| \mathcal{P}\|_2$ indicates that the distribution is concentrated among a few outcomes with high probability. Hence, we can use the $2$-norm of the distributions $\mathcal{P}_{\boldsymbol{\theta}, W}$ to assess how concentrated these distributions are. So, we define the $\mathbb{K}$-norm (in $\mathbb{H}_n$) as this $2$-norm, that is
        \begin{align}
            \| W\|_{\mathbb{K}} \coloneqq \frac{1}{\| W\|_2}\left[\sum \limits_{K \in \mathbb{K}} {\text{tr}(KW)} ^ 4 \right]^{1/4}
        \end{align}
        Note that for any $\boldsymbol{\theta}$, $\| W_{{C(\boldsymbol{\theta})}^{\dag}}\|_{\mathbb{K}}$ is simply the $2$-norm of the distribution $\mathcal{P}_{\boldsymbol{\theta}, W}$. We first prove the following result regarding the cost of computing $\left\Vert W_{{C(\boldsymbol{\theta})}^{\dag}} \right\Vert_{\mathbb{K}}$ for any $\boldsymbol{\theta}$.
        \begin{theorem} \label{th:classical_cost}
            For any $n$-qubit quantum circuit $V$, where the qubits are arranged in a line, let $R_V = \max_i R_{V,i}$, where $R_{V,i}$ is the number of 2-qubit gates being applied on any qubits $j, k$ such that $j \leq i \leq k$. Then, for any product observable $W$, $ \left\Vert W_{V^{\dag}}\right\Vert_{\mathbb{K}}$ can be classically evaluated with cost $\mathcal{O}\left(2^{R_V}\right)$.
        \end{theorem}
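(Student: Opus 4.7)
The plan is to reduce $\|W_{V^\dag}\|_{\mathbb{K}}$ to the contraction of a 1D tensor network whose bond dimension scales as $2^{R_V}$. Expanding the definition with $\mathbb{K}=\mathbb{P}_n$ and using $\|W_{V^\dag}\|_2=\|W\|_2$ (unitarity of $V$) together with $\|W\|_2^2=\prod_i\text{tr}(W_i^2)$ for product $W=W_1\otimes\cdots\otimes W_n$, the task reduces to computing
\begin{equation*}
S \coloneqq \sum_{P\in\mathbb{P}_n}\text{tr}(P V^\dag W V)^4,
\end{equation*}
after which one takes a fourth root and divides by $\|W\|_2$.

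The first key move is to factor the Pauli sum across sites. Since $\sum_{P\in\mathbb{P}_n}P^{\otimes 4}=\bigotimes_{i=1}^{n}\Sigma_i$ with $\Sigma_i=\sum_{\alpha\in\{\mathds{1},X,Y,Z\}}\alpha_i^{\otimes 4}$ acting on the four copies of qubit $i$, cyclicity of trace gives
\begin{equation*}
S=\text{tr}\!\Bigl[V^{\otimes 4}\,W^{\otimes 4}\,(V^\dag)^{\otimes 4}\,\bigotimes_{i=1}^{n}\Sigma_i\Bigr].
\end{equation*}
A short direct computation yields the factorization $\Sigma_i=(\mathds{1}^{\otimes 4}+X_i^{\otimes 4})(\mathds{1}^{\otimes 4}+Z_i^{\otimes 4})$, so $\Sigma_i/4$ is the stabilizer-code projector onto the common $(+1)$-eigenspace of $X^{\otimes 4}$ and $Z^{\otimes 4}$---a $4$-dimensional subspace of the $16$-dimensional single-site $4$-copy space. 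Combined with the product form of $W^{\otimes 4}$, this exhibits $S$ as a tensor network that respects the 1D qubit layout: four parallel copies of $V$ and $V^\dag$, coupled site-wise by $W_i^{\otimes 4}$ and by the stabilizer projectors $\Sigma_i$.

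The second key move is to contract this network by sweeping site-by-site. At the cut separating $\{1,\ldots,i\}$ from $\{i+1,\ldots,n\}$, the only contributions not yet locally absorbed come from the $R_{V,i}$ two-qubit gates whose endpoints $(j,k)$ satisfy $j\leq i\leq k$. Naively, each such crossing gate would contribute a large multiplicative factor to the virtual bond dimension (up to $4$ per copy, and there are four copies of both $V$ and $V^\dag$); but the stabilizer projectors $\Sigma_i$ at already-processed sites collapse the four copies onto the $4$-dimensional code subspace, which compresses this to a factor of $2$ per crossing gate. Hence the running tensor has dimension at most $2^{R_{V,i}}\leq 2^{R_V}$, and the full sweep evaluates $\|W_{V^\dag}\|_{\mathbb{K}}$ in time $\mathcal{O}(2^{R_V})$ (absorbing the $\mathcal{O}(n)$ sweep factor as polynomial overhead).

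The main obstacle is making the compression in Step~3 rigorous, i.e.\ going from the naive $4^{R_V}$ bound---what one gets from four independent MPO simulations of $V$ against the product observable---down to the claimed $2^{R_V}$. The cleanest route is to change basis on each site to one adapted to $\Sigma_i$'s code subspace, parameterizing its image by two logical qubits, and then to verify via a Choi-Jamiolkowski-style identification that each crossing 2-qubit gate together with its three conjugate copies acts as a single-qubit-like map on the logical degrees of freedom, yielding only a factor of $2$ per crossing. Executing this reduction precisely---and correctly threading the product observable $W^{\otimes 4}$ through the compressed representation---is where the proof will need the most care.
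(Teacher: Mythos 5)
Your route is genuinely different from the paper's. The paper's proof is essentially a two-citation argument: it invokes \cite{Jozsa2006} to build an MPO representation of $W_{V^{\dag}}$ with bond dimension exponential in $R_V$ (each crossing gate only multiplies the operator--Schmidt rank across a cut by a constant), converts to the Pauli basis by local rotations to obtain an MPS of coefficients $\widehat{W}_{V^{\dag}}$, and then evaluates $\sum_K \mathrm{tr}(KW_{V^{\dag}})^4$ as $\Vert \widehat{W}_{V^{\dag}} * \widehat{W}_{V^{\dag}}\Vert_2^2$ using the MPS Hadamard- and inner-product routines of \cite{Oseledets2011}, all of which cost polynomially in the bond dimension. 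You instead lift to four replicas, use $\sum_P P^{\otimes 4}=\bigotimes_i \Sigma_i$ with $\Sigma_i=(\mathds1^{\otimes 4}+X^{\otimes 4})(\mathds1^{\otimes 4}+Z^{\otimes 4})$ (which is correct, since $X^{\otimes 4}Z^{\otimes 4}=Y^{\otimes 4}$), and contract one large 1D network. Both strategies reduce the problem to a one-dimensional contraction whose only cut-crossing content comes from the $R_{V,i}$ gates straddling each cut, and both therefore deliver a cost of the form $2^{\mathcal{O}(R_V)}\cdot\mathrm{poly}(n)$, independent of $n$ in the exponent --- which is the level of precision at which the paper's own proof operates (note that even there, the Hadamard product squares the bond dimension, so the literal constant in $\mathcal{O}(2^{R_V})$ is not really tracked). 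Your replica construction is self-contained where the paper leans on references, which is a genuine plus.

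The one real weakness is exactly the step you flag as the ``main obstacle'': the claimed compression from the naive per-crossing-gate factor down to $2$ via the stabilizer projectors. As written this is not justified and I do not believe it holds in the form stated: the projectors $\Sigma_i/4$ sit only on the physical legs, outside all the gates, so during a left-to-right sweep they do not act between the crossing gates and cannot obviously collapse the intermediate virtual space onto logical degrees of freedom; moreover a generic two-qubit gate does not act transversally with respect to the $[[4,2]]$ code defined by $X^{\otimes 4},Z^{\otimes 4}$, so the ``single-qubit-like map on the logical qubits'' picture needs an argument that is currently missing and may simply be false. Fortunately, this step is also unnecessary: dropping it costs you only a constant factor in the exponent, leaving a bound of the form $2^{cR_V}$ for a universal constant $c$, which is all the paper's proof establishes as well. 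I would recommend either deleting the compression claim and stating the result as $2^{\mathcal{O}(R_V)}$, or supplying a concrete computation of the operator--Schmidt rank of the four-replica crossing-gate superoperator restricted to the relevant subspace before asserting the base-$2$ scaling.
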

        
        Although the MPS ansatz is defined using $k$-qubit parameterized gates, Theorem~\ref{th:classical_cost} concerns only $2$-qubit gates because, in practice, we always decompose each $k$-qubit unitary into smaller $2$-qubit parameterized gates. As an example, in all our simulations, all $k$-qubit unitaries are HEAs (cf. Figure~\ref{fig:ansatz} (b)).

        From Figure~\ref{fig:ansatz} (a), we can see that $R_V$ is independent of $n$. Typically, it scales as $\mathcal{O}(\text{poly}(k))$ meaning that the cost of evaluating $ \left\Vert W_{{C_T(\boldsymbol{\theta})}^{\dag}}\right\Vert_{\mathbb{K}}$ will be $\mathcal{O}\left(2 ^ {\text{poly}(k)}\right)$. 

        Now, as mentioned earlier, we would like to analyze $ \left\Vert {\ket{0}\bra{0}_n}_{{C_T(\boldsymbol{\theta})}^{\dag}}\right\Vert_{\mathbb{K}}$ for typical values of $ \boldsymbol{\theta}$. Hence, we introduce the $C$-$\mathbb{K}$ norm in the following theorem. 

        \begin{theorem}
            For any parameterized circuit $C$, and an orthonormal basis $\mathbb{K}$ of $\mathbb{C}^{2^n \times 2^n}$, define 
            \begin{align} \label{eq:C_norm}
                \| W\|_{C,\mathbb{K}} \coloneqq \int \limits_{\boldsymbol{\theta}} \left\Vert W_{{C(\boldsymbol{\theta})}^{\dag}}\right\Vert_{\mathbb{K}} \text{d} \boldsymbol{\theta},
            \end{align}
            for any $W \in \mathbb{H}_n$. Then, $\| \cdot \|_{C,\mathbb{K}}$ is a norm on $\mathbb{H}_n$.
        \end{theorem}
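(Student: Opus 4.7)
The plan is to verify the three defining axioms of a norm on the real vector space $\mathbb{H}_n$---absolute homogeneity, the triangle inequality, and positive definiteness---by first checking them pointwise in $\boldsymbol{\theta}$ for the integrand $\|W_{C(\boldsymbol{\theta})^\dag}\|_{\mathbb{K}}$ and then pushing each axiom through the integral over the compact parameter space.

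First I would establish that $\|\cdot\|_{\mathbb{K}}$ behaves like a norm on $\mathbb{H}_n$. Since $\mathbb{K}$ is an orthonormal basis of $\mathbb{C}^{2^n\times 2^n}$ under the Hilbert-Schmidt inner product, every $W \in \mathbb{H}_n$ expands as $W = \sum_{K \in \mathbb{K}} \text{tr}(KW)\, K$, and the quantity $\bigl[\sum_K \text{tr}(KW)^4\bigr]^{1/4}$ is precisely the $\ell^4$ norm of the coordinate vector. Absolute homogeneity follows from $\text{tr}(K(\alpha W)) = \alpha\,\text{tr}(KW)$, the triangle inequality is Minkowski's inequality for $\ell^4$ applied to coordinates (using linearity $\text{tr}(K(W+V)) = \text{tr}(KW) + \text{tr}(KV)$), and definiteness uses that an orthonormal basis spans the whole matrix space, so $\text{tr}(KW) = 0$ for every $K \in \mathbb{K}$ forces $W = 0$.

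Next I would observe that for each fixed $\boldsymbol{\theta}$ the map $W \mapsto W_{C(\boldsymbol{\theta})^\dag} = C(\boldsymbol{\theta})^\dag W C(\boldsymbol{\theta})$ is a real-linear automorphism of $\mathbb{H}_n$ that preserves the Hilbert-Schmidt norm by unitary invariance. Composing with $\|\cdot\|_{\mathbb{K}}$ therefore gives a pointwise-in-$\boldsymbol{\theta}$ norm, and integrating a family of pointwise norms against a finite measure preserves each axiom: non-negativity is inherited directly, absolute homogeneity follows by pulling $|\alpha|$ outside the integral, and the triangle inequality follows by integrating the pointwise bound $\|(W+V)_{C(\boldsymbol{\theta})^\dag}\|_{\mathbb{K}} \leq \|W_{C(\boldsymbol{\theta})^\dag}\|_{\mathbb{K}} + \|V_{C(\boldsymbol{\theta})^\dag}\|_{\mathbb{K}}$.

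The main obstacle is the definiteness direction: deducing from $\|W\|_{C,\mathbb{K}} = 0$ that $W = 0$. The integral of a non-negative function vanishing only forces pointwise vanishing almost everywhere, not everywhere. To close this gap I would invoke continuity of the integrand: because $C(\boldsymbol{\theta})$ is a finite product of matrix exponentials of bounded Hermitian generators, the map $\boldsymbol{\theta} \mapsto \|W_{C(\boldsymbol{\theta})^\dag}\|_{\mathbb{K}}$ is continuous, so almost-everywhere vanishing upgrades to everywhere vanishing. Evaluating at any single $\boldsymbol{\theta}_0$ and using definiteness of $\|\cdot\|_{\mathbb{K}}$ gives $W_{C(\boldsymbol{\theta}_0)^\dag} = 0$, and invertibility of conjugation by the unitary $C(\boldsymbol{\theta}_0)$ then yields $W = 0$. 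A small subtlety I would flag is the point $W = 0$ itself, where the normalization in $\|\cdot\|_{\mathbb{K}}$ is formally undefined; this is handled by adopting the natural convention $\|0\|_{C,\mathbb{K}} := 0$, consistent with taking the limit along any ray into the origin.
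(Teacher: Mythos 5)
Your overall strategy coincides with the paper's (verify the axioms pointwise in $\boldsymbol{\theta}$, then integrate), and on positive definiteness you are in fact more careful than the paper, which silently passes from ``the integral of a nonnegative integrand is zero'' to ``the integrand vanishes at every $\boldsymbol{\theta}$''; your continuity argument and the convention at $W=0$ close real loose ends. However, there is a genuine gap in your homogeneity and triangle-inequality steps, coming from the normalization in the paper's definition $\| W\|_{\mathbb{K}} = \|W\|_2^{-1}\bigl[\sum_{K}\text{tr}(KW)^4\bigr]^{1/4}$: this is the $\ell^4$ norm of the coordinate vector \emph{divided by its $\ell^2$ norm}, and your argument analyzes only the numerator. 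With the denominator included, $\|\alpha W\|_{\mathbb{K}} = \|W\|_{\mathbb{K}}$ for every $\alpha \neq 0$, so $\|\cdot\|_{C,\mathbb{K}}$ is scale-invariant and absolute homogeneity fails outright ($\|2W\|_{C,\mathbb{K}} = \|W\|_{C,\mathbb{K}} \neq 2\|W\|_{C,\mathbb{K}}$ for $W \neq 0$, since the integrand is bounded below by $4^{-n/4}$). Minkowski's inequality likewise does not survive the division: in dimension $4^n = 256$ the coordinate vectors $x=(1,1,\dots,1)$ and $y=(0,-1,\dots,-1)$ satisfy $\|x+y\|_4/\|x+y\|_2 = 1$ while $\|x\|_4/\|x\|_2 + \|y\|_4/\|y\|_2 \approx 0.50$, so the pointwise triangle inequality that you (and the paper) integrate is simply false for this functional. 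To be fair, the paper's own proof has exactly the same hole---it invokes the pointwise triangle inequality for $\|\cdot\|_{\mathbb{K}}$ without justification and never addresses homogeneity---so your write-up faithfully reproduces the intended argument rather than repairing it; the statement becomes true, and your proof becomes complete as written, only if the $1/\|W\|_2$ factor is dropped from the definition of $\|\cdot\|_{\mathbb{K}}$.
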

\begin{figure*}[tbh]
    \centering
    \begin{tabular}{ccc}
        MPS & HEA & QCNN \\
        \includegraphics[width=0.65\columnwidth]{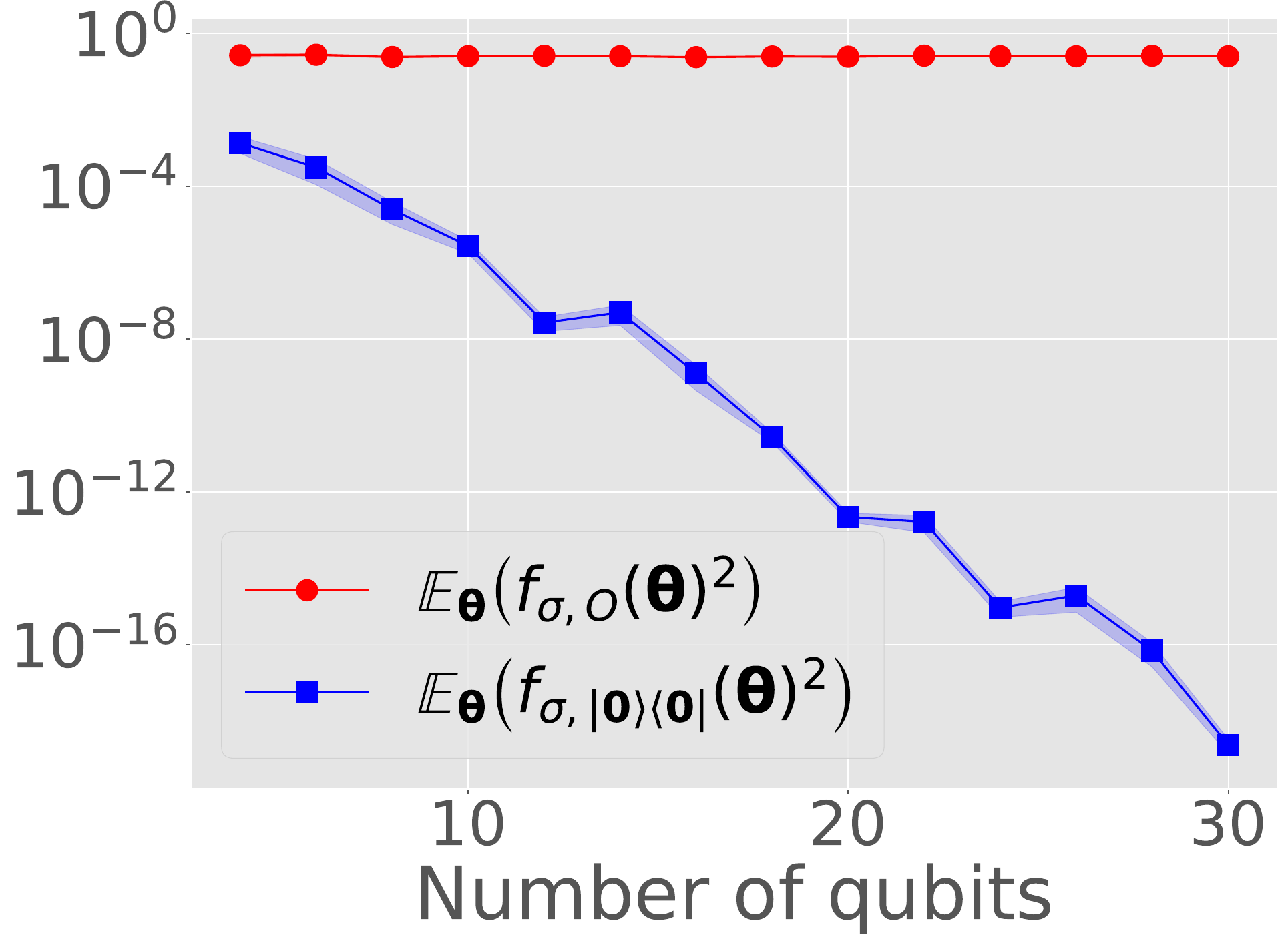} 
         &
        \includegraphics[width=0.65\columnwidth]{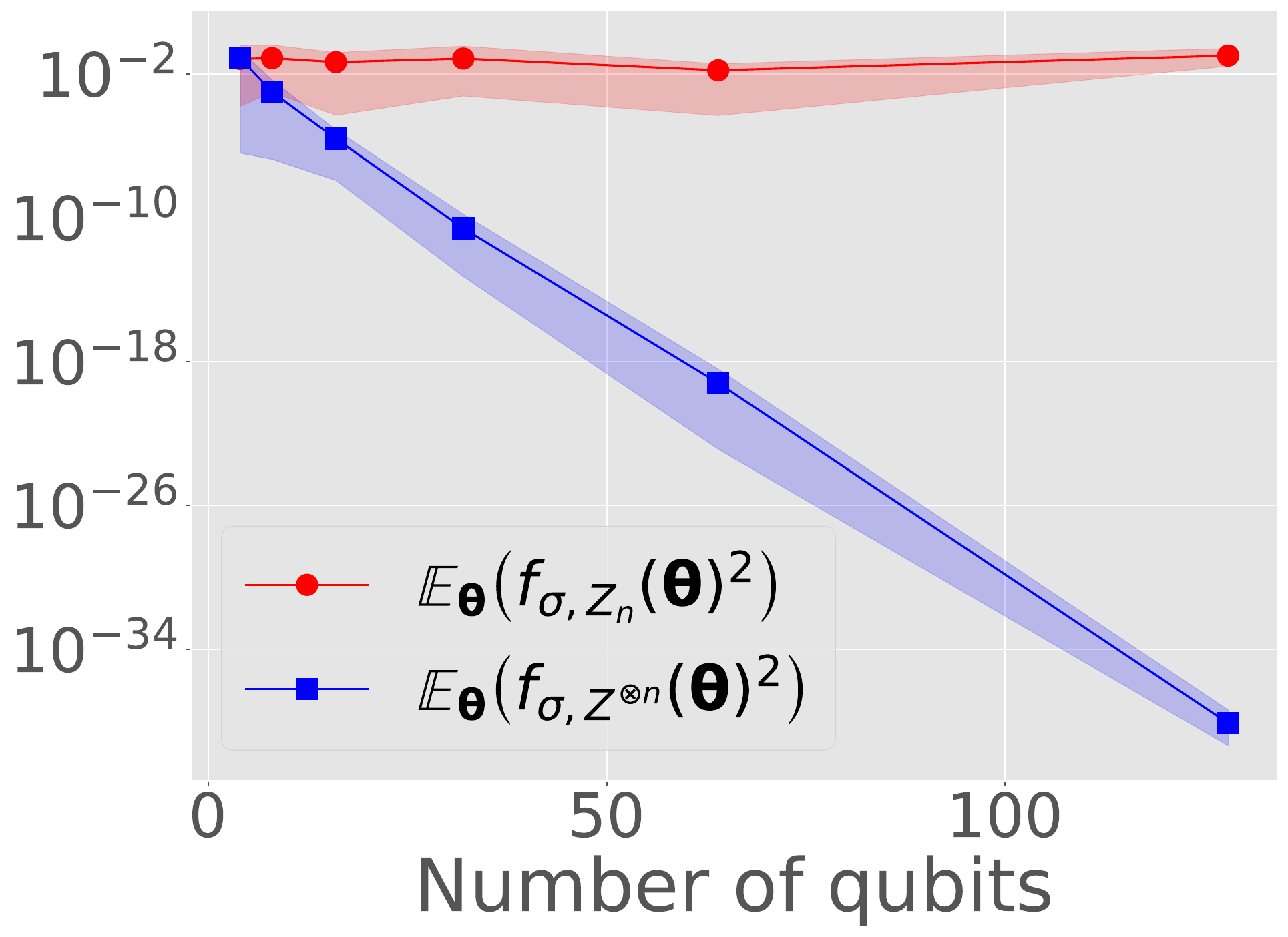}
        &
        \includegraphics[width=0.65\columnwidth]{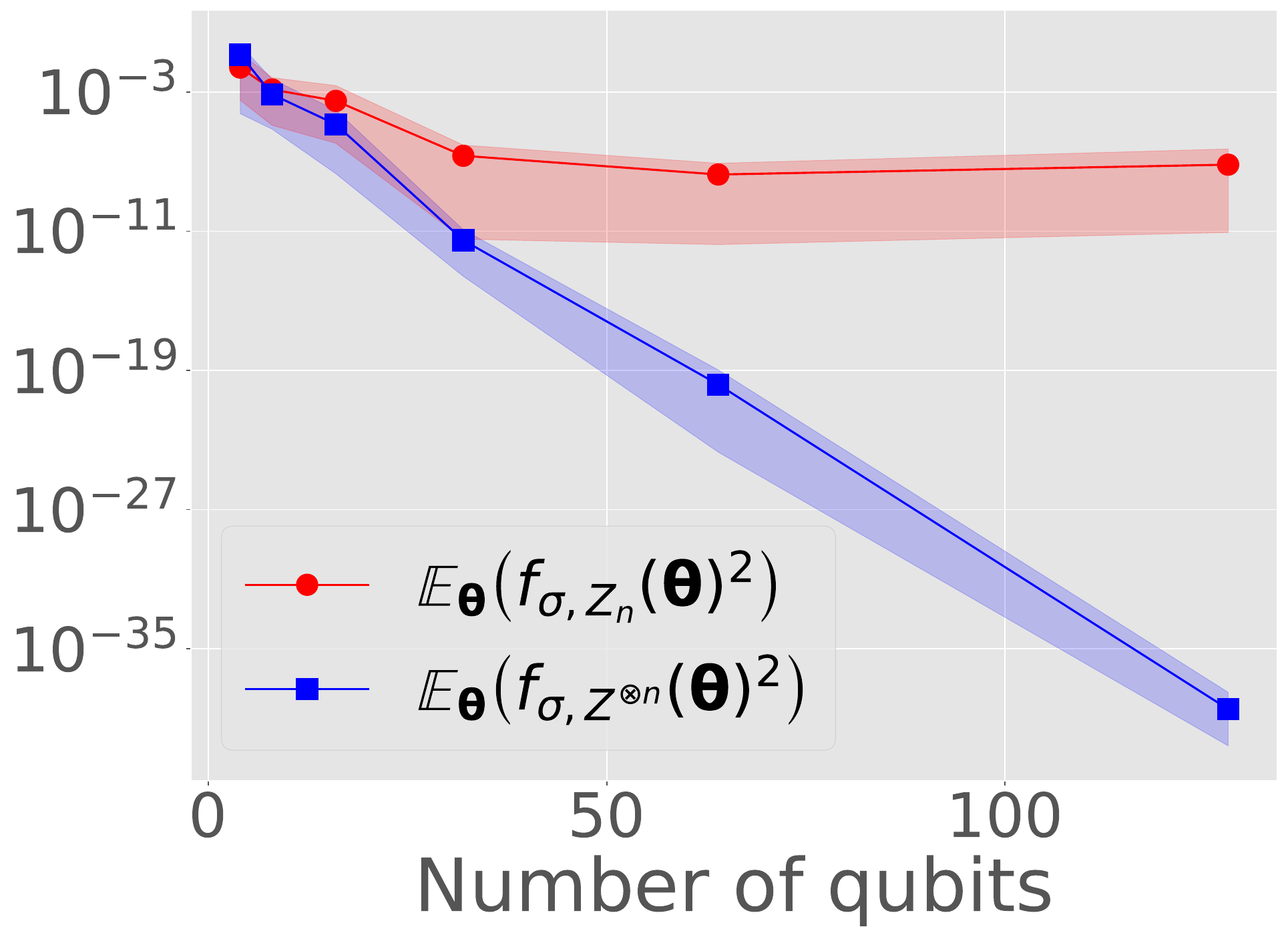} \\
        (a) & (b) & (c) \\
        \includegraphics[width=0.65\columnwidth]{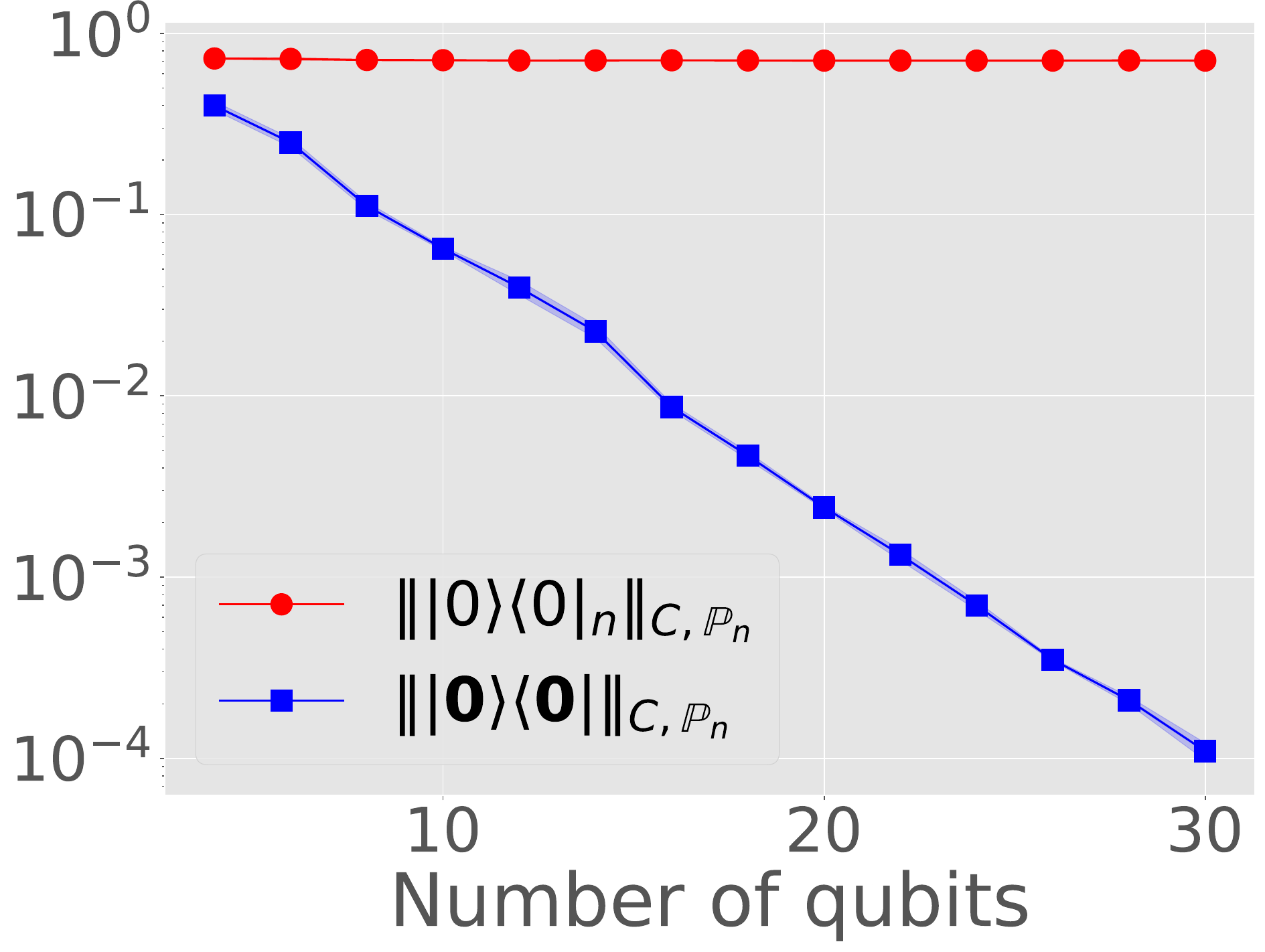} 
         &
         \includegraphics[width=0.65\columnwidth]{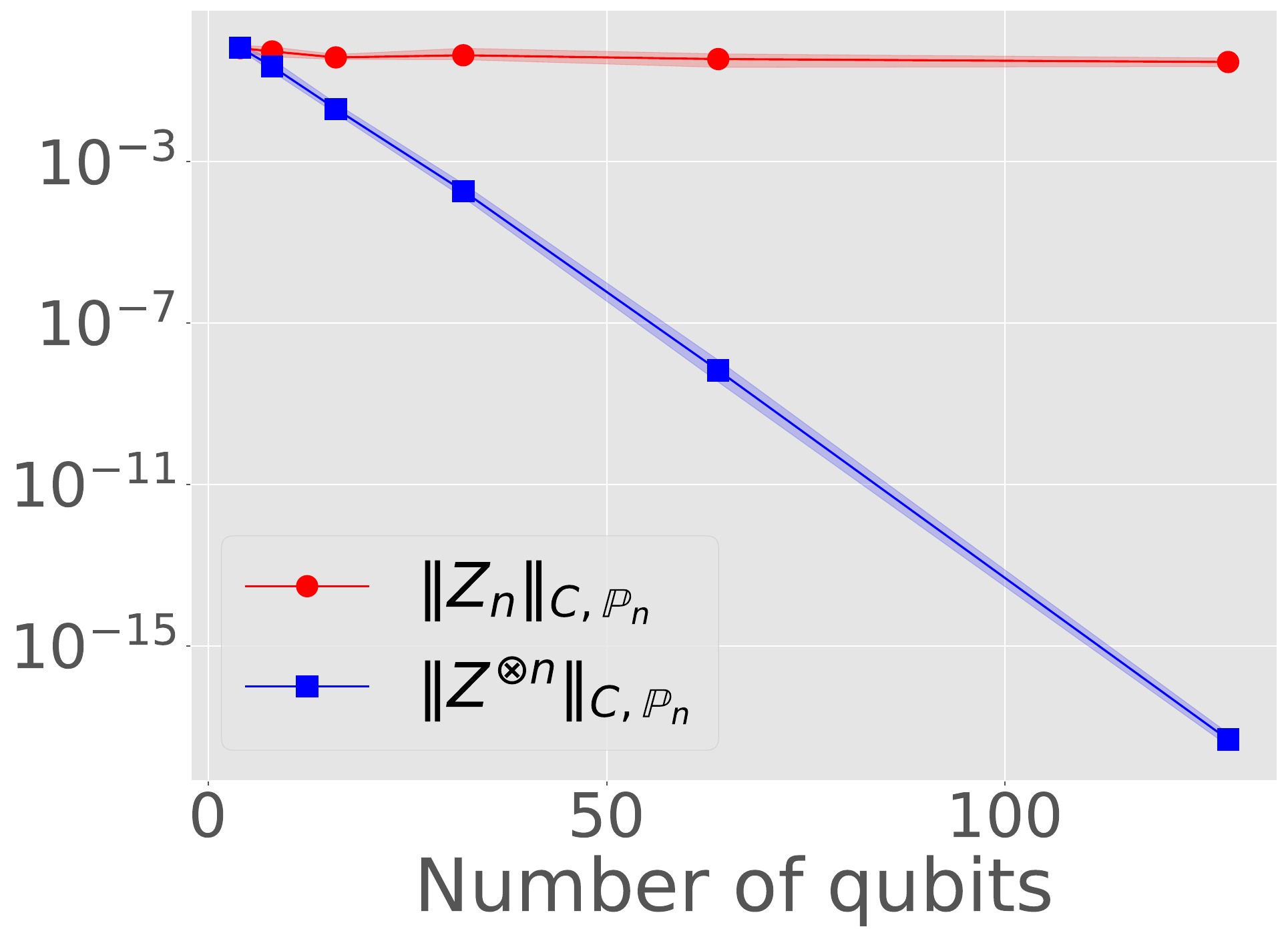}
         &
         \includegraphics[width=0.65\columnwidth]{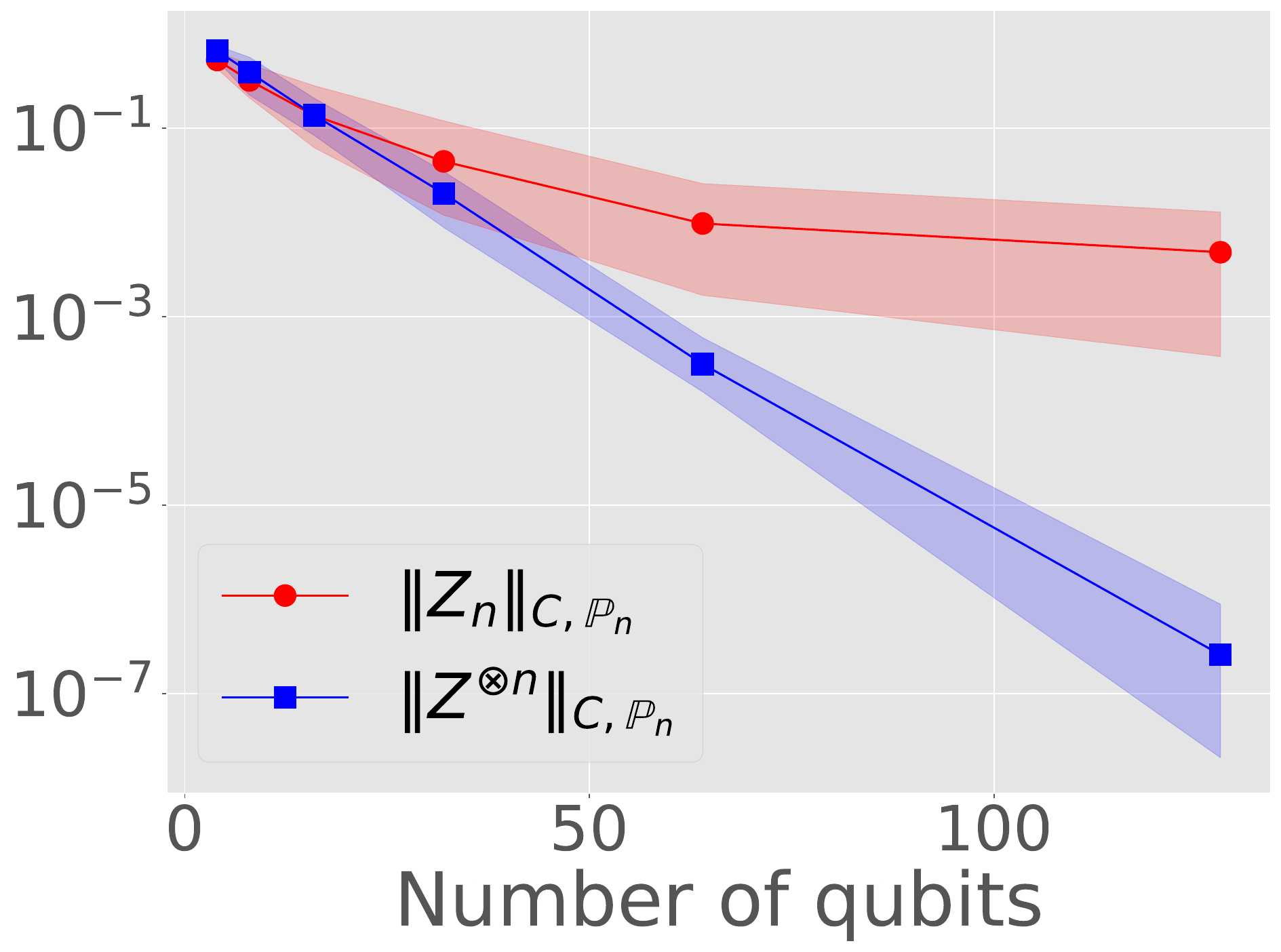}
         \\
         (d) & (e) & (f)
    \end{tabular}
    \caption{Simulation results of $C$-$\mathbb{K}$ norms and second moments. In all the plots, the x-axis represents the number of qubits. Plots (a-c) show the second moments of the cost functions while plots (d-f) show the $C$-$\mathbb{P}_n$ norms. In (a), the ansatz used is the MPS ansatz with subcircuits being HEAs with depth $\lfloor \log n \rfloor$. Here, we plot the second moments of $f_{\sigma, \ket{\mathbf{0}}\bra{\mathbf{0}}}$ and $ f_{\sigma, O}$, estimated using $10$ different $\boldsymbol{\theta}$ and averaged over five different input states randomly generated from HEAs of depth $\lfloor \log n \rfloor$. We can see that global observables induce cost concentration while local observables avoid it. In (b) and (c), we plot similar second moments for the shallow HEA and QCNN ansatzes respectively, which are following similar trends as well. In (e) and (f), we plot estimated $C$-$\mathbb{P}_n$ norms for these 2 ansatzes. From plots (b), (c), (e), and (f), we see that larger $C$-$\mathbb{P}_n$ norms are associated with trainable ansatz-observable combinations known to exhibit effective subspaces. Hence, in (d), we plot the $C$-$\mathbb{P}_n$ norms of the MPS ansatz, with subcircuit width $\lfloor \log n \rfloor $, showing a trend similar to the other ansatzes. } \label{fig:ce_norm}
    \end{figure*}

        This norm can be numerically estimated by sampling various parameter vectors $\boldsymbol{\theta}$ and taking the average of their $\| W_{C(\boldsymbol{\theta})^{\dag}}\|_{\mathbb{K}}$. Intuitively, if $\| W\|_{C,\mathbb{K}}$ remains constant or reduces only polynomially with respect to $n$, then we can expect the $C$-$W$ combination to exhibit an effect subspace since the distribution $ \mathcal{P}_{\boldsymbol{\theta}, W}$ is defined over $4^n$ outcomes. Conversely, if $ \| W\|_{C,\mathbb{K}}$ reduces exponentially with respect to $n$, then the $C$-$W$ combination need not exhibit one.

        We first numerically test this hypothesis on some instances where the presence and absence of effective subspaces are known. To do this, we choose two ansatzes; shallow HEA and QCNN, in combination with local and global observables.  
 
        It is known that effective subspaces exist when both these ansatzes are used in combination with local observables. The results (presented in Figure~\ref{fig:ce_norm}) strongly support the hypothesis and hence we carry out the same experiments for $C_T$. We discuss these simulation results in detail in the following section.

        Finally, the effective subspace for  $C_T$-$\ket{0}\bra{0}_n$ can be roughly identified by considering the cancellation of subcircuits. Typically, the probability $ \mathcal{P}_{C_T,\ket{0}\bra{0}_n,\mathbb{P}_n}(P_j)$ increases when more subcircuits are canceled within its expression, as this forces some qubits to have no circuits being acted on them and hence contribute the maximum that any qubit can to the expectation. This is also true for shallow HEA and QCNN ansatzes when used with local observables, where higher probabilities are associated with $1$-local Paulis, regardless of the position of its non-local component. For Paulis with a higher locality, one can always find an upper bound on the total number of non-canceled subcircuits that is independent of $n$ and dependent only on the locality. However, for the $C_T$-$\ket{0}\bra{0}_n$ combination, the position of the non-local part of the Pauli is crucial. The closer it is to the last qubit, the more subcircuits are canceled, resulting in higher probabilities. Similarly, if the non-local component is on the first qubit, unlike the other ansatzes, even a $1$-local observable can have no subcircuits getting canceled in the expression of the probability. Thus, the concentration of probabilities should be towards Paulis where non-local components occur near the last qubit. This hypothesis is also validated using experiments discussed in the next section. 
\begin{figure*}[tbh] 
    \centering
    \begin{tabular}{cc}
        \includegraphics[width=1\columnwidth]{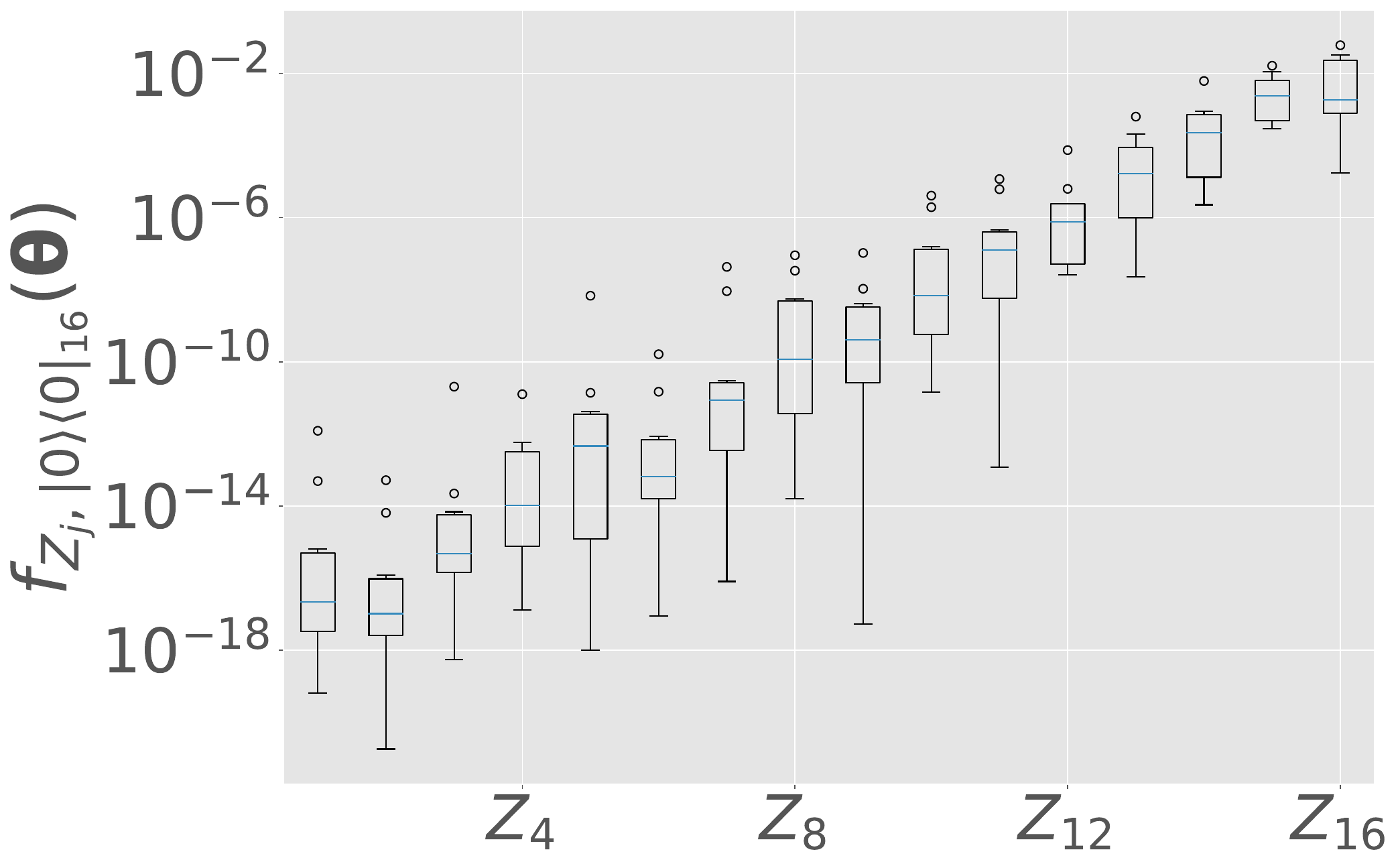} 
         &
        \includegraphics[width=1\columnwidth]{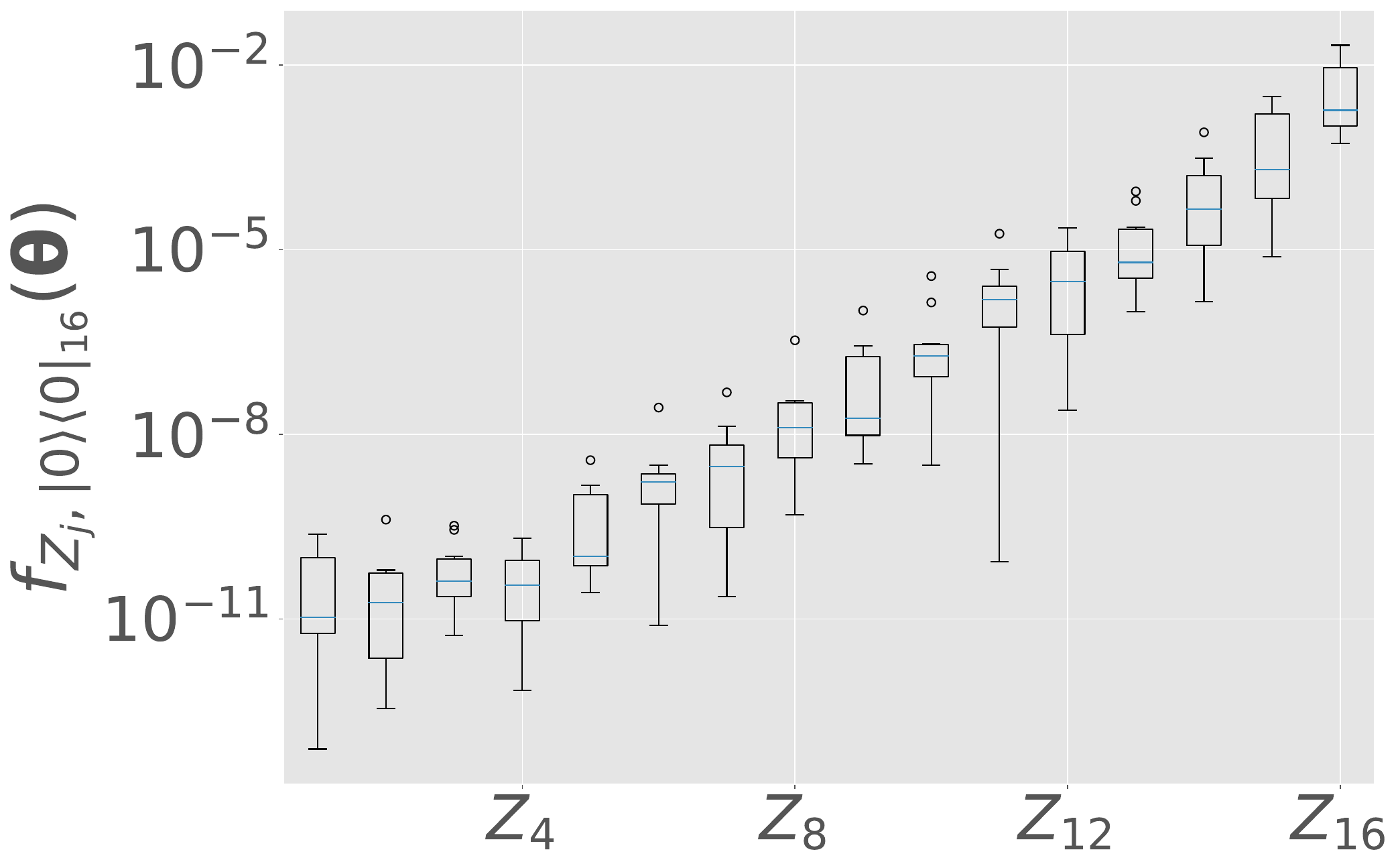} \\
        (a) $k=2$ & (b) $k=4$\\ 
        \includegraphics[width=1\columnwidth]{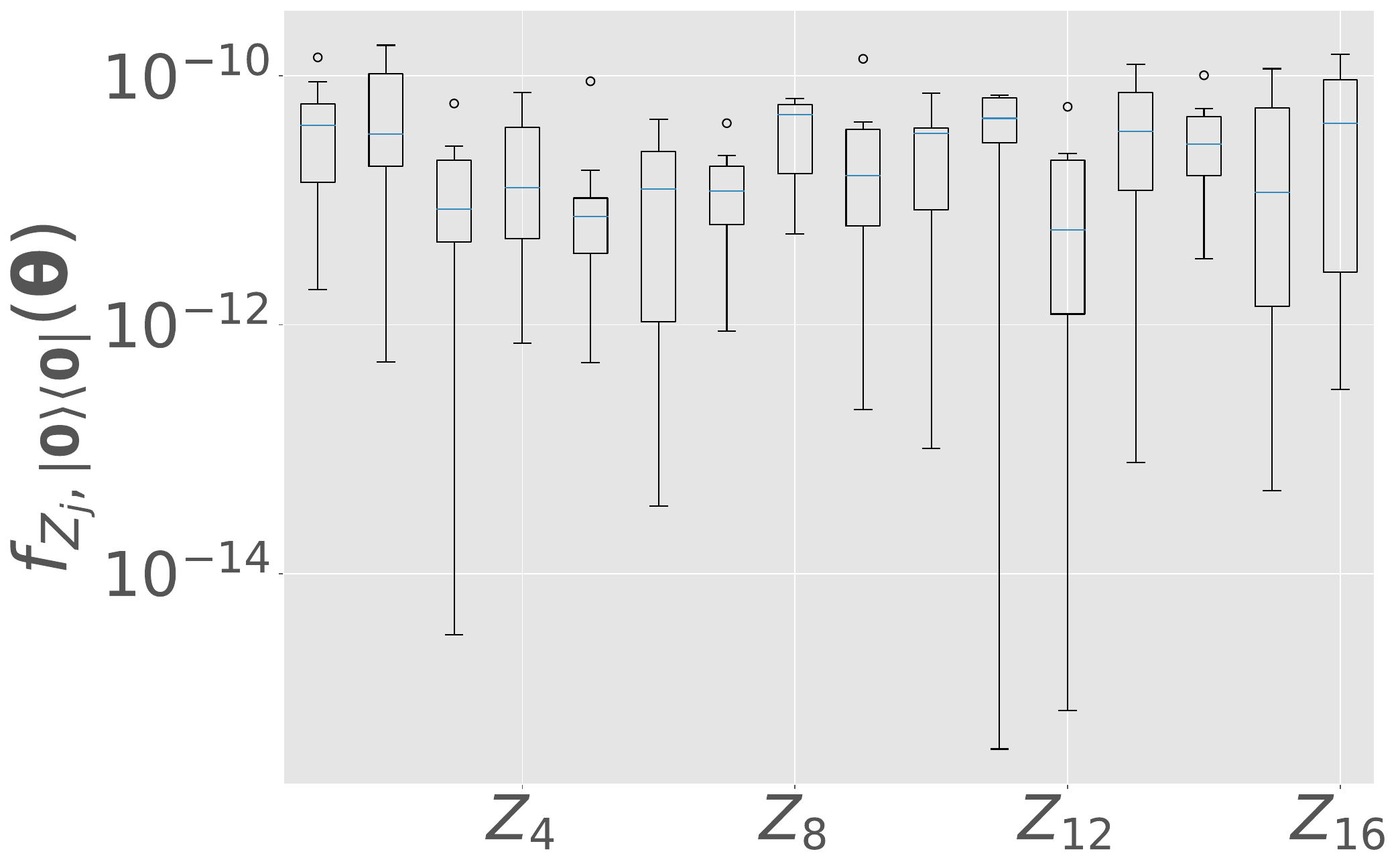} 
         &
         \includegraphics[width=1\columnwidth]{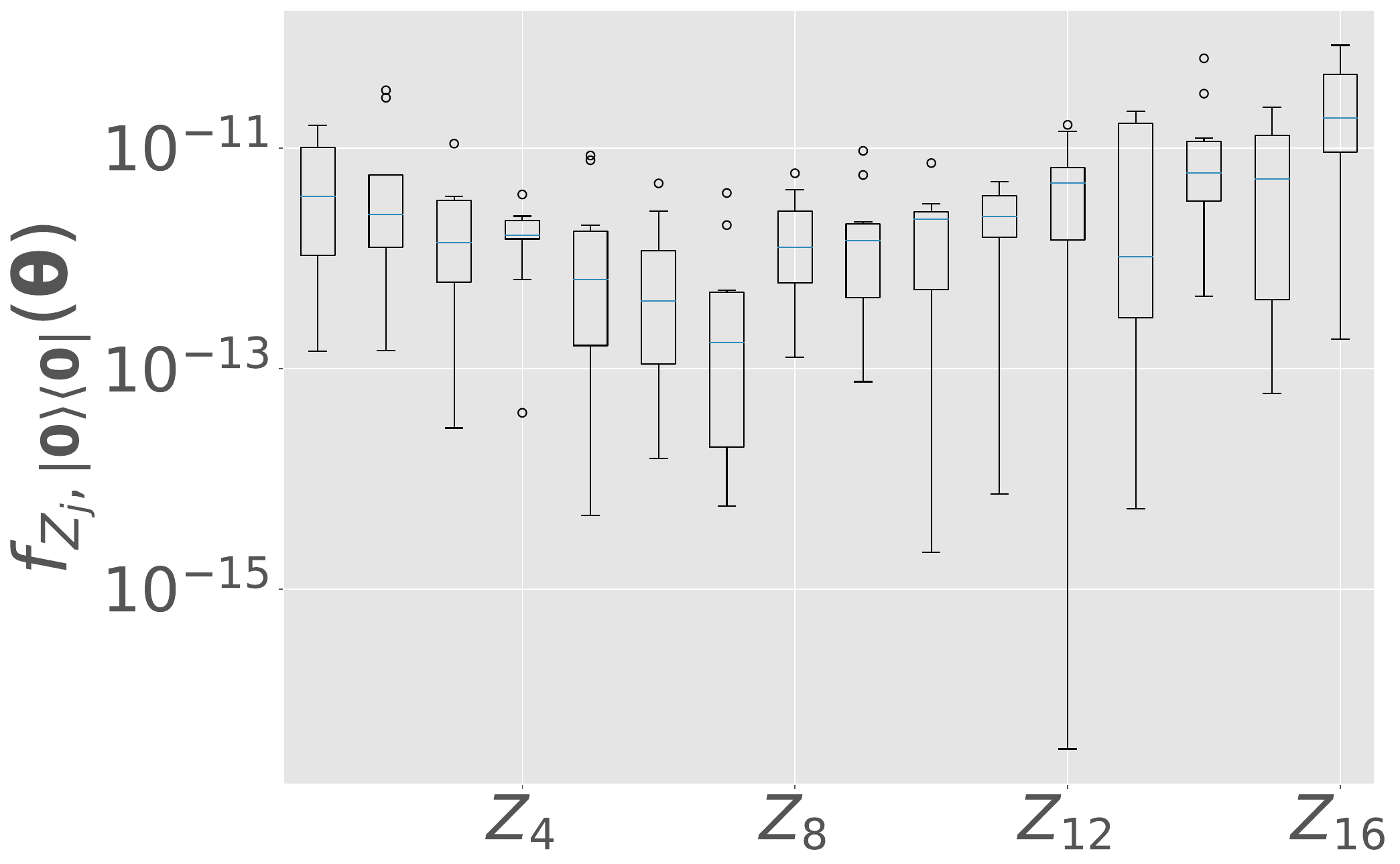} \\
        (c) $k=2$ & (d) $k=4$ 
    \end{tabular}
    \caption{ Boxplots of distributions $ \mathcal{P}_{\boldsymbol{\theta}, \ket{0}\bra{0}_{16}, \mathbb{P}_{16}}$ in (a) and (b), and distributions  $\mathcal{P}_{\boldsymbol{\theta}, \ket{\mathbf{0}} \bra{\mathbf{0}}, \mathbb{P}_{16}}$ in (c) and (d), with subcircuit widths $k=2,4$. For every $Z_i$ on the x-axis, we plot a boxplot of the probabilities computed across 10 different values of $\boldsymbol{\theta}$. The subcircuit used in each plot is an HEA with width and depth $k = 2,4$. In (a) and (b), we can see that higher probabilities are associated with $Z_i$, with $i$ close to $16$, suggesting the presence of an effective subspace consisting of these terms. Also, the distribution gets flatter as we increase $k$. In (c) and (d), we see that the distribution remains flat for all values of $k$, suggesting the absence of any effective subspace.} \label{fig:distribution}
    \end{figure*} 

\section{Simulation Results} \label{sec:simulation_results}

In this section, we discuss and present the numerical simulations that we have conducted as part of this work. The main aims of the simulations are threefold: visualize the impact of Theorems~\ref{th:upper_bound} and~\ref{th:lower_bound} using learning curves, argue that similar results could also hold for most states not necessarily satisfying the criteria mentioned in these theorems and demonstrate the presence (absence) of effective subspaces when MPS ansatzes are used with local (global) observables. The structure of all two-qubit subcircuits used in Figure~\ref{fig:ce_norm} is given in Figure~\ref{fig:ansatz} (d).

We start with the learning curves presented in Figure~\ref{fig:state_approximation}. Here, we have carried out state approximation using the MPS ansatz with subcircuit width $2$, and target state $\ket{\mathbf{0}} \bra{\mathbf{0}}$. The optimization algorithm used here is SPSA~\cite{Spall1992}, where the converging sequences are $a_j=c_j=0.4$ and all parameters are initialized uniformly from $[0,\pi/2]$. The x-axis and y-axis represent the iteration number and corresponding \textit{infidelity}, defined as $1-F$, respectively. In (a), we have plotted results for $n=20,28$, with $k=2$. We can see global observables hindering the optimization. In (b), we plot the results of similar experiments carried out for $n=12$, but with $k=2,3$ and $4$. The subcircuits are HEA with depth $k$. From this, we can see that increasing $k$ negatively impacts the optimization.

Now, we move on to Figure~\ref{fig:ce_norm} (a). Here, the x and y axes represent the number of qubits and the estimated second moments of the objective functions $f_{\sigma, \ket{\mathbf{0}}\bra{\mathbf{0}}}(\boldsymbol{\theta})$ and $ f_{\sigma, O}(\boldsymbol{\theta})$ averaged over $5$ input states randomly generated using HEA ansatz of depth $ \lfloor \log n \rfloor$, with the single qubit gates being Haar random. The subcircuit used here is HEA with width and depth $\lfloor \log n \rfloor$. We can see global observables inducing cost concentration, and local observables avoiding it, even though the input states do not necessarily satisfy the conditions required as per Theorems~\ref{th:upper_bound} and~\ref{th:lower_bound}.

Next, we move on to Figures~\ref{fig:ce_norm} (b-f). The idea here is to show that the $C$-$\mathbb{K}$ norm can be used to detect the presence of effective subspace. From~\cite{Cerezo2024}, we know that shallow HEA and QCNN ansatzes exhibit effective subspaces when used in combination with local observables. This can be seen from the plots (b), (c), (e), and (f). In (b) and (c), we have plotted the estimated second moments of the objective functions $f_{\sigma, Z^{\otimes n}}(\boldsymbol{\theta})$ and $f_{\sigma, Z_n}(\boldsymbol{\theta}) $, averaged over $5$ states generated in the same manner as in the previous experiment, for the shallow HEA and QCNN ansatzes respectively. In (e) and (f), we have plotted estimated $C$-$\mathbb{P}_n$ norms of these combinations. From these four plots, we can see that the $C$-$\mathbb{P}_n$ norms are behaving as we expected. So, in Figure~\ref{fig:ce_norm} (d), we plot the $C_T$-$ \mathbb{P}_n$ norms, with subcircuits having the same structure as in (a). The observable is chosen to be $\ket{0}\bra{0}_n$ since as mentioned earlier when it comes to classical simulation, it suffices to estimate the $C_T$-$\mathbb{P}_n$ norms of $\ket{0}\bra{0}_n$. We can see that when we use local observables, we get exponentially high $C_T$-$ \mathbb{P}_n$ norms, thus suggesting the presence of effective subspaces.

As noted at the end of Section C-$\mathbb{K}$ Norm, this effective subspace is the one that is spanned by Paulis whose non-identity components are near the last qubit. This is experimentally verified using $16$-qubit simulations whose results are shown in Figure~\ref{fig:distribution}. In (a) and (b) we plot a portion of the distribution $ \mathcal{P}_{\boldsymbol{\theta},\ket{0}\bra{0}_{16},\mathbb{P}_{16}}$  with subcircuits being HEA built using $2$ qubit Haar random gates, with depths and widths of $k=2,4$. Although there are $4^{16}$ possible outcomes, we focus on $16$, specifically the $1$-local Paulis $\{ Z_i \ | \ i = 1, \dots, n\}$, shown on the x-axis. In these figures, boxplots of probabilities $\mathcal{P}_{\boldsymbol{\theta},\ket{0}\bra{0}_{16}}(Z_i)$, computed across $10$ different $\boldsymbol{\theta}$ values are plotted. We can see that as the $Z$ component in the observables on the x-axis is closer to the last qubit, the probability is exponentially higher. In (d) and (e), similar experiments are carried out for the distribution 
$ \mathcal{P}_{\boldsymbol{\theta},\ket{\mathbf{0}}\bra{\mathbf{0}},\mathbb{P}_{16}}$, but we notice no such concentration of probabilities, indicating the absence of effective subspaces.
\section{Conclusion and Future Direction} \label{sec:conclusion}
    In this work, we have introduced new results regarding trainability and classical simulability of learning MPS approximations of quantum states variationally. We have proven that the usage of global observables forces the variance of the objective function and all its partial derivatives to be exponentially small in the number of qubits, while the usage of local observables avoids this. Moreover, we have demonstrated that using this ansatz with local observables reveals effective subspaces within the Pauli basis, paving the way for a potential classical simulation of the MPS ansatz.

    For future directions, we aim to generalize and enhance our results by theoretically proving similar trainability results when multiple layers of $C_T$ are used, extending the current proofs to all quantum states, and theoretically analyzing the effective subspaces. Additionally, we plan to develop an efficient classical simulation with rigorous performance guarantees.

\section{Code Availability}
    Code to replicate the results presented in this work can be found \href{https://github.com/afradnyf/MPS_Ansatz/tree/main}{here}.
    
\section{Acknowledgement}
    This work was partially supported by the Australian Research Council (Grant No: DP220102059). AB was partially supported by the Sydney Quantum Academy PhD scholarship. HP acknowledges the Centre for Quantum Software and Information at the University of Technology Sydney for hosting him as a visiting scholar and the support of the University of Sydney Nano Institute (Sydney Nano).
\bibliographystyle{plain}
\bibliography{references}

\section{Appendix}
We start with some notation used throughout the proofs.
\begin{itemize}
    \item For a set of Haar random unitaries $ \mathbf{U}_T = \{ U_1, \dots, U_T\}$, and an integrable function $\eta$, we define 
    \begin{align*}
        &\int \limits_{\mathbf{U}_T} \eta(\mathbf{U}_T) \text{d}\mathbf{U}_T \\ \coloneqq &\int \limits_{U_1} \dots \int \limits_{U_{T}} \eta(U_1, \dots, U_T) \text{d}U_1 \dots \text{d}U_T. \numberthis
    \end{align*}
    \item For any string $i = i_1 i_2 ,\dots i_t$, $i_{t_1:t_2} = i_{t_1} i_{t_1+1} \dots i_{t_2}$. 
    \item For any $l \in \mathbb{N}$ such that $1 \leq l \leq T-1$ and $W \in \mathbb{C}^{2^{n'} \times 2^{n'}}$ with $1 \leq n' \leq n$, define $\mu_{l}^{(W)}:\ \mathbb{C}^{2^{n'} \times 2^{n'}} \times \mathbb{C}^{2^{n'} \times 2^{n'}} \to \mathbb{C}$, where
    \begin{align}
        \mu_{l}^{(W)}(A,B) \coloneqq \int \limits_{\mathbf{U}_{l} }\text{tr}\left(W A_{C_{l}} \right) \text{tr}\left(WB_{C_{l}} \right)\text{d}\mathbf{U}_l.
    \end{align}
    \item $\mathbb{P}_n \coloneqq \{ P_{i_1 i_2\dots i_n} \coloneqq P_{i_1} \otimes \dots \otimes P_{i_n} \ | \ i_1, i_2, \dots, i_n \in \{ 0,1,2,3\}\}$, where we define $ P_0 \coloneqq \mathds1, P_1 \coloneqq X, P_2 \coloneqq Z, P_3 \coloneqq Y$ and
\end{itemize}

    Next, we introduce some definitions and lemmas that will be useful throughout our proofs.

\begin{definition} \label{def:t_design}
    A set of unitaries $\{ U(\boldsymbol{\theta}) \ | \ \boldsymbol{\theta} \in \mathbb{R}^m\}$ form a unitary $t$-design if for any polynomial (defined over matrices) $\eta(U)$ with degree at most $t$ in the matrix elements of $U$ and at most $t$ in their complex conjugates, we have
    \begin{align}
        \int \limits_{\boldsymbol{\theta}} \eta(U(\boldsymbol{\theta})) \text{d}\boldsymbol{\theta} = \int \limits_{U} \eta(U) \text{d}U,
    \end{align}
    where $\text{d}\boldsymbol{\theta} $ is the uniform distribution and $ \text{d}U$ is the Haar measure.
\end{definition}

    \begin{lemma} \label{le:eta}
        For any unitary $V$, we have 
        \begin{align}
            \int \limits_U \eta(UV) \text{d}U = \int \limits_U \eta(VU) \text{d}U = \int \limits_U \eta(U) \text{d}U,
        \end{align}
        for any integrable functional $\eta$ where $\text{d}U$ is the Haar measure. 
    \end{lemma}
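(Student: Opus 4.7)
The plan is to derive both equalities directly from the bi-invariance of the Haar measure, which is one of its defining characteristics: the normalized Haar measure $\text{d}U$ is the unique Borel probability measure on the compact unitary group that is invariant under both left and right translation by any fixed group element. Since the lemma's identities are precisely the statements that integrals against $\text{d}U$ are unchanged under the substitutions $U \mapsto UV$ and $U \mapsto VU$, the proof amounts to invoking these two invariances.

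Concretely, for the first equality I would perform the change of variables $U' = UV$. Because $V$ is unitary, the map $U \mapsto UV$ is a bijection of the unitary group onto itself with inverse $U' \mapsto U'V^{-1}$, and by right-invariance of the Haar measure we have $\text{d}U' = \text{d}U$. Substituting in the integral then yields
\begin{align*}
    \int_U \eta(UV)\,\text{d}U = \int_{U'} \eta(U')\,\text{d}U' = \int_U \eta(U)\,\text{d}U.
\end{align*}
The second equality follows by the symmetric substitution $U' = VU$, invoking left-invariance of the Haar measure in place of right-invariance.

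If one prefers a self-contained justification of the invariance itself rather than citing it, the standard route is to fix $V$, define the pushforward measure $\mu_V(A) \coloneqq \mu(AV^{-1})$ on Borel subsets $A$ of the unitary group, and verify directly that $\mu_V$ is again a left-invariant Borel probability measure; the uniqueness clause of the Haar theorem for compact groups then forces $\mu_V = \mu$, which is precisely the right-invariance statement. The only technical requirement on $\eta$ is Borel measurability together with integrability against $\text{d}U$, both of which are subsumed by the lemma's hypothesis, so no substantive obstacle arises and the proof essentially reduces to a textbook invocation of Haar-measure theory on compact Lie groups.
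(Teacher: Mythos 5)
Your proof is correct: the paper states this lemma without proof, treating it as the standard bi-invariance property of the Haar measure on the compact unitary group, which is exactly what you invoke (via the change of variables $U' = UV$ or $U' = VU$ together with uniqueness of the left-invariant probability measure). There is nothing to add or compare; your argument is the textbook justification the paper implicitly relies on.
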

    \begin{lemma} \label{le:equivalence}
        Let $k \leq n' \leq n$. For any $A,B,C \in \mathbb{C} ^ {2^{n'} \times 2^{n'}}$, we have 
        \begin{align*}
            \mu_{n'-k+1}^{(A)}(B, C)
            = \mu_{n'-k+1}^{(A)} \big( &B_{V_1 \otimes \dots \otimes V_{n'-k+1}}, \\
            &C_{V_1 \otimes \dots \otimes V_{n'-k+1}}\big) \numberthis,
        \end{align*}
    for any $k$ qubit unitary $V_1$ and $2$ qubit unitaries $V_2 \dots V_{t-k+1}$.
    \end{lemma}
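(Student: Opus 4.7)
The plan is to apply Haar right-invariance (Lemma~\ref{le:eta}) iteratively on the integration variables $U_1, \dots, U_{n'-k+1}$, absorbing the unitary $V := V_1 \otimes V_2 \otimes \dots \otimes V_{n'-k+1}$ one factor at a time into the layers of the MPS ansatz $C_{n'-k+1}$.

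First, I would rewrite $(B_V)_{C_{n'-k+1}} = (C_{n'-k+1} V)\, B\, (C_{n'-k+1} V)^{\dagger}$, and analogously for $C_V$. The lemma then reduces to proving that the integrand of $\mu_{n'-k+1}^{(A)}$ is unchanged when $C_{n'-k+1}$ is replaced by $C_{n'-k+1}\,V$ inside the integral. Writing $C_{n'-k+1} = W_1 W_2 \cdots W_{n'-k+1}$ with $W_p = \mathds1^{\otimes n'-k-p+1} \otimes U_p \otimes \mathds1^{\otimes p-1}$, each layer $U_p$ acts on qubits $p, \dots, p+k-1$, so the layers form an overlapping staircase across the $n'$ qubits. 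The factor $V_1$ is identified with the support of $U_{n'-k+1}$ (the layer immediately to the left of $V$ in $C_{n'-k+1}V$), and each $V_p$ for $p \geq 2$ is identified with qubit(s) lying in the support of $U_{n'-k-p+2}$ but disjoint from the supports of $U_{n'-k-p+3}, \dots, U_{n'-k+1}$.

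Second, I would peel $V$ off inductively. At the $p$-th step, the dilation of $V_p$ (tensored with $\mathds1$ on all remaining qubits) commutes past every layer $W_q$ with $q > n'-k-p+2$ by the disjointness of supports established above. This brings $V_p$ adjacent to $W_{n'-k-p+2}$, where its support lies inside $U_{n'-k-p+2}$'s qubits. Lemma~\ref{le:eta}, applied to the integration variable $U_{n'-k-p+2}$, then lets me replace $U_{n'-k-p+2}\,V_p$ by $U_{n'-k-p+2}$ inside the integral without changing its value. After $n'-k+1$ such steps, every $V_p$ has been eliminated and the integrand reduces to that of $\mu_{n'-k+1}^{(A)}(B,C)$.

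The main obstacle is the bookkeeping: one has to verify carefully that each $V_p$, once commuted to the desired position, indeed has support inside $U_{n'-k-p+2}$'s qubits and commutes with every layer absorbed in a previous iteration. This follows from the staircase overlap pattern of $C_{n'-k+1}$ dictated by $W_p = \mathds1^{\otimes n'-k-p+1} \otimes U_p \otimes \mathds1^{\otimes p-1}$, which uniquely forces the qubit assignment of each $V_p$ to be consistent with the tensor decomposition of $V$; so the combinatorics is routine but must be stated with care in the final write-up.
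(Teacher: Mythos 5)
Your proposal is correct and is essentially the paper's own argument: the paper's proof is a two-sentence sketch stating that, because $V_1\otimes\dots\otimes V_{n'-k+1}$ is a product operator, its factors can be grouped with the Haar-random layers $U_i$ and removed via Lemma~\ref{le:eta}, which is exactly the peeling/absorption procedure you spell out. Your version merely makes explicit the staircase support bookkeeping that the paper leaves implicit.
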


    \begin{proof}
    This can be seen from the structure of $C_T^{(n)}$ in Figure~\ref{fig:ansatz} (a) and Lemma~\ref{le:eta}. Since $V_1 \otimes \dots \otimes V_{t-k+1}$ is a tensor product, the factors can be grouped with the $U_i$ gates and then can be removed by Lemma~\ref{le:eta}.
    \end{proof}
    \begin{lemma}~\cite{Mcclean2018} \label{le:1_design}
        Let $A,B\in \mathbb{C} ^ {N \times N}$ be arbitrary matrices. Then, we have
        \begin{align*}
            \int \limits_U \text{tr}(BA_U) \text{d}U = \frac{\text{tr}\left(A \right)\text{tr}\left(B \right)}{N},
        \end{align*}
        where $\text{d}U$ is the Haar measure.
    \end{lemma}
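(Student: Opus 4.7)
The plan is to reduce the lemma to computing the matrix-valued integral
\[
M \;:=\; \int_U U A U^{\dag} \,\text{d}U,
\]
because by linearity of the trace the claim becomes
\[
\int_U \text{tr}(B A_U)\,\text{d}U \;=\; \text{tr}\!\left(B\,M\right),
\]
and it suffices to show that $M = (\text{tr}(A)/N)\,\mathds{1}$, where $\mathds{1}$ denotes the $N\times N$ identity.

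The key step will be to invoke the left-invariance of the Haar measure, which has already been recorded as Lemma~\ref{le:eta}. For any fixed unitary $V$, substituting $U \mapsto V^{\dag} U$ gives
\begin{align*}
V M V^{\dag} \;=\; \int_U (VU) A (VU)^{\dag}\,\text{d}U \;=\; \int_U U A U^{\dag}\,\text{d}U \;=\; M,
\end{align*}
so $M$ commutes with every $V \in \mathrm{U}(N)$. Since the defining representation of $\mathrm{U}(N)$ on $\mathbb{C}^N$ is irreducible, Schur's lemma forces $M = \lambda \mathds{1}$ for some scalar $\lambda \in \mathbb{C}$.

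To pin down $\lambda$, I would take the trace of both sides and use cyclic invariance $\text{tr}(UAU^{\dag}) = \text{tr}(A)$ together with the normalization $\int_U \text{d}U = 1$, obtaining $\lambda N = \text{tr}(A)$, hence $\lambda = \text{tr}(A)/N$. Substituting back yields
\[
\int_U \text{tr}(BA_U)\,\text{d}U \;=\; \text{tr}(BM) \;=\; \frac{\text{tr}(A)\,\text{tr}(B)}{N},
\]
which is the claim. There is no serious obstacle: the only nontrivial ingredient is the Schur's-lemma step, and this identity is the standard first-moment (i.e.\ unitary $1$-design) formula for a single copy of $U\otimes U^{*}$, consistent with its attribution to \cite{Mcclean2018}.
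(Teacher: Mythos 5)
Your argument is correct: the reduction to $M=\int_U UAU^{\dag}\,\mathrm{d}U$, the left-invariance step showing $VMV^{\dag}=M$ for all unitaries $V$, the Schur's-lemma conclusion $M=\lambda\mathds{1}$, and the trace normalization $\lambda=\mathrm{tr}(A)/N$ are all sound, and this is the standard derivation of the Haar first-moment (unitary $1$-design) identity. Note, however, that the paper does not prove this lemma at all --- it is imported verbatim from the cited reference \cite{Mcclean2018} --- so there is no in-paper argument to compare against; your proof simply supplies the standard justification that the authors chose to delegate to the literature.
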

    \begin{lemma}~\cite{Mcclean2018} \label{le:2_design}
        Let $A,B,C,D \in \mathbb{C} ^ {N \times N}$ be arbitrary matrices. Then, we have
        \begin{align*}
            &\int \limits_U \text{tr}(BA_U) \text{tr}(DC_U) \text{d}U \\
            = &\frac{1}{N^2 - 1} \left( \text{tr} (A) \text{tr} (B) \text{tr} (C) \text{tr} (D) + \text{tr}(AC) \text{tr}(BD)\right) \\
            -&\frac{1}{N(N^2 - 1)} ( \text{tr} (AC) \text{tr} (B) \text{tr} (D) \\
            &+ \text{tr} (A) \text{tr} (C) \text{tr} (BD)) \numberthis,
        \end{align*}
        where $\text{d}U$ is the Haar measure.
    \end{lemma}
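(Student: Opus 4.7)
The plan is to prove the identity by rewriting the product of traces as a single trace on the two-fold tensor space and then exploiting Schur--Weyl duality to evaluate the resulting Haar integral of $U^{\otimes 2} X (U^\dagger)^{\otimes 2}$ for $X = A \otimes C$.

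First, I would use the standard identity $\operatorname{tr}(M)\operatorname{tr}(N)=\operatorname{tr}(M\otimes N)$ together with $(UAU^\dagger)\otimes(UCU^\dagger)=(U\otimes U)(A\otimes C)(U^\dagger\otimes U^\dagger)$ to rewrite
\begin{align*}
\operatorname{tr}(BA_U)\operatorname{tr}(DC_U)=\operatorname{tr}\bigl[(B\otimes D)\,U^{\otimes 2}(A\otimes C)(U^\dagger)^{\otimes 2}\bigr].
\end{align*}
Pulling the Haar integral inside the trace reduces the lemma to evaluating $T(X):=\int_U U^{\otimes 2} X (U^\dagger)^{\otimes 2}\, dU$ for the specific $X = A\otimes C$.

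Second, I would invoke Schur--Weyl duality for two copies: the commutant of $\{U^{\otimes 2}:U\in U(N)\}$ on $(\mathbb{C}^N)^{\otimes 2}$ is two-dimensional, spanned by the identity $I$ and the swap operator $S$. Since $T(X)$ commutes with $V^{\otimes 2}$ for every $V\in U(N)$ (by the translation invariance in Lemma~\ref{le:eta}), $T(X)=\alpha I+\beta S$ for scalars $\alpha,\beta$ depending linearly on $X$. To solve for $\alpha$ and $\beta$, take $\operatorname{tr}(T(X))$ and $\operatorname{tr}(S\,T(X))$; cyclicity together with $S U^{\otimes 2}=U^{\otimes 2} S$ gives
\begin{align*}
\operatorname{tr}(T(X))=\operatorname{tr}(X),\qquad \operatorname{tr}(S\,T(X))=\operatorname{tr}(SX).
\end{align*}
Using $\operatorname{tr}(I)=N^2$, $\operatorname{tr}(S)=\operatorname{tr}(IS)=N$, $\operatorname{tr}(S^2)=N^2$, the $2\times 2$ system yields
\begin{align*}
\alpha=\frac{N\operatorname{tr}(X)-\operatorname{tr}(SX)}{N(N^2-1)},\qquad \beta=\frac{N\operatorname{tr}(SX)-\operatorname{tr}(X)}{N(N^2-1)}.
\end{align*}

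Third, I would substitute $X=A\otimes C$ and use $\operatorname{tr}(A\otimes C)=\operatorname{tr}(A)\operatorname{tr}(C)$ and $\operatorname{tr}(S(A\otimes C))=\operatorname{tr}(AC)$, then compute $\operatorname{tr}((B\otimes D)T(A\otimes C))=\alpha\,\operatorname{tr}(B)\operatorname{tr}(D)+\beta\,\operatorname{tr}(BD)$, again using $\operatorname{tr}((B\otimes D)S)=\operatorname{tr}(BD)$. Collecting the four terms and grouping by the prefactor $1/(N^2-1)$ versus $1/(N(N^2-1))$ reproduces exactly the stated formula.

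The only conceptually nontrivial step is the Schur--Weyl / commutant input that identifies $T(X)$ as a two-parameter element $\alpha I+\beta S$; once this is granted, the remainder is a short linear algebra computation with tensor-product traces. Everything else—expressing the product of traces as one trace, solving the $2\times 2$ system, evaluating $\operatorname{tr}(S(A\otimes C))=\operatorname{tr}(AC)$—is routine bookkeeping.
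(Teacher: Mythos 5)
Your proposal is correct. Note that the paper does not prove this lemma at all --- it is imported verbatim as a cited result from McClean et al.\ (2018) --- so there is no in-paper argument to compare against; what you have written is a complete, self-contained derivation of the standard second-moment (Weingarten) formula via the commutant method. The three steps all check out: the rewriting $\text{tr}(BA_U)\text{tr}(DC_U)=\text{tr}\bigl[(B\otimes D)\,U^{\otimes 2}(A\otimes C)(U^{\dagger})^{\otimes 2}\bigr]$ is exact; the twirl $T(X)$ lands in the commutant of $\{V^{\otimes 2}\}$, which by Schur--Weyl is $\mathrm{span}\{I,S\}$ for $N\geq 2$; and solving the $2\times 2$ linear system $\alpha N^2+\beta N=\text{tr}(X)$, $\alpha N+\beta N^2=\text{tr}(SX)$ gives exactly your $\alpha,\beta$, which upon substituting $\text{tr}(A\otimes C)=\text{tr}(A)\text{tr}(C)$, $\text{tr}(S(A\otimes C))=\text{tr}(AC)$, and $\text{tr}((B\otimes D)S)=\text{tr}(BD)$ reproduces the four terms of the stated identity with the correct prefactors. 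The only implicit hypothesis worth flagging is $N\geq 2$, so that $I$ and $S$ are linearly independent and $N^2-1\neq 0$; this is already forced by the statement itself since the right-hand side is singular at $N=1$.
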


    \begin{lemma}~\cite{Mitarai2018} \label{le:parameter_shift_rule}
        Parameter Shift Rule: Let $\sigma \in \mathbb{D}_n$ and let $W \in \mathbb{H}_n$. Then, for any ansatz $C(\boldsymbol{\theta}) = \prod \limits_{p=1}^m  e^{-i\theta_p H_p} $, where $\boldsymbol{\theta} = [\theta_1 \dots \theta_m]^T$ and $H_p \in \mathbb{H}_n \ \forall \ p$, we have
        \begin{align}
            \partial_{\theta_p} f_{\sigma, W}(\boldsymbol{\theta}) = \frac{f_{\sigma, W}(\boldsymbol{\theta}_{p+}) - f_{\sigma, W}(\boldsymbol{\theta}_{p-})}{2},
        \end{align}
        where $\partial_{\theta_p}$ is the partial derivative with respect to $ \theta_p$, $\boldsymbol{\theta}_{p\pm} = [\boldsymbol{\theta}_{1}, \dots, \boldsymbol{\theta}_{p-1}, \boldsymbol{\theta}_{p} \pm \pi/2, \boldsymbol{\theta}_{p+1}, \dots, \boldsymbol{\theta}_{m}]^T$.
    \end{lemma}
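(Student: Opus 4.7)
The plan is to reduce the differentiation to a single-parameter problem by absorbing all other factors into fixed \emph{dressed} operators. Writing $A \coloneqq \prod_{j=1}^{p-1} e^{-i\theta_j H_j}$ and $B \coloneqq \prod_{j=p+1}^{m} e^{-i\theta_j H_j}$, I would express $f_{\sigma,W}(\boldsymbol{\theta}) = \text{tr}\bigl(\tilde W\, e^{-i\theta_p H_p}\,\tilde\sigma\, e^{i\theta_p H_p}\bigr)$ with $\tilde\sigma \coloneqq B\sigma B^\dag$ and $\tilde W \coloneqq A^\dag W A$. Differentiating under the trace and using the fact that $H_p$ commutes with $e^{\pm i\theta_p H_p}$ gives $\partial_{\theta_p} f_{\sigma,W}(\boldsymbol{\theta}) = \text{tr}\bigl(\tilde W\, e^{-i\theta_p H_p}(-i[H_p,\tilde\sigma])\,e^{i\theta_p H_p}\bigr)$, isolating the commutator $-i[H_p,\tilde\sigma]$ as the only quantity that needs to be rewritten.

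The key step is a shift identity for this commutator. Under the standard (and here implicit) assumption that the generators satisfy $H_p^{2} = \mathds1/4$, equivalently that $H_p$ has spectrum $\{+1/2,-1/2\}$ (the customary ``half-Pauli'' normalization for VQA rotation gates), I would show that
\begin{align*}
-i[H_p, O] \;=\; \tfrac{1}{2}\bigl(e^{-i(\pi/2)H_p}\,O\,e^{i(\pi/2)H_p} \;-\; e^{i(\pi/2)H_p}\,O\,e^{-i(\pi/2)H_p}\bigr)
\end{align*}
holds for every operator $O$. The verification is a direct expansion: with $a = \cos(\pi/4)$, $b = 2\sin(\pi/4)$ one has $e^{\mp i(\pi/2)H_p} = a\mathds1 \mp ibH_p$, and multiplying out each conjugation produces $a^{2}O + b^{2}H_pOH_p \mp iab[H_p,O]$; the even-in-$H_p$ parts cancel in the difference, while the odd part contributes exactly $-2i[H_p,O]$, giving the identity after dividing by two.

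To finish, I would substitute $O = \tilde\sigma$ into the identity and invoke the one-parameter group property $e^{-i\theta_p H_p}\,e^{\mp i(\pi/2)H_p} = e^{-i(\theta_p \pm \pi/2)H_p}$ to merge the inserted exponentials with the existing $e^{\mp i\theta_p H_p}$ factors. This rewrites the derivative as
\begin{align*}
\partial_{\theta_p} f_{\sigma,W}(\boldsymbol{\theta}) = \tfrac{1}{2}\,\text{tr}\bigl(\tilde W\, e^{-i(\theta_p+\pi/2)H_p}\tilde\sigma\, e^{i(\theta_p+\pi/2)H_p}\bigr) - \tfrac{1}{2}\,\text{tr}\bigl(\tilde W\, e^{-i(\theta_p-\pi/2)H_p}\tilde\sigma\, e^{i(\theta_p-\pi/2)H_p}\bigr),
\end{align*}
and re-absorbing $A$ and $B$ into the full circuit reconstitutes $f_{\sigma,W}$ evaluated at $\boldsymbol{\theta}_{p+}$ and $\boldsymbol{\theta}_{p-}$, yielding the claimed formula.

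The main obstacle is really the spectral hypothesis on $H_p$: the statement as written allows any $H_p \in \mathbb{H}_n$, but the two-term shift rule holds verbatim only when $H_p$ has exactly two eigenvalues symmetric about zero. For a generic Hermitian generator one needs a rescaling or a multi-term generalization, so I would flag this assumption explicitly, as it is the standard (though often unstated) premise behind every invocation of Mitarai's rule in the VQA literature.
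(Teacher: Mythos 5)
The paper does not actually prove this lemma; it imports it verbatim from Mitarai et al.\ with a citation and no argument, so there is no in-paper proof to compare against. Your derivation is the standard one and is correct: reducing to a single rotation by dressing $W$ and $\sigma$ with the left and right factors of the circuit, pulling the derivative into the commutator $-i[H_p,\tilde\sigma]$, and then verifying the two-term shift identity by expanding $e^{\mp i(\pi/2)H_p}=\cos(\pi/4)\mathds1\mp 2i\sin(\pi/4)H_p$ (valid when $H_p^2=\mathds1/4$) so that the even parts cancel and the cross terms reproduce the commutator with coefficient $\cos(\pi/4)\cdot 2\sin(\pi/4)=1$. All steps check out, including the regrouping $e^{-i\theta_pH_p}e^{\mp i(\pi/2)H_p}=e^{-i(\theta_p\pm\pi/2)H_p}$ that reconstitutes $f_{\sigma,W}(\boldsymbol{\theta}_{p\pm})$.

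Your closing caveat is the most valuable part and is not a defect of your proof but of the statement: as written, the lemma quantifies over arbitrary $H_p\in\mathbb{H}_n$, for which the two-term shift rule with shift $\pi/2$ and prefactor $1/2$ is simply false (a generator with more than two distinct eigenvalues, or two eigenvalues not equal to $\pm 1/2$, requires a rescaled shift or a multi-term rule). The paper inherits this imprecision from its source and relies on the lemma in the proof of Corollary~\ref{co:upper_bound}, where the $H_{pq}$ are again only assumed Hermitian; in practice the subcircuits there are built from Pauli rotations, so the implicit normalization holds, but the hypothesis $H_p^2=\mathds1/4$ (or an equivalent two-point spectrum condition) should be stated explicitly, exactly as you flag.
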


    \begin{lemma} \cite{Cerezo2021} \label{le:exp_partial_derivative}
        Let $\sigma \in \mathbb{D}_n$ and let $W \in \mathbb{H}_n$. For any ansatz $C(\boldsymbol{\theta}) = \prod \limits_{p=1}^t U_p(\boldsymbol{\theta}_p) $ where $ U_p(\boldsymbol{\theta}_p) = \prod \limits_{q=1}^m e ^ {-i \theta_{pq} H_{pq}}$, where $\boldsymbol{\theta}_p = [\theta_1 \dots \theta_m]^T$, $H_{pq} \in \mathbb{H}_n$ and $ \boldsymbol{\theta} = \boldsymbol{\theta}_1 \oplus \dots \oplus \boldsymbol{\theta}_t$, and for any $p,q$, define
    \begin{align*}
        U_p^{(L,q)} \left(\boldsymbol{\theta}_p \right) &= \prod \limits_{j=1}^{q-1} e^{-i \theta_{pj} H_{pj}}, \\
        U_p^{(R,q)} \left(\boldsymbol{\theta}_p \right) &= \prod \limits_{j=q+1}^{m} e^{-i \theta_{pj} H_{pj}}. \numberthis
    \end{align*}
    Then, we have
        \begin{align}
            \mathbb{E}_{\boldsymbol{\theta}} \left( \partial_{\theta_{pq}} f_{\sigma, W} (\boldsymbol{\theta}) \right) = 0,
        \end{align}
        where $ U_1({\boldsymbol{\theta}_1}), U_2({\boldsymbol{\theta}_2}), \dots, U_{p-1}({\boldsymbol{\theta}_{p-1}}), U_{p+1} ({\boldsymbol{\theta}_{p+1}}), \dots, $ $ U_{t}({\boldsymbol{\theta}_{t}})$ along with either $ U_p^{(L,q)}$ or $ U_p^{(R,q)}$ form unitary $2$-designs.
    \end{lemma}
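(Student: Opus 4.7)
The plan is to combine the parameter shift rule with the invariance of $1$-design integrals under left/right multiplication by fixed unitaries. Decompose the ansatz as $C(\boldsymbol{\theta}) = V_L \cdot e^{-i\theta_{pq}H_{pq}} \cdot V_R$, with $V_L \coloneqq U_1\cdots U_{p-1}\,U_p^{(L,q)}(\boldsymbol{\theta}_p)$ and $V_R \coloneqq U_p^{(R,q)}(\boldsymbol{\theta}_p)\,U_{p+1}\cdots U_t$. Viewed as a function of $\theta_{pq}$ alone (all other parameters frozen), $f_{\sigma,W}(\boldsymbol{\theta})$ fits the hypotheses of Lemma~\ref{le:parameter_shift_rule} after the surrounding unitaries are absorbed into an effective input state $V_R\sigma V_R^{\dag}$ and effective observable $V_L^{\dag} W V_L$ via cyclicity of the trace, yielding
\[ \partial_{\theta_{pq}} f_{\sigma,W}(\boldsymbol{\theta}) = \tfrac{1}{2}\bigl[f_{\sigma,W}(\boldsymbol{\theta}_{pq+}) - f_{\sigma,W}(\boldsymbol{\theta}_{pq-})\bigr], \]
where $\boldsymbol{\theta}_{pq\pm}$ denotes $\boldsymbol{\theta}$ with $\theta_{pq}$ replaced by $\theta_{pq} \pm \pi/2$. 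By linearity of expectation, the claim reduces to showing $\mathbb{E}_{\boldsymbol{\theta}}[f_{\sigma,W}(\boldsymbol{\theta}_{pq+})] = \mathbb{E}_{\boldsymbol{\theta}}[f_{\sigma,W}(\boldsymbol{\theta}_{pq-})]$.

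The key step is the identity $e^{-i(\theta_{pq}\pm\pi/2)H_{pq}} = e^{\mp i(\pi/2)H_{pq}}\,e^{-i\theta_{pq}H_{pq}}$, which lets me absorb the fixed unitary $G_\pm \coloneqq e^{\mp i(\pi/2)H_{pq}}$ into whichever of $U_p^{(L,q)}$ or $U_p^{(R,q)}$ is adjacent to $e^{-i\theta_{pq}H_{pq}}$. By hypothesis, one of these subcircuits forms a $2$-design and hence a $1$-design, so by Definition~\ref{def:t_design} combined with Lemma~\ref{le:eta}, integration of any polynomial of degree at most $1$ in the subcircuit and at most $1$ in its adjoint is invariant under left or right multiplication by a fixed unitary. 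Since $f_{\sigma,W}(\boldsymbol{\theta})$ is bilinear in the designated subcircuit and its adjoint, both $\mathbb{E}_{\boldsymbol{\theta}}[f_{\sigma,W}(\boldsymbol{\theta}_{pq+})]$ and $\mathbb{E}_{\boldsymbol{\theta}}[f_{\sigma,W}(\boldsymbol{\theta}_{pq-})]$ coincide with $\mathbb{E}_{\boldsymbol{\theta}}[f_{\sigma,W}(\boldsymbol{\theta})]$, and their difference vanishes.

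The only point requiring care is transferring Lemma~\ref{le:eta} from the Haar measure to a $1$-design, which is immediate from Definition~\ref{def:t_design} because the integrand has the required bidegree. No interaction between different layers, no particular locality structure of the MPS ansatz, and no property of $W$ beyond Hermiticity is used, in sharp contrast to Theorems~\ref{th:upper_bound} and~\ref{th:lower_bound}, whose variance bounds genuinely exploit both the full $2$-design property and the tensor-product structure of $C_T^{(n)}$. An equivalent alternative avoids the parameter shift rule altogether: directly integrate $U_p^{(L,q)}$ (or $U_p^{(R,q)}$) using Lemma~\ref{le:1_design} to obtain $\mathbb{E}_{U_p^{(L,q)}}[f_{\sigma,W}] = \text{tr}(W)/2^n$, which is manifestly independent of $\theta_{pq}$, and then differentiate under the integral sign.
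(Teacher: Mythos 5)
The paper does not actually prove Lemma~\ref{le:exp_partial_derivative}: it is imported from \cite{Cerezo2021} and stated without proof, so there is no in-paper argument to compare against. Your derivation is a correct, self-contained proof, and it is consistent with the techniques the paper itself deploys nearby --- your key step (commuting $e^{\mp i(\pi/2)H_{pq}}$ past $e^{-i\theta_{pq}H_{pq}}$, absorbing it into the adjacent designated subcircuit, and invoking the design property together with Lemma~\ref{le:eta}) is precisely the manipulation the authors use for the \emph{second} moments in Eq~\eqref{eq:squared_pi} of the proof of Corollary~1; you are applying it at the first-moment level, where a $1$-design (implied by the assumed $2$-design) already suffices because $f_{\sigma,W}$ is bidegree $(1,1)$ in the designated subcircuit. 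Your alternative route --- twirl the Haar-random subcircuit via Lemma~\ref{le:1_design} to get the constant $\mathrm{tr}(W)/2^n$ and then differentiate under the integral sign --- is the argument closer to the original reference and is arguably preferable, since it avoids the one soft spot in your main route: the parameter-shift rule with shift $\pi/2$ and prefactor $1/2$ is only exact when each $H_{pq}$ has two eigenvalues differing by $2$ (e.g.\ Pauli-type generators), a hypothesis that Lemma~\ref{le:parameter_shift_rule} as stated in the paper silently omits but that your second argument never needs. With that caveat noted, both routes are sound.
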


    \begin{lemma} \label{le:matrix_holder}
        Tracial Matrix H\"{o}lder's Inequality~\cite{Baumgartner2011}: For any $A,B \in \mathbb{C}^{t \times t}$, and any $1\leq p,q\leq \infty$ such that $\frac{1}{p} + \frac{1}{q} = 1$, we have,
        \begin{align}
            \left| \text{tr}(A^{\dag}B) \right| \leq \| A\|_p \| B\|_q.
        \end{align}
    \end{lemma}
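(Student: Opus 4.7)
The plan is to prove the Tracial Matrix H\"{o}lder's Inequality by reducing to a scalar H\"{o}lder inequality applied to singular values, using von Neumann's trace inequality as the bridge. Throughout, I interpret $\|\cdot\|_p$ as the Schatten $p$-norm, so $\|A\|_p = \bigl(\sum_{i=1}^t \sigma_i(A)^p\bigr)^{1/p}$ for $1 \leq p < \infty$ and $\|A\|_\infty = \sigma_1(A)$, where $\sigma_1(A) \geq \sigma_2(A) \geq \cdots \geq \sigma_t(A) \geq 0$ denote the ordered singular values of $A$.

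First I would take singular value decompositions $A = U_A \Sigma_A V_A^{\dag}$ and $B = U_B \Sigma_B V_B^{\dag}$, yielding
\[
|\text{tr}(A^{\dag} B)| = |\text{tr}(\Sigma_A R \Sigma_B S)|,
\]
for the unitaries $R = U_A^{\dag} U_B$ and $S = V_B^{\dag} V_A$. Expanding entrywise gives $\bigl|\sum_{i,j} \sigma_i(A)\sigma_j(B)\, R_{ij} S_{ji}\bigr|$, reducing the problem to bounding this sum.

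Next I would establish von Neumann's trace inequality, $|\text{tr}(A^{\dag} B)| \leq \sum_{i=1}^t \sigma_i(A)\sigma_i(B)$. Using $|R_{ij} S_{ji}| \leq \tfrac{1}{2}(|R_{ij}|^2 + |S_{ji}|^2)$ together with the observation that both $(|R_{ij}|^2)_{ij}$ and $(|S_{ji}|^2)_{ij}$ are doubly stochastic (since $R$ and $S$ are unitary), the Birkhoff-von Neumann theorem expresses each as a convex combination of permutation matrices. A rearrangement inequality then bounds $\sum_i \sigma_i(A)\sigma_{\pi(i)}(B)$, over permutations $\pi$, by the case $\pi = \mathrm{id}$, since both singular-value sequences are ordered decreasingly.

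Finally, applying the scalar H\"{o}lder inequality to $\sum_{i=1}^t \sigma_i(A)\sigma_i(B)$ gives
\[
\sum_{i=1}^t \sigma_i(A)\sigma_i(B) \leq \Bigl(\sum_i \sigma_i(A)^p\Bigr)^{1/p} \Bigl(\sum_i \sigma_i(B)^q\Bigr)^{1/q} = \|A\|_p\, \|B\|_q,
\]
with the edge cases $(p,q) = (1,\infty)$ and its dual handled directly via $\sum_i \sigma_i(A)\sigma_i(B) \leq \sigma_1(B) \sum_i \sigma_i(A)$. The main obstacle is the von Neumann trace inequality itself: the rearrangement argument requires care because $R, S$ are arbitrary unitaries rather than permutations, so one must pass through the Birkhoff-von Neumann decomposition of the doubly stochastic matrices $(|R_{ij}|^2)$ and $(|S_{ji}|^2)$ to reduce to the permutation case. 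Once that reduction is in hand, the scalar H\"{o}lder step is routine.
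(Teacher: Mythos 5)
The paper does not prove this lemma at all --- it is imported as a known result from the cited reference (Baumgartner, 2011) and used once, in the proof of Theorem~2, to bound $\left|\text{tr}\left(Z_i(\rho-\sigma)_{C_T}\right)\right|$ by $\|\rho-\sigma\|_{\text{tr}}$. Your proof is the standard one and is correct: the reduction $|\text{tr}(A^{\dag}B)| = |\text{tr}(\Sigma_A R \Sigma_B S)|$ via SVD, the derivation of von Neumann's trace inequality by splitting $|R_{ij}||S_{ji}| \leq \tfrac{1}{2}(|R_{ij}|^2 + |S_{ji}|^2)$ into two doubly stochastic sums and invoking Birkhoff--von Neumann plus rearrangement, and the final scalar H\"{o}lder step on the decreasingly ordered singular values all go through, including the $(1,\infty)$ edge case. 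One point worth making explicit, since the paper's own notation is ambiguous here: the Background section defines $\|A\|_1$ as the \emph{entrywise} $1$-norm, but your Schatten reading of $\|\cdot\|_p$ is the correct one for this lemma --- it is the interpretation under which the application in Theorem~2 works, since that application needs $\|Z_i\|_{\infty}=1$ as an operator norm and needs the $q=1$ norm to coincide with the unitarily invariant $\|\cdot\|_{\text{tr}}$ so that conjugation by $C_T$ drops out. Stating that disambiguation at the top of the proof, as you do, is the right move.
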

    \begin{lemma} \label{le:cross_product}
        For a set of numbers $\lambda_0, \lambda_1, \dots, \lambda_{D-1}$, where $D$ is even, with $\lambda_i \in \{ 1, -1\}$ and $|\{ \lambda_i = 1 \ | \ i = 0, \dots, D-1\}| = D/2$, we have $\sum \limits_{i,j=0, i \neq j}^{D-1} \lambda_i \lambda_j = -D$.
    \end{lemma}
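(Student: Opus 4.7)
The plan is to evaluate the cross-product sum by relating it to the square of the total sum, which is the standard trick for handling $\sum_{i \neq j}$-type expressions. Specifically, I would expand
\begin{align*}
\left( \sum_{i=0}^{D-1} \lambda_i \right)^2 = \sum_{i=0}^{D-1} \lambda_i^2 + \sum_{\substack{i,j=0 \\ i \neq j}}^{D-1} \lambda_i \lambda_j,
\end{align*}
so that the target sum can be isolated as the difference between the square of the linear sum and the sum of squares.

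Next I would compute each of the two easier pieces. Since exactly $D/2$ of the $\lambda_i$ equal $+1$ and the remaining $D/2$ equal $-1$, the linear sum $\sum_i \lambda_i$ is simply $D/2 - D/2 = 0$, so the left-hand side of the expansion vanishes. For the sum of squares, every $\lambda_i \in \{+1,-1\}$ satisfies $\lambda_i^2 = 1$, hence $\sum_i \lambda_i^2 = D$. Substituting these two facts into the expanded identity gives $0 = D + \sum_{i \neq j} \lambda_i \lambda_j$, and rearranging yields the claimed value $-D$.

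Since the proof reduces to a single algebraic identity together with two elementary counting observations, I do not anticipate any genuine obstacle; the only thing to be careful about is ensuring the indexing convention (unordered versus ordered pairs, and $i \neq j$ rather than $i < j$) matches the statement in the lemma, which it does because the expansion of the square naturally produces the ordered-pair sum over $i \neq j$.
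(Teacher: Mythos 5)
Your proof is correct, but it takes a different route from the paper's. You use the standard algebraic identity $\left(\sum_i \lambda_i\right)^2 = \sum_i \lambda_i^2 + \sum_{i \neq j} \lambda_i \lambda_j$ together with the two observations $\sum_i \lambda_i = 0$ and $\sum_i \lambda_i^2 = D$. The paper instead argues combinatorially: it partitions the set of ordered pairs $(i,j)$ according to the sign of $\lambda_i \lambda_j$, notes that exactly $D^2/2$ pairs give $+1$ and $D^2/2$ give $-1$, and then observes that all $D$ diagonal pairs lie in the $+1$ class, so the off-diagonal sum is $(D^2/2 - D) - D^2/2 = -D$. Both arguments are elementary and complete. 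Yours is slightly more general: it only uses that the $\lambda_i$ sum to zero and have squares summing to $D$, so it would apply verbatim to any real values with those two properties, whereas the paper's pair-counting is tied to the $\lambda_i \in \{+1,-1\}$ structure. The paper's version, on the other hand, makes the balance between same-sign and opposite-sign pairs explicit, which mirrors how the lemma is invoked (tracking which eigenvalue pairs of $Z^{\otimes(k-1)}$ contribute with which sign). Your care about the ordered-pair convention is well placed and correctly resolved: the expansion of the square does produce the ordered sum over $i \neq j$, matching the statement.
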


    \begin{proof}
        Let $\mathbb{I} = \{ (i,j) \ | \ i,j = 0, 1, 2, \dots, D-1\}$. Define $\mathbb{I}_{\pm} = \{ (i,j) \in \mathbb{I} \ | \lambda_i \lambda_j = \pm 1\}$. So $| \mathbb{I}_{+}| = | \mathbb{I}_{-}| = D^2/2 $. All $\frac{D^2}{2}$ elements in $\mathbb{I}_- $ should have $i\neq j$. But only $\frac{D^2}{2} - D$ elements in $\mathbb{I}_+$ can have $i \neq j$ since whenever $i=j$, $(i,j) \in \mathbb{I}_+$. Hence, $\sum \limits_{i \neq j} \lambda_{i} \lambda_{j} = -D$. 
    \end{proof}

    \begin{lemma} \label{le:1_design_multi}
        Let $A \in \mathbb{C}^{2^n \times 2^n}$. Then, we have,
        \begin{align}
            \int \limits_{U} \text{tr}\left(Z_i A_{\mathds1_{n-k} \otimes U} \right) \text{d}U = 0,
        \end{align}
        for any $i \in \{ n-k+1,\dots, n\}$, where $\text{d}U$ is the Haar measure.
    \end{lemma}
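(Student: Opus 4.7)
\textbf{Proof proposal for Lemma~\ref{le:1_design_multi}.}

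The plan is to exploit the tensor factorization between the first $n-k$ qubits (on which the conjugating unitary acts as identity) and the last $k$ qubits (on which $U$ acts), then integrate $U$ using the Haar $1$-design formula and finally use tracelessness of the Pauli $Z$. Under the convention being used, for any $i \in \{n-k+1, \dots, n\}$ the observable $Z_i$ is supported entirely within the last $k$ qubits, so it can be written in the form $Z_i = \mathds1_{n-k} \otimes Z'$, where $Z' \in \mathbb{H}_k$ is the tensor product of identities on all but one qubit with a single $Z$, and in particular $\text{tr}(Z') = 0$.

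First I would expand $A$ in the bipartite basis of the two tensor factors, writing
\begin{align*}
A = \sum_{\alpha,\beta} A^{(\alpha,\beta)} \otimes B^{(\alpha,\beta)},
\end{align*}
where $A^{(\alpha,\beta)} \in \mathbb{C}^{2^{n-k}\times 2^{n-k}}$ acts on the first $n-k$ qubits and $B^{(\alpha,\beta)} \in \mathbb{C}^{2^{k}\times 2^{k}}$ acts on the last $k$. Since $(\mathds1_{n-k}\otimes U)$ commutes across the bipartition, the conjugation is
\begin{align*}
A_{\mathds1_{n-k} \otimes U} = \sum_{\alpha,\beta} A^{(\alpha,\beta)} \otimes U B^{(\alpha,\beta)} U^{\dag}.
\end{align*}
Then applying the tensor-product property of trace with $Z_i = \mathds1_{n-k} \otimes Z'$ gives
\begin{align*}
\text{tr}\!\left(Z_i A_{\mathds1_{n-k} \otimes U}\right) = \sum_{\alpha,\beta} \text{tr}\!\left(A^{(\alpha,\beta)}\right)\text{tr}\!\left(Z' U B^{(\alpha,\beta)} U^{\dag}\right).
\end{align*}

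Next I would integrate $U$ term by term. By Lemma~\ref{le:1_design} applied on the $k$-qubit space,
\begin{align*}
\int\limits_{U} \text{tr}\!\left(Z' U B^{(\alpha,\beta)} U^{\dag}\right) \text{d}U = \frac{\text{tr}(Z') \text{tr}\!\left(B^{(\alpha,\beta)}\right)}{2^k}.
\end{align*}
Because $Z' = \mathds1^{\otimes a} \otimes Z \otimes \mathds1^{\otimes k-a-1}$ for some $0 \leq a \leq k-1$ and $\text{tr}(Z) = 0$, we obtain $\text{tr}(Z') = 0$. Substituting this into the expansion shows that every summand vanishes, which yields the claim.

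The argument is essentially routine; the only place needing care is the bookkeeping of the bipartition and verifying that, under the paper's indexing convention, $Z_i$ for $i \in \{n-k+1,\dots,n\}$ really does act as the identity on the first $n-k$ qubits so that the $\mathds1_{n-k}$ factor of the observable lines up with the $\mathds1_{n-k}$ factor of the conjugating unitary. Once that alignment is in place, the result is a one-line consequence of the Haar $1$-design identity and the tracelessness of $Z$.
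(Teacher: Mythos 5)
Your proposal is correct and follows essentially the same route as the paper's own proof: decompose $A$ across the bipartition between the first $n-k$ qubits and the last $k$ qubits, factor the trace, and apply Lemma~\ref{le:1_design} on the $k$-qubit block, where the integral vanishes because $\text{tr}(Z')=0$. The only difference is cosmetic --- the paper expands $A$ in the computational basis in matrix elements rather than in a generic bipartite operator sum, and leaves the tracelessness step implicit, which you make explicit.
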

    \begin{proof}
        Let $A = \sum \limits_{p,q=0}^{2^n-1} A_{pq} \ket{p_{1:n-k}}\bra{q_{1:n-k}} \otimes $ \\ $ \ket{p_{n-k+1:n}}\bra{q_{n-k+1:n}}$. Then, we have
        \begin{align*}
            &\int \limits_{U} \text{tr} \left(Z_i A_{\mathds1_{n-k} \otimes U} \right) \text{d}U \\
            = &\sum \limits_{p,q=0}^{2^n-1} A_{pq} \delta_{p_{1:n-k},q_{1:n-k}} \\
            &\hspace*{1cm}\int \limits_{U} \text{tr} \big(Z_i (\ket{p_{n-k+1:n}} \bra{q_{n-k+1:n}})_{U} \big)\text{d}U \\
            = &0 \numberthis,
        \end{align*}
        where the last equality follows from Lemma~\ref{le:1_design}.
    \end{proof}
    \begin{lemma} \label{le:2_design_multi}
        Let $A \in \mathbb{C}^{2^n \times 2^n}$. Then, we have,
        \begin{align}
            \int \limits_{U} \text{tr}(Z_i A_{\mathds1_{n-k} \otimes U}) \text{tr}(Z_j A_{\mathds1_{n-k} \otimes U}) \text{d}U = 0,
        \end{align}
        for any $i,j \in \{ n-k+1,\dots, n\}$ with $i \neq j$, where $\text{d}U$ is the Haar measure.
    \end{lemma}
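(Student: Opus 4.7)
The plan is to mirror the structure of the preceding Lemma~\ref{le:1_design_multi} but invoke the quadratic $2$-design formula (Lemma~\ref{le:2_design}) rather than the linear one. Since the indices $i,j \in \{n-k+1,\dots,n\}$ ensure that both $Z_i$ and $Z_j$ act trivially on the same $n-k$ qubits on which $\mathds1_{n-k}\otimes U$ acts as identity, the natural first move is to push those qubits out of the integral via a partial trace.

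Concretely, I would expand $A = \sum A_{(p_1,p_2),(q_1,q_2)} \ket{p_1}\bra{q_1}\otimes \ket{p_2}\bra{q_2}$ exactly as in the proof of Lemma~\ref{le:1_design_multi}. The trace over the first $n-k$ qubits produces the Kronecker delta $\delta_{p_1,q_1}$ that collapses those indices, leaving the $2^k\times 2^k$ reduced operator $A' \coloneqq \operatorname{tr}_{1:n-k}(A)$ and the identity
\begin{align*}
\operatorname{tr}\!\left(Z_i A_{\mathds1_{n-k}\otimes U}\right) = \operatorname{tr}\!\left(Z^{(k)}_{i'} A'_U\right),
\end{align*}
where $Z^{(k)}_{i'}$ denotes the single-qubit Pauli $Z$ on the appropriate qubit of the $k$-qubit subsystem (with $i'$ determined by the convention $Z^{(n)}_t = \mathds1_{n-t}\otimes Z \otimes \mathds1_{t-1}$). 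The assumption $i\neq j$ is preserved under this reduction, so $i'\neq j'$ as well.

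With this reduction the integral becomes $\int_U \operatorname{tr}(Z^{(k)}_{i'} A'_U)\operatorname{tr}(Z^{(k)}_{j'} A'_U)\,\mathrm{d}U$, and Lemma~\ref{le:2_design} applies with $N = 2^k$, $A=C=A'$, $B = Z^{(k)}_{i'}$, $D = Z^{(k)}_{j'}$. Each of the four resulting terms contains a factor of $\operatorname{tr}(Z^{(k)}_{i'})$, $\operatorname{tr}(Z^{(k)}_{j'})$, or $\operatorname{tr}(Z^{(k)}_{i'}Z^{(k)}_{j'})$. The first two vanish because any tensor-product Pauli string containing a $Z$ is traceless; the third vanishes because for $i'\neq j'$ the product $Z^{(k)}_{i'} Z^{(k)}_{j'}$ is a Pauli string with $Z$s on two distinct qubits and identities elsewhere, which is also traceless. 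Hence all four summands vanish and the integral equals zero.

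I do not expect a substantial obstacle: this is a direct second-moment analogue of Lemma~\ref{le:1_design_multi}. The only care needed is the bookkeeping in identifying $Z^{(k)}_{i'}$ under the paper's indexing convention so that $Z_i$ and $Z_j$ genuinely commute through the $\mathds1_{n-k}$ factor, which is what enables the partial-trace reduction in the first place.
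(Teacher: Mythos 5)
Your proposal is correct and follows essentially the same route as the paper: reduce to the $k$-qubit subsystem on which $U$ acts (the paper does this via the computational-basis expansion of $A$ and the resulting Kronecker deltas, which is just your partial trace written out term by term), then apply Lemma~\ref{le:2_design} and observe that every term carries a factor of $\text{tr}(Z_{i'})$, $\text{tr}(Z_{j'})$, or $\text{tr}(Z_{i'}Z_{j'})$, all of which vanish for $i'\neq j'$. In fact you spell out the vanishing of the four terms more explicitly than the paper, which simply cites Lemma~\ref{le:2_design} for the final equality.
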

    \begin{proof}
        Let $A = \sum \limits_{p,q=0}^{2^n-1} A_{pq} \ket{p_{1:n-k}}\bra{q_{1:n-k}} \otimes $ \\ $ \ket{p_{n-k+1:n}}\bra{q_{n-k+1:n}}$. Then, we have
        \begin{align*}
            &\int \limits_{U} \text{tr} \left( Z_i A_{\mathds1_{n-k} \otimes U} \right) \text{tr} \left( Z_jA_{\mathds1_{n-k} \otimes U} \right) \text{d}U \\
            = &\int \limits_{U} \sum \limits_{p,q,r,s=0}^{2^n-1} A_{pq} A_{rs} \delta_{p_{1:n-k},q_{1:n-k}} \delta_{r_{1:n-k},s_{1:n-k}} \\
            &\text{tr}(Z_i(\ket{p_{n-k+1:n}}\bra{q_{n-k+1:n}})_{U}) \\&\text{tr}(Z_j(\ket{r_{n-k+1:n}}\bra{s_{n-k+1:n}})_{U}) \text{d}U = 0, \numberthis
        \end{align*}
        where the last equality follows from Lemma~\ref{le:2_design}.
    \end{proof}
\begin{customthm}{1}
    Let $\sigma \in \mathbb{D}_n$ and $C_T^{(n)}$ be an MPS ansatz where each parameterized subcircuit $U_i$ forms a unitary $2$-design. Then, we have
    \begin{align}
        \text{Var}_{\boldsymbol{\theta}} \left(  f_{\sigma,\ket{\mathbf{0}}\bra{\mathbf{0}}}(\boldsymbol{\theta}) \right) \leq \frac{h_1(\sigma)}{4^ {n-k-1}}.
    \end{align}
\end{customthm}

\begin{proof}
    Note that since 
    \begin{align}
        \text{Var}_{\boldsymbol{\theta}} \left( f_{\sigma,\ket{\mathbf{0}}\bra{\mathbf{0}}}(\boldsymbol{\theta}) \right) \leq \mathbb{E}_{\boldsymbol{\theta}} \left(f_{\sigma,\ket{\mathbf{0}}\bra{\mathbf{0}}}(\boldsymbol{\theta}) \right)^2,
    \end{align}
    it is sufficient to prove $\mu_{T}^{\left(\ket{\mathbf{0}} \bra{\mathbf{0}} \right)}(\sigma, \sigma) \leq \left( h_1(\sigma)/4 ^ {n-k-1} \right)$. 

    Let $\sigma = \sum \limits_{ij} \sigma_{ij} \ket{i} \bra{j}$. Then, we have
    \begin{align*} \label{eq:sigma_sigma}
        &\mu_{T}^{\left(\ket{\mathbf{0}} \bra{\mathbf{0}}\right)}(\sigma, \sigma) \\\
        = &\sum \limits_{pqrs} \sigma_{pq} \sigma_{rs} \mu_{T}^{\left(\ket{\mathbf{0}} \bra{\mathbf{0}}\right)}(\ket{p}\bra{q},\ket{r}\bra{s}) \\
        \leq &\sum \limits_{pqrs} | \sigma_{pq} ||\sigma_{rs} | |\mu_{T}^{\left(\ket{\mathbf{0}} \bra{\mathbf{0}}\right)}(\ket{p}\bra{q},\ket{r}\bra{s})| \numberthis.
    \end{align*} 
    
    Our next step is to prove that $ |\mu_{T}^{\left(\ket{\mathbf{0}} \bra{\mathbf{0}}\right)}(\ket{p}\bra{q},\ket{r}\bra{s})| $ is upper bounded by $1/4 ^ {n-k-1}$.  The main idea is to integrate each unitary  $U_T, U_{T-1}, \dots, U_1$ one at a time, in this order. For this, we use a recursive method, where we break down this integral into an integration over $U_T$ multiplied by integrations over an MPS circuit defined with $T-1$ Haar random $k$-qubit gates. We then show that such an integration can be solved efficiently for any $t$, thus allowing us to carry out the full integration efficiently.
    
    Let $p = p_1 p_2 \dots p_n$ be the binary expansion of $p$ (similar definitions for $q,r$ and $s$). Then, we have
    \begin{align*} \label{eq:upper_bound_pqrs}
        &\mu_{T}^{\left(\ket{\mathbf{0}} \bra{\mathbf{0}}\right)}(\ket{p} \bra{q}, \ket{r}\bra{s}) \\ 
        =&\int \limits_{\mathbf{U}_T} \bra{\mathbf{0}} \left( \ket{p}\bra{q} \right)_{C_T} \ket{\mathbf{0}} \bra{\mathbf{0}} \left( \ket{r}\bra{s} \right) _{C_T} \ket{\mathbf{0}} \text{d}\mathbf{U}_T \numberthis \\
        = &\int \limits_{\mathbf{U}_T} \bra{\mathbf{0}} \left( \ket{p_{1:k}} \bra{q_{1:k}}_{U_T} \otimes \ket{p_{k+1:n}} \bra{q_{k+1:n}} \right)_{C_{T-1}} \ket{\mathbf{0}} \\
        &\bra{\mathbf{0}} \big( \ket{r_{1:k}} \bra{s_{1:k}}_{U_T} \otimes \ket{r_{k+1:n}} \bra{s_{k+1:n}} \big)_{C_{T-1}} \ket{\mathbf{0}} \\
        &\text{d}\mathbf{U}_T \numberthis \\
        \end{align*}
        Expanding the above in the Pauli basis gives us
        \begin{align*}
        &\mu_{T}^{\left(\ket{\mathbf{0}} \bra{\mathbf{0}}\right)}(\ket{p} \bra{q}, \ket{r}\bra{s}) \\
        = &\frac{1}{2^{2k}} \int \limits_{\mathbf{U}_T} \sum \limits_{\substack{i_{1:k} \\ j_{1:k}}} \bra{\mathbf{0}}\left( P_{i_{1:k}} \otimes \ket{p_{k+1:n}} \bra{q_{k+1:n}} \right)_{C_{T-1}} \ket{\mathbf{0}} \\
        &\text{tr} \left(P_{i_{1:k}} \ket{p_{1:k}} \bra{q_{1:k}}_{U_T}\right)\bra{\mathbf{0}} ( P_{j_{1:k}} \otimes \ket{r_{k+1:n}} \\ &\bra{s_{k+1:n}} )_{C_{T-1}} \ket{\mathbf{0}}\text{tr} \left(P_{j_{1:k}}\ket{r_{1:k}} \bra{s_{1:k}}_{U_T} \right) \text{d}\mathbf{U}_T \numberthis \\
        \end{align*}
        Now we isolate the integral over $U_T$ and set up the recursion.
        \begin{align*}
        &\mu_{T}^{\left(\ket{\mathbf{0}} \bra{\mathbf{0}}\right)}(\ket{p} \bra{q}, \ket{r}\bra{s}) \\
        = &\frac{1}{2^k} \int \limits_{\mathbf{U}_{T-1}} \sum \limits_{\substack{i_{1:k} \\ j_{1:k}}} \bra{\mathbf{0}} \left( P_{i_{1:k}} \otimes \ket{p_{k+1:n}} \bra{q_{k+1:n}}\right)_{C_{T-1}} \ket{\mathbf{0}} \\
        &\hspace*{1.8cm}\bra{\mathbf{0}} \left( P_{j_{1:k}} \otimes \ket{r_{k+1:n}} \bra{s_{k+1:n}} \right)_{C_{T-1}} \ket{\mathbf{0}} \\
        & \hspace*{1.8cm}\frac{1}{2^k} \int \limits_{U_T} \text{tr} \left( P_{i_{1:k}}\ket{p_{1:k}} \bra{q_{1:k}}_{U_T} \right) \\
        &\hspace*{2.8cm}\text{tr} \left(P_{j_{1:k}} \ket{r_{1:k}} \bra{s_{1:k}}_{U_T}\right) \text{d}\mathbf{U}_T \numberthis \\
        = &\frac{1}{2^k} \sum \limits_{\substack{i_{1:k} \\ j_{1:k}}} \mu_{T-1}^{(\ket{\mathbf{0}} \bra{\mathbf{0}})}(P_{i_{1:k}} \otimes \ket{p_{k+1:n}} \bra{q_{k+1:n}}, \\
        &\hspace*{2.3cm}P_{j_{1:k}} \otimes \ket{r_{k+1:n}} \bra{s_{k+1:n}}) \\
        &\hspace*{1.1cm}\frac{1}{2^k} \int \limits_{U_T} \text{tr} \left( P_{i_{1:k}} \ket{p_{1:k}} \bra{q_{1:k}}_{U_T} \right) \\ 
        &\hspace*{2cm}\text{tr} \left( P_{j_{1:k}} \ket{r_{1:k}} \bra{s_{1:k}}_{U_T} \right) \text{d}\mathbf{U}_T. \numberthis
    \end{align*}

    Now that we have isolated the integration over $U_T$, we use Lemma~\ref{le:2_design} to integrate over $U_T$ and get expressions that look like $\mu_{T}^{\left(\ket{\mathbf{0}} \bra{\mathbf{0}}\right)}(\ket{p} \bra{q}, \ket{r}\bra{s})$, but with different bit strings, defined over an $n-1$ qubit system, and as an integral of $T-1$ Haar random gates. 
    
    Notice that when 
    $b_1 \notin \{0,2\}$, $ \bra{\mathbf{0}} \left(P_{b_{1:k}} \otimes \ket{p_{k+1:n}} \bra{q_{k+1:n}} \right)_{C_{T-1}} \ket{\mathbf{0}}=0 $. Similarly, using Lemma~\ref{le:2_design}, we can see that when $P_{i_{1:k}}$ and $P_{j_{1:k}}$ are not equal, $\int \limits_{U_T} \text{tr} \left( P_{i_{1:k}} \ket{p_{1:k}} \bra{q_{1:k}}_{U_T} \right) \text{tr} \left( P_{j_{1:k}} \ket{r_{1:k}} \bra{s_{1:k}}_{U_T} \right) = 0$. 

    Define 
    \begin{align}
        \delta_{\text{ip}}^{(1)} &= \delta_{p_{1:k},r_{1:k}} \delta_{q_{1:k},s_{1:k}} \\ \delta_{\text{tr}}^{(1)} &= \delta_{p_{1:k},q_{1:k}} \delta_{r_{1:k},s_{1:k}}.
    \end{align}
    When $ i_{1:k} = j_{1:k} = 0\dots 0$, we have
    \begin{align*}
        \frac{1}{2^k} \int \limits_{U_T} &\text{tr} \left( P_{i_{1:k}} \ket{p_{1:k}} \bra{q_{1:k}}_{U_T} \right) \\
        &\text{tr} \left( P_{j_{1:k}} \ket{r_{1:k}} \bra{s_{1:k}}_{U_T} \right) \text{d}U_T = \frac{\delta_{\text{tr}}^{(1)}}{2^k} \numberthis
    \end{align*}
    and when $ i_{1:k} = j_{1:k} \neq 0\dots 0$ with $ i_1 \in \{ 0,2\}$, we have
    \begin{align*}
        &\frac{1}{2^k} \int \limits_{U_T} \text{tr} \left( P_{i_{1:k}} \ket{p_{1:k}} \bra{q_{1:k}}_{U_T} \right) \\
        &\hspace*{0.9cm}\text{tr} \left( P_{j_{1:k}} \ket{r_{1:k}} \bra{s_{1:k}}_{U_T} \right) \text{d}U_T \numberthis \\
        = &\frac{1}{2^{2k}-1}\delta_{\text{ip}}^{(1)}  - \frac{1}{2^k(2^{2k}-1)}  \delta_{\text{tr}}^{(1)} \numberthis \\
        = &\frac{2^k\delta_{\text{ip}}^{(1)} - \delta_{\text{tr}}^{(1)}}{2^k(2^{2k}-1)} \numberthis \\
        = &\tau. \numberthis
    \end{align*}
     Hence, we have
    \begin{align*}
    &\mu_{T}^{\left(\ket{\mathbf{0}} \bra{\mathbf{0}}\right)}(\ket{p} \bra{q}, \ket{r}\bra{s}) \\
        =&\frac{\delta_{\text{tr}}^{(1)}}{2^{2k}} \mu_{T-1}^{(\ket{0}\bra{0} ^ {\otimes n-1})}(\mathds1_{k-1} \otimes \ket{p_{k+1:n}} \\
        &\bra{q_{k+1:n}}, \mathds1_{k-1} \otimes \ket{r_{k+1:n}} \bra{s_{k+1:n}})
        \\
        &+ \frac{\tau}{2^k}\sum \limits_{\substack{P \in \mathbb{P}_{k-1} \\ P \neq \mathds1_{k-1}}}\mu_{T-1}^{(\ket{0}\bra{0} ^ {\otimes n-1})}(P \otimes \ket{p_{k+1:n}} \\
        &\bra{q_{k+1:n}}, P \otimes \ket{r_{k+1:n}} \bra{s_{k+1:n}}). \numberthis
    \end{align*}
    Now, using Lemma~\ref{le:equivalence}, we have
\begin{align*}
        &\mu_{T}^{\left(\ket{\mathbf{0}} \bra{\mathbf{0}}\right)}(\ket{p} \bra{q}, \ket{r}\bra{s}) \\
        =&\frac{\delta_{\text{tr}}^{(1)}}{2^{2k}} \mu_{T-1}^{(\ket{0}\bra{0} ^ {\otimes n-1})}(\mathds1_{k-1} \otimes \ket{p_{k+1:n}} \bra{q_{k+1:n}}, \\
        &\hspace*{2.6cm}\mathds1_{k-1} \otimes \ket{r_{k+1:n}} \bra{s_{k+1:n}}) \\
        &+ \frac{(2^{2k-2}-1)\tau }{2^k}\mu_{T-1}^{(\ket{0}\bra{0} ^ {\otimes n-1})}(Z^{\otimes k-1} \otimes \ket{p_{k+1:n}} \\
        &\hspace*{0.4cm}\bra{q_{k+1:n}}, Z^{\otimes k-1} \otimes \ket{r_{k+1:n}} \bra{s_{k+1:n}}). \numberthis
    \end{align*}
    This is because there always exists a unitary that will map any non-identity Pauli to any other non-identity Pauli.
    Let $\mathds1_{k-1} = \sum \limits_{i=0}^{2^{k-1}-1} \ket{i} \bra{i}$ and $ Z^{\otimes (k-1)} = \sum \limits_{i}^{2^{k-1}-1} \lambda_{i} \ket{i} \bra{i}$ be spectral decompositions. Then, we have
    \begin{align*}
    &\mu_{T}^{\left(\ket{\mathbf{0}} \bra{\mathbf{0}}\right)}(\ket{p} \bra{q}, \ket{r}\bra{s}) \\
        =&\frac{\delta_{\text{tr}}^{(1)}}{2^{2k}} \sum \limits_{i,j=0}^{2^{k-1}-1} \mu_{T-1}^{\left(\ket{0}\bra{0} ^ {\otimes n-1}\right)}(\ket{i} \bra{i} \otimes \ket{p_{k+1:n}} \bra{q_{k+1:n}}, \\
        &\hspace*{3.6cm}\ket{j} \bra{j} \otimes \ket{r_{k+1:n}} \bra{s_{k+1:n}}) \\
        &+ \frac{(2^{2k-2} - 1)\tau}{2^k} \sum \limits_{i,j=0}^{2^{k-1}-1} \lambda_i \lambda_j \mu_{T-1}^{(\mathbf{0} ^ {\otimes n-1})} (\ket{i} \bra{i} \otimes \ket{p_{k+1:n}} \\
        &\hspace*{0.4cm}\bra{q_{k+1:n}}, \ket{j} \bra{j} \otimes \ket{r_{k+1:n}} \bra{s_{k+1:n}}). \numberthis
    \end{align*}

    Now, using Lemma~\ref{le:equivalence}, we also have that
    \begin{align*}
        &\mu_{T-1}^{(\ket{0}\bra{0} ^ {\otimes n-1})}(\ket{i} \bra{i} \otimes \ket{p_{k+1:n}} \bra{q_{k+1:n}}, \\
        &\hspace*{1.9cm}\ket{i} \bra{i} \otimes \ket{r_{k+1:n}} \bra{s_{k+1:n}}) \\
        =&\mu_{T-1}^{(\ket{0}\bra{0} ^ {\otimes n-1})}(\ket{0}\bra{0} ^ {\otimes k-1}\otimes \ket{p_{k+1:n}} \bra{q_{k+1:n}}, \\
        &\hspace*{1.9cm}\ket{0}\bra{0} ^ {\otimes k - 1} \otimes \ket{r_{k+1:n}} \bra{s_{k+1:n}}) \numberthis
    \end{align*}
    and
    \begin{align*}
        & \mu_{T-1}^{(\ket{0}\bra{0} ^ {\otimes n-1})}(\ket{i} \bra{i} \otimes \ket{p_{k+1:n}} \bra{q_{k+1:n}}, \\
        &\hspace*{1.9cm}\ket{j} \bra{j} \otimes \ket{r_{k+1:n}} \bra{s_{k+1:n}}) \\
        = &\mu_{T-1}^{(\ket{0}\bra{0} ^ {\otimes n-1})}(\ket{0}\bra{0} ^ {\otimes k-1} \otimes \ket{p_{k+1:n}} \bra{q_{k+1:n}}, \\
        &\hspace*{1.9cm}\ket{1} \bra{1} ^ {\otimes k-1} \otimes \ket{r_{k+1:n}} \bra{s_{k+1:n}}). \numberthis
    \end{align*}

    The reason for the second equation is that we can always find a $k-1$-qubit unitary $V$ such that $V\ket{i} \bra{i} V^{\dag} = \ket{0} \bra{0} ^ {\otimes k-1}$ and $V\ket{j} \bra{j} V^{\dag} = \ket{1} \bra{1}^ {\otimes k-1}$. A similar explanation for the first equation as well.

    Now, define ${\Delta_t}^{(=)}$ and ${\Delta_t}^{(\neq)}$ as
    \begin{align*}
        {\Delta_t}^{(=)} &= \mu_{T-t}^{(\ket{0}\bra{0} ^ {\otimes n-t})}(\ket{0}\bra{0} ^ {\otimes k-1} \otimes \ket{p_{k+t:n}} \\ 
        &\hspace*{0.4cm}\bra{q_{k+t:n}},\ket{0}\bra{0} ^ {\otimes k-1} \otimes \ket{r_{k+t:n}} \bra{s_{k+t:n}}) \numberthis \\
        {\Delta_t}^{(\neq)} &=  \mu_{T-t}^{(\ket{0}\bra{0} ^ {\otimes n-t})}(\ket{0}\bra{0} ^ {\otimes k-1} \otimes \ket{p_{k+t:n}} \\ 
        &\hspace*{0.4cm}\bra{q_{k+t:n}},\ket{1}\bra{1} ^ {\otimes k-1} \otimes \ket{r_{k+t:n}} \bra{s_{k+t:n}}). \numberthis
    \end{align*}

    Then, we have
    \begin{align*}
        &\mu_{T-1}^{(\ket{0}\bra{0} ^ {\otimes n-1})}(\mathds1_{k-1} \otimes \ket{p_{k+1:n}} \bra{q_{k+1:n}}, \\
        &\hspace*{1.9cm}\mathds1_{k-1} \otimes \ket{r_{k+1:n}} \bra{s_{k+1:n}}) \\
        = &2^{k-1} \Delta_1^{(=)} + 2^{k-1} (2^{k-1} - 1) \Delta_1^{(\neq)}. \numberthis
    \end{align*}

    Similarly, we have
    \begin{align*}
        &\mu_{T-1}^{(\ket{0}\bra{0} ^ {\otimes n-1})}(Z^{\otimes k-1} \otimes \ket{p_{k+1:n}} \bra{q_{k+1:n}}, \\
        &\hspace*{1.9cm}Z^{\otimes k-1} \otimes \ket{r_{k+1:n}} \bra{s_{k+1:n}}) \\
        = &2^{k-1} \Delta_1^{(=)} + \sum \limits_{i,j=0, i \neq j} ^ {2^{k-1}-1} \lambda_i \lambda_j \Delta_1^{(\neq)} \\
        = &2^{k-1} \Delta_1^{(=)} -2^{k-1} \Delta_1^{(\neq)}. \numberthis
    \end{align*}
    The reason for the last equality is as follows. Notice that the set $\{ \lambda_i \ | \ i=0, \dots, 2^{k-1}\}$ has $2^{k-2}$ $1$s and $2^{k-2}$ $-1$s. So, using Lemma~\ref{le:cross_product} we see that $\sum \limits_{i \neq j} \lambda_{i} \lambda_{j} = -2^{k-1}$. 

    Hence, we have
    \begin{align}
        \mu_{T}^{\left(\ket{\mathbf{0}} \bra{\mathbf{0}}\right)}(\ket{p} \bra{q}, \ket{r}\bra{s}) = \alpha_1\Delta_1^{(=)} +  \beta_1\Delta_1^{(\neq)},
    \end{align}
    where
    \begin{align*}
        \alpha_1 = &\frac{\delta_{\text{tr}}^{(1)}}{2^{k+1}} + \frac{(2^{2k-2}-1)(2^k\delta_{\text{ip}}^{(1)} - \delta_{\text{tr}}^{(1)})}{2^{k+1}(2^{2k}-1)}, \numberthis \\
        \beta_1 = &\frac{\delta_{\text{tr} }^{(1)}(2^{k-1}-1)}{2^{k+1}} \\
        &- \frac{(2^{2k-2}-1)(2^k\delta_{\text{ip}}^{(1)} - \delta_{\text{tr}}^{(1)})}{2^{k+1}(2^{2k}-1)} \numberthis.
    \end{align*}
    So, we have achieved the goal of reducing the Eq~\eqref{eq:upper_bound_pqrs} to a similar integral of $n-1$ qubit systems and $T-1$ Haar random gates. 

    Now, we can integrate $\Delta_1^{(=)}$ and $\Delta_1^{(\neq)}$ in the same way. For that, we first define
    \begin{align*}
        \alpha_t^{(=)} &= \frac{\delta_{\text{tr}}^{(1)}}{2^{k+1}} + \frac{(2^{2k-2}-1)(2^k\delta_{\text{ip}}^{(t)} - \delta_{\text{tr}}^{(t)})}{2^{k+1}(2^{2k}-1)}, \numberthis \\
        \beta_t^{(=)} &= \frac{\delta_{\text{tr} }^{(1)}(2^{k-1}-1)}{2^{k+1}} \\
        &- \frac{(2^{2k-2}-1)(2^k\delta_{\text{ip}}^{(t)} - \delta_{\text{tr}}^{(t)})}{2^{k+1}(2^{2k}-1)} \numberthis \\
        \alpha_t^{(\neq)} &= \frac{\delta_{\text{tr}}^{(1)}}{2^{k+1}} + \frac{(2^{2k-2}-1)( - \delta_{\text{tr}}^{(t)})}{2^{k+1}(2^{2k}-1)} \numberthis \\
        \beta_t^{(\neq)} &= \frac{\delta_{\text{tr} }^{(1)}(2^{k-1}-1)}{2^{k+1}} - \frac{(2^{2k-2}-1)( - \delta_{\text{tr}}^{(t)})}{2^{k+1}(2^{2k}-1)},\numberthis
    \end{align*}
    for $2 \leq t \leq T$ and 
    \begin{align}
        \delta_{\text{ip}}^{(t)} = \delta_{p_{k+t-1},r_{k+t-1}} \delta_{q_{k+t-1},s_{k+t-1}} \\
        \delta_{\text{tr}}^{(t)} = \delta_{p_{k+t-1},q_{k+t-1}} \delta_{r_{k+t-1},s_{k+t-1}}.
    \end{align}
    
    Integrating $\Delta_1^{(=)}$ will result in $\alpha_2^{(=)}\Delta_2^{(=)} +  \beta_2^{(=)}\Delta_2^{(\neq)}$ and integrating $\Delta_1^{(\neq)}$ will result in $\alpha_2^{(\neq)}\Delta_2^{(=)} + \beta_2^{(\neq)}\Delta_2^{(\neq)}$.
    
    Assume that after integration over unitaries $U_T, \dots U_{T-t+1}$, we get $ \gamma^{(=)} \Delta_t^{(=)} + \gamma^{(\neq)} \Delta_t^{(\neq)}$. Now, if we integrate over $U_{T-t}$, the coefficients of $\Delta_{t+1}^{(=)}$ and $\Delta_{t+1}^{(\neq)}$ will be $ \alpha_{t+1}^{(=)}\gamma^{(=)} + \alpha_{t+1}^{(\neq)}\gamma^{(\neq)}$ and $ \beta_{t+1}^{(=)}\gamma^{(=)} + \beta_{t+1}^{(\neq)}\gamma^{(\neq)}$ respectively. Therefore, we have 
    \begin{align}
        \mu_{T}^{\left(\ket{\mathbf{0}} \bra{\mathbf{0}}\right)}&(\ket{p} \bra{q}, \ket{r}\bra{s}) = M_TM_{T-1}\dots M_1 
    \end{align}
    where
    \begin{align}
        M_1 = \begin{bmatrix}
                \alpha_{1} \\
                \beta_{1}
        \end{bmatrix}, \ 
        M_T = \begin{bmatrix}
                \alpha_t & \beta_t
        \end{bmatrix}, \ 
        M_t = \begin{bmatrix}
                \alpha_{t} & \alpha_{t}' \\
                \beta_{t} & \beta_{t}'
        \end{bmatrix}.
    \end{align}
    \begin{align}
        \alpha_t &= \Delta_{T-1}^{(=)} = \frac{\delta_{\text{tr}}^{(T)} + \delta_{\text{ip}}^{(T)}}{2^k + 1}, \ \ \  \beta_t = \Delta_{T-1}^{(\neq)} = \frac{\delta_{\text{tr}}^{(T)}}{2^k + 1},
    \end{align}
    and $2 \leq t \leq T-1$. $\alpha_t$ and $\beta_t$ can be evaluated directly using Lemma~\ref{le:2_design}. Since each $M_t$ is a $2 \times 2$ matrix, its eigenvalues can be computed analytically. We use SymPy for this computation and the eigenvalues are
    \begin{align}
        \frac{\delta_{\text{tr}}^{(t)}}{4}, \ \ \  \frac{\delta_{\text{tr}}^{(t)}\left(4^k-4 \right)}{8 \left(4^k-1 \right)}.
    \end{align}
    We can see that the absolute values of all these eigenvalues are upper bounded by $1/4$.

    Also, we have
    \begin{align*} \label{eq:b_0_gl}
        &\| M_T\|_2 \| M_1\|_2 \\   
        \leq &\sqrt{\frac{10 \cdot 2^{6k} + 140 \cdot 2^{4k} - 240 \cdot 2^{3k} -220 \cdot 2^{2k} + 240 \cdot 2^k + 160}{2^{2k+6}\left( 2^k+1\right)^2\left( 2^{2k}-1\right)^2}} \\
        \leq &1 \numberthis
    \end{align*}
    (computed using SymPy). Combining Eq~\eqref{eq:b_0_gl} with Eq~\eqref{eq:sigma_sigma} gives us
    \begin{align*} \label{eq:sigma_sigma2}
        \mu_{T}^{\left(\ket{\mathbf{0}} \bra{\mathbf{0}} \right)}(\sigma, \sigma) &\leq \sum \limits_{pqrs} | \sigma_{pq} || \sigma_{rs}| \frac{1}{4^{{n-k-1}}} \\
        &= \frac{\| \sigma\|_1^2}{4^{n-k-1}} \numberthis.
    \end{align*}

    Combining Eq~\eqref{eq:sigma_sigma2} with Lemma~\ref{le:equivalence} completes the proof.
\end{proof}

\begin{customthm}{2} 
    Let $\sigma \in \mathbb{D}_n$, $O \coloneqq 1/n\sum_{i=1}^n \ket{0} \bra{0}_i$, and $C_T^{(n)}$ be an MPS ansatz, where each parameterized subcircuit $U_i$ forms a unitary $2$-design. Then, we have 
    \begin{align}
        \text{Var}_{\boldsymbol{\theta}} \left(f_{\sigma, O}(\boldsymbol{\theta}) \right)  \geq \frac{1}{n(2 ^ {2k+1} + 4)} - \frac{h_2(\sigma)}{2n}.
    \end{align}
\end{customthm}

\begin{proof}
    First, notice that for any $i \in \{ 1, \dots, n \}$, 
    \begin{align} \label{eq:z_instead_0}
        &f_{\sigma, \ket{0}\bra{0}_i}(\boldsymbol{\theta}) = \frac{1}{2} + \frac{f_{\sigma, Z_i} (\boldsymbol{\theta}) }{2},
        \end{align}
        implying that
        \begin{align}
        &\text{Var}_{\boldsymbol{\theta}} \left( f_{\sigma, \ket{0}\bra{0}_i}(\boldsymbol{\theta}) \right) =\frac{1}{4} \text{Var}_{\boldsymbol{\theta}} \left( f_{\sigma,Z_i}(\boldsymbol{\theta}) \right). 
    \end{align}

    So, we have
    \begin{align*} \label{eq:variance_to_second_moment}
        \text{Var}_{\boldsymbol{\theta}} \left( f_{\sigma, O}(\boldsymbol{\theta})\right) = &\frac{1}{n^2}\text{Var}_{\boldsymbol{\theta}} \left( \sum \limits_{i=1}^n f_{\sigma, \ket{0}\bra{0}_i}(\boldsymbol{\theta}) \right) \numberthis \\
        = &\frac{1}{4n^2}\text{Var}_{\boldsymbol{\theta}} \left( \sum \limits_{i=1}^n f_{\sigma,Z_i}(\boldsymbol{\theta}) \right) \numberthis \\
        = &\frac{1}{4n^2} \mathbb{E}_{\boldsymbol{\theta}} \left( \sum \limits_{i=1}^n f_{\sigma,Z_i}(\boldsymbol{\theta})\right)^2 \\
        &- \frac{1}{4n^2} \left(\mathbb{E}_{\boldsymbol{\theta}}  \sum \limits_{i=1}^n f_{\sigma, Z_i} (\boldsymbol{\theta})\right)^2 \numberthis \\
        = &\frac{1}{4n^2} \sum \limits_{i=1}^n \mathbb{E}_{\boldsymbol{\theta}} \left(  f_{\sigma,Z_i}(\boldsymbol{\theta}) \right)^2 \\
        &+ \frac{2}{4n^2} \sum \limits_{\substack{i,j=1 \\ i>j}}^n \mathbb{E}_{\boldsymbol{\theta}} \left(  f_{\sigma,Z_i}(\boldsymbol{\theta})  f_{\sigma,Z_j}(\boldsymbol{\theta}) \right) \\
        &- \frac{1}{4n^2} \left(\mathbb{E}_{\boldsymbol{\theta}}  \sum \limits_{i=1}^n f_{\sigma, Z_i}(\boldsymbol{\theta})\right)^2 \numberthis \\
        = &\frac{1}{4n^2} \sum \limits_{i=1}^n \mathbb{E}_{\boldsymbol{\theta}} \left(  f_{\sigma, Z_i}(\boldsymbol{\theta}) \right)^2, \numberthis
    \end{align*}
    where the last equality follow from Lemmas~\ref{le:1_design_multi} and~\ref{le:2_design_multi}.
    
    Similar to the beginning of the proof of Theorem~\ref{th:upper_bound}, for any pure product state $\rho$, using Lemma~\ref{le:equivalence}, we can see that for any $i$
    \begin{align} \label{eq:rho_to_0}
        \mathbb{E}_{\boldsymbol{\theta}} \left(  f_{\rho, Z_i}(\boldsymbol{\theta}) \right)^2 &= \mu_{T}^{(Z_i)}(\rho, \rho) \\
        &= \mu_{T}^{(Z_i)}(\ket{\mathbf{0}} \bra{\mathbf{0}}, \ket{\mathbf{0}} \bra{\mathbf{0}}).
    \end{align}

    Now, we shall derive a lower bound for Eq~\eqref{eq:rho_to_0} $\forall \ i$. We only derive this for $i \leq k$, since we will see that the same lower bound works for any $i > k$ as well. Hence, assume $i \leq k$. First, we compute 
    \begin{align} \label{eq:lower_bound_pq}
        \mu_{T}^{(Z_i)}(\ket{p}\bra{p}, \ket{q}\bra{q}).
    \end{align}
    This will be used later on to compute $ \mu_{T}^{(Z_i)}(\ket{\mathbf{0}} \bra{\mathbf{0}}, \ket{\mathbf{0}} \bra{\mathbf{0}})$.

Our goal is to integrate over $U_T$ and get expressions that look like Eq~\eqref{eq:lower_bound_pq}, but with different bit strings, defined over an $n-1$ qubit system, and as an integral of $T-1$ Haar random gates.

Following the proof of Theorem~\ref{th:upper_bound}, from Eq~\eqref{eq:upper_bound_pqrs}, we have
    \begin{align*} \label{eq:lower_bound_integral_preparation}
        &\mu_{T}^{(Z_i)}\left(\ket{p} \bra{p}, \ket{q}\bra{q} \right) \\
        = &\frac{1}{2^k} \sum \limits_{\substack{i_{1:k} \\ j_{1:k}}} \mu_{T-1}^{(Z_i)}(P_{i_{1:k}} \otimes 
        \ket{p_{k+1:n}} \bra{p_{k+1:n}}, \\
        &\hspace*{2cm}P_{j_{1:k}} \otimes \ket{q_{k+1:n}} \bra{q_{k+1:n}} ) \\
        &\times \frac{1}{2^k} \int \limits_{U_T} \text{tr}\left( P_{i_{1:k}} \ket{p_{1:k}} \bra{p_{1:k}}_{U_T} \right) \\
        &\hspace*{1.4cm}\text{tr} \left( P_{j_{1:k}}\ket{q_{1:k}} \bra{q_{1:k}}_{U_T} \right) \text{d}U_T. \numberthis
    \end{align*}
    We see that whenever $i_1 \neq 0$, the integral drops to $0$. Similar to the proof of Theorem~\ref{th:upper_bound}, we see that when $ i_{2:k} \neq j_{2:k}$, the integral drops to $0$.
    Hence, when $P_{01_2\dots i_k} = P_{0j_2 \dots j_k} \neq \mathds1_{2^k}$, we can directly use Eq~\eqref{eq:upper_bound_pqrs} to get
    \begin{align*} \label{eq:lower_bound_integral_ineq}
        &\frac{1}{2^k} \int \limits_{U_T} \text{tr} \left( P_{0i_{2:k}} \ket{p_{1:k}} \bra{p_{1:k}}_{U_T} \right) \\
        &\hspace*{0.9cm}\text{tr} \left( P_{0i_{2:k}} \ket{q_{1:k}} \bra{q_{1:k}}_{U_T} \right) \text{d}U_T \\
        = &\frac{\delta_{p_{1:k}p_{1:k}}}{2^{2k}-1} - \frac{1}{2^k(2^{2k}-1)} \numberthis\\
        = &\frac{2^k \delta_{q_{1:k}q_{1:k}} - 1}{2^k(2^{2k}-1)}. \numberthis
    \end{align*}

    Similarly, when $P_{01_2\dots i_k} = P_{0j_2 \dots j_k} = \mathds1_{k}$, we have
    \begin{align*} \label{eq:lower_bound_integral_eq}
        \frac{1}{2^k} \int \limits_{U_T} &\text{tr} \left( P_{0i_{2:k}}\ket{p_{1:k}} \bra{p_{1:k}}_{U_T} \right) \\
        &\text{tr} \left( P_{0j_{2:k}}\ket{q_{1:k}} \bra{q_{1:k}}_{U_T} \right) \text{d}U_T = \frac{1}{2^k}. \numberthis \\
    \end{align*}
    Given Eqs~\eqref{eq:lower_bound_integral_eq}, \eqref{eq:lower_bound_integral_ineq}, and~\eqref{eq:lower_bound_integral_preparation},  we have
    \begin{align*}
        &\mu_{T}^{(Z_i)}\left(\ket{p} \bra{p}, \ket{q}\bra{q} \right) \\
        = &\frac{1}{2k-2} \mu_{T-1}^{(Z_i)}(\mathds1_{k-1} \otimes \ket{p_{k+1:n}}\bra{p_{k+1:n}}, \\
        &\hspace*{2cm}\mathds1_{k-1} \otimes \ket{q_{k+1:n}}\bra{q_{k+1:n}}) \\
        &+ \frac{(2^{k+1} \delta_{p_{1:k}q_{1:k}}-2)(2^{2k-2}-1)}{2^{2k-1}(2^{2k}-1)} \\
        &\mu_{T-1}^{(Z_i)}(Z^{\otimes (k-1)} \otimes \ket{p_{k+1:n}}\bra{p_{k+1:n}}, \\
        &\hspace*{0.9cm}Z ^ {\otimes k-1} \otimes \ket{q_{k+1:n}}\bra{q_{k+1:n}}). \numberthis
    \end{align*}
    Let $\mathds1_{k-1} = \sum \limits_{i=0}^{2^{k-1}-1} \ket{i} \bra{i}$ and $ Z^{\otimes (k-1)} = \sum \limits_{i}^{2^{k-1}-1} \lambda_{i} \ket{i} \bra{i}$ be spectral decompositions. Then, we have
    \begin{align*}
    &\mu_{T}^{(Z_i)}\left(\ket{p} \bra{p}, \ket{q}\bra{q} \right) \\
        =&\frac{1}{2^{2k-2}} \sum \limits_{i,j=0} ^ {2^{k-1}-1}\mu_{T-1}^{(Z_i)}(\ket{i}\bra{i} \otimes \ket{p_{k+1:n}}\bra{p_{k+1:n}}, \\
        &\hspace*{2.9cm}\ket{j}\bra{j} \otimes \ket{q_{k+1:n}}\bra{q_{k+1:n}}) \\
        &+ \frac{(2^{k+1} \delta_{p_{1:k}q_{1:k}}-2)(2^{2k-2}-1)}{2^{2k-1}(2^{2k}-1)} \\
        &\hspace*{0.4cm}\sum \limits_{i,j=0} ^ {2^{k-1}-1} \lambda_i \lambda_j \mu_{T-1}^{(Z_i)}(\ket{i}\bra{i} \otimes \ket{p_{k+1:n}}\bra{p_{k+1:n}}, \\
        &\hspace*{3.cm}\ket{j}\bra{j} \otimes \ket{q_{k+1:n}}\bra{q_{k+1:n}}). \numberthis
    \end{align*}
    Next, we define 
    \begin{align*}
        \Delta_t^{(=)} &= \int \limits_{\mathbf{U}_{T-t}} \mu_{T-t}^{(Z_i)}(\ket{0}\bra{0} ^ {\otimes k-1} \otimes \ket{p_{k+t:n}} \bra{p_{k+t:n}}, \\
        &\hspace*{2cm}\ket{0}\bra{0} ^ {\otimes k-1} \otimes \ket{q_{k+t:n}} \bra{q_{k+t:n}}), \numberthis \\
        \Delta_t^{(\neq)} &= \int \limits_{\mathbf{U}_{T-t}} \mu_{T-t}^{(Z_i)}(\ket{0}\bra{0} ^ {\otimes k-1} \otimes \ket{p_{k+t:n}} \bra{p_{k+t:n}}, \\
        &\hspace*{2cm}\ket{1}\bra{1} ^ {\otimes k-1} \otimes \ket{q_{k+t:n}} \bra{q_{k+t:n}}). \numberthis
    \end{align*}
    In a similar manner to how we proceeded in Theorem~\ref{th:upper_bound}, using Lemmas~\ref{le:equivalence} and~\ref{le:cross_product}, we have
    \begin{align}
        \mu_{T}^{(Z_i)}(\ket{p} \bra{p}, \ket{q}\bra{q}) = \alpha \Delta_1^{(=)} + \beta \Delta_1^{(\neq)},
    \end{align}
    where 
    \begin{align}
        \alpha &= \frac{1}{2^{k-1}} + \frac{(2^{2k-2}-1)(2^k\delta_{p_{1:k}q_{1:k}}-1)}{2^{k-1}(2^{2k}-1)}, \\
        \beta &= \frac{2^{k-1}-1}{2^{k-1}} - \frac{(2^{2k-2}-1)(2^k\delta_{p_{1:k}q_{1:k}}-1)}{2^{k-1}(2^{2k}-1)}.
    \end{align}
    Now, let us consider these values when $p=q$ and $p \neq q$. Define
    \begin{align}
        \alpha^{(=)} &= \frac{1}{2^{k-1}} + \frac{2^{2k-2}-1}{2^{k-1}(2^{k}+1)}, \\
        \beta^{(=)} &= \frac{2^{k-1}-1}{2^{k-1}} - \frac{2^{2k-2}-1}{2^{k-1}(2^{k}+1)}, \\
        \alpha^{(\neq)} &= \frac{1}{2^{k-1}} - \frac{2^{2k-2}-1}{2^{k-1}(2^{2k}-1)}, \\ 
        \beta^{(\neq)} &= \frac{2^{k-1}-1}{2^{k-1}} + \frac{2^{2k-2}-1}{2^{k-1}(2^{2k}-1)}.
    \end{align}
    So we have
    \begin{align}
        \mu_{T}^{(Z_i)}(\ket{\mathbf{0}} \bra{\mathbf{0}}, \ket{\mathbf{0}} \bra{\mathbf{0}}) = \alpha^{(=)}\Delta_1^{(=)} + \beta^{(=)}\Delta_1^{(\neq)}.
    \end{align}
    Similar to the proof of Theorem~\ref{th:upper_bound}, we can see that when we integrate $\Delta_1^{(=)}$ and $\Delta_1^{(\neq)}$ with respect to $U_{T-1}$, we get linear combinations of $ \Delta_2^{(=)}$ and $\Delta_2^{(\neq)}$. 

    Assume that after integration over unitaries $U_T, \dots U_{T-t+1}$, we get $ \gamma^{(=)} \Delta_t^{(=)} + \gamma^{(\neq)} \Delta_t^{(\neq)}$. Now, if we integrate over $U_{T-t}$, the coefficients of $\Delta_{t+1}^{(=)}$ and $\Delta_{t+1}^{(\neq)}$ will be $ \alpha^{(=)}\gamma^{(=)} + \alpha^{(\neq)}\gamma^{(\neq)}$ and $ \beta^{(=)}\gamma^{(=)} + \beta^{(\neq)}\gamma^{(\neq)}$ respectively. Unlike in the proof of Theorem~\ref{th:upper_bound}, these coefficients are independent of $t$. 
    
    So we have
    \begin{align}
        \mu_{T}^{(Z_i)}(\ket{\mathbf{0}} \bra{\mathbf{0}}, \ket{\mathbf{0}} \bra{\mathbf{0}}) = M_T M_{n-k-1} M_1
    \end{align}
    where
    \begin{align*}
        M_1 = \begin{bmatrix}
                \alpha \\
                \beta
        \end{bmatrix}, \ 
        M_T = \begin{bmatrix}
                \alpha_T & \beta_T
        \end{bmatrix}, \\ 
        M = \begin{bmatrix}
                \alpha^{(=)} & \alpha^{(\neq)} \\
                \beta^{(=)} & \beta^{(\neq)}
        \end{bmatrix}, \numberthis
    \end{align*}
    \begin{align}
        \alpha_T = \Delta_{T-1}^{(=)}= \frac{1}{2^k + 1},  \ \ \ \ \ \beta_T = \Delta_{T-1}^{(\neq)}
         = \frac{-1}{2^{2k}-1}.
    \end{align}
    $\alpha_t$ and $\beta_t$ can be evaluated directly using Lemma~\ref{le:2_design}. The eigenvalues of the matrix $M$ are $1$ and
    \begin{align}
        B_2 = \frac{2^{2k-2} - 1}{2^{2k}-1}.
    \end{align}

    Therefore,
    \begin{align*} \label{eq:lower_bound}
        \mu_{T}^{(Z_i)}(\ket{\mathbf{0}} \bra{\mathbf{0}}, \ket{\mathbf{0}} \bra{\mathbf{0}}) &= B_0 + B_1 B_2^{n-k-1} \\
        &\geq B_0, \numberthis
    \end{align*}
    where
    \begin{align}
        B_0 &= \frac{1}{2 ^ {2k-1} + 1}, \\
        B_1 &= \frac{2^{5k} - 2^{4k} -6 \cdot 2 ^ {3k} + 4 \cdot 2^{2k} + 2 \cdot 2^k }{2(2^k + 1)^2\cdot (2^{2k}-1)(2^{2k} + 2)}.
    \end{align}

    Now, notice that whenever $i > k$, we have 
    \begin{align}
        &\int \limits_{\mathbf{U}_T}\left( \text{tr}\left(Z_i{\ket{\mathbf{0}} \bra{\mathbf{0}}}_{C_T} \right) \right)^2 \text{d}\mathbf{U}_T \\
        = &\int \limits_{\mathbf{U}_{T-i+k}} \text{tr}\left( Z_k \ket{0}\bra{0}^{\otimes n-i+k}_{C_{T-i+k}} \right)  ^ 2 \text{d}\mathbf{U}_{T-i+k}, \numberthis
    \end{align}
    which is also an instance of the previous case defined over $n-i+k$ qubits. Hence, from Eq~\eqref{eq:lower_bound}, we can see that the same lower bound shall apply in this case as well. So, we have
    \begin{align}
        \text{Var}_{\boldsymbol{\theta}} \left( f_{\rho, O}(\boldsymbol{\theta})\right) \geq \frac{1}{4n} B_0.
    \end{align}

    Now, consider $\sigma \in \mathbb{D}_n$ such that $\|\rho - \sigma\|_{\text{tr}} \leq \epsilon$ for some pure product state $\rho$. Then, for any $i$, we have
    \begin{align*}
        &\left|\int \limits_{\mathbf{U}_T} \text{tr}\left(Z_i\rho_{ C_T} \right) ^ 2 \text{d}\mathbf{U}_T - \int \limits_{\mathbf{U}_T} \text{tr}\left( Z_i \sigma_{C_T} \right) ^ 2 \text{d}\mathbf{U}_T \right| \\
        \leq &\int \limits_{\mathbf{U}_T}\left| \text{tr}\left( Z_i\rho_{C_T} \right) ^ 2 - \text{tr}\left(Z_i \sigma_{C_T} \right) ^ 2 \right| \text{d}\mathbf{U}_T \numberthis \\
        = &\int \limits_{\mathbf{U}_T}\left| \text{tr} \left( Z_i\rho_{C_T} \right) - \text{tr} \left( Z_i\sigma_{C_T} \right) \right| \\
        &\left| \text{tr}\left( Z_i\rho_{C_T} \right) + \text{tr} \left( Z_i \sigma_{C_T} \right)\right| \text{d}\mathbf{U}_T \numberthis \\
        = &\int \limits_{\mathbf{U}_T} \left| \text{tr} \left( Z_i(\rho-\sigma)_{C_T} \right) \right| \left|\text{tr} \left( Z_i(\rho + \sigma)_{C_T} \right) \right|\text{d}\mathbf{U}_T \numberthis \\
        \leq &\| \rho - \sigma\|_{\text{tr}} \| \| \rho + \sigma\|_{\text{tr}} \numberthis  \\
        \leq & 2 \| \rho - \sigma\|_{\text{tr}} \numberthis.
    \end{align*}
    where $ \| \cdot \|_{\text{tr}}$ is the trace norm and the second-last inequality follows from Tracial Matrix H\"{o}lder's Inequality (Lemma~\ref{le:matrix_holder}). Then, we have
    \begin{align*}
        &\sum \limits_{i=1}^n \int \limits_{\mathbf{U}_T} \text{tr}\left( Z_i\rho_{C_T} \right) ^ 2 \text{d}\mathbf{U}_T -2n\| \sigma - \rho\|_{\text{tr}} \\
        \leq &\sum \limits_{i=1}^n \int \limits_{\mathbf{U}_T} \text{tr}\left( Z_i\sigma_{C_T} \right) ^ 2 \text{d}\mathbf{U}_T, \numberthis 
        \end{align*}
        implying that
        \begin{align*}
        &\frac{1}{4n^2}\sum \limits_{i=1}^n \int \limits_{\mathbf{U}_T} \text{tr}\left( Z_i\rho_{C_T}  \right) ^ 2 \text{d}\mathbf{U}_T -\frac{1}{2n}\| \sigma - \rho\|_{\text{tr}} \\
        \leq &\frac{1}{4n^2} \sum \limits_{i=1}^n \int \limits_{\mathbf{U}_T} \text{tr}\left( Z_i\sigma_{C_T} \right) ^ 2 \text{d}\mathbf{U}_T \numberthis
        \end{align*}
        further implying that
        \begin{align*}
        &\text{Var}_{\boldsymbol{\theta}} \left( f_{\rho,O}(\boldsymbol{\theta})\right) -\frac{1}{2n}\| \sigma - \rho\|_{\text{tr}} \leq  \text{Var}_{\boldsymbol{\theta}} \left( f_{\sigma, O}(\boldsymbol{\theta})\right), \numberthis
        \end{align*}
        and hence, we have
        \begin{align*}
        &\frac{B_0}{4n} -\frac{\epsilon}{2n} \leq \text{Var}_{\boldsymbol{\theta}} \left( f_{\sigma,O}(\boldsymbol{\theta})\right) \numberthis.
    \end{align*}

    Now, for this $\sigma$, minimization over all $\rho$ as required by the theorem completes the proof.
    \end{proof}

\begin{customco}{1}
    Let $\sigma \in \mathbb{D}_n$ and $C_T^{(n)}$ be an MPS ansatz. Then, $f_{\sigma, \ket{\mathbf{0}} \bra{\mathbf{0}}}$ exhibitis barren plateaus if $ h_1(\sigma) \in \mathcal{O}(4^{n/p})$.
\end{customco}
\begin{proof}

    Throughout this proof, we assume that within the computation of the variance, $U_p^{(L,q)}$ is distributed according to the Haar measure since trivial changes to the proof are sufficient to prove the same when $U_p^{(R,q)}$ is distributed according to Haar measure. Let $\mathbf{U}_T^{(p,q)} = \{U_1, \dots, U_{p-1}, \theta_{pq}, U_p^{(L,q)}, U_{p+1}, \dots, U_T\}$. Using Lemmas~\ref{le:exp_partial_derivative} and~\ref{le:parameter_shift_rule}, we have
    \begin{align*} \label{eq:upper_bound_corrollary}
         &\text{Var}_{\boldsymbol{\theta}} \left(\partial_{\theta_{pq}}f_{\sigma,\ket{\mathbf{0}}\bra{\mathbf{0}}}(\boldsymbol{\theta})\right) \\
         = &\mathbb{E}_{\boldsymbol{\theta}} \left(\partial_{\theta_{pq}}f_{\sigma,\ket{\mathbf{0}} \bra{\mathbf{0}}}(\boldsymbol{\theta})\right)^2 \numberthis \\
         = &\frac{1}{4} \mathbb{E}_{\boldsymbol{\theta}} \left( f_{\sigma, \ket{\mathbf{0}} \bra{\mathbf{0}}} (\boldsymbol{\theta}_{pq+}) - f_{\sigma, \ket{\mathbf{0}} \bra{\mathbf{0}}} (\boldsymbol{\theta}_{pq-}) \right) ^ 2 \numberthis \\
         = &\frac{1}{4} \mathbb{E}_{\boldsymbol{\theta}} \Big( \left( f_{\sigma, \ket{\mathbf{0}} \bra{\mathbf{0}}} (\boldsymbol{\theta}_{pq+}) \right)^2 + \left( f_{\sigma, \ket{\mathbf{0}} \bra{\mathbf{0}}} (\boldsymbol{\theta}_{pq-}) \right)^2 \\
         &- 2f_{\sigma, \ket{\mathbf{0}} \bra{\mathbf{0}}} (\boldsymbol{\theta}_{pq+}) f_{\sigma, \ket{\mathbf{0}} \bra{\mathbf{0}}} (\boldsymbol{\theta}_{pq-}) \Big) \numberthis \\
         \leq &\frac{1}{4} \mathbb{E}_{\boldsymbol{\theta}} \left( \left(f_{\sigma, \ket{\mathbf{0}} \bra{\mathbf{0}}} (\boldsymbol{\theta}_{pq+}) \right)^2 + \left(f_{\sigma, \ket{\mathbf{0}} \bra{\mathbf{0}}} (\boldsymbol{\theta}_{pq-}) \right)^2 \right) \\
         &+ \frac{1}{2}\left| \mathbb{E}_{\boldsymbol{\theta}} \left(f_{\sigma, \ket{\mathbf{0}} \bra{\mathbf{0}}} (\boldsymbol{\theta}_{pq+}) f_{\sigma, \ket{\mathbf{0}} \bra{\mathbf{0}}} (\boldsymbol{\theta}_{pq-}) \right) \right|. \numberthis
    \end{align*}
    Now, notice that 
    \begin{align*} \label{eq:squared_pi}
        \mathbb{E}_{\boldsymbol{\theta}} \left(f_{\sigma, \ket{\mathbf{0}} \bra{\mathbf{0}}} (\boldsymbol{\theta}_{pq+})\right)^2 &= \mathbb{E}_{\boldsymbol{\theta}} \left(f_{\sigma, \ket{\mathbf{0}} \bra{\mathbf{0}}} (\boldsymbol{\theta}_{pq-})\right)^2 \\
        &= \mathbb{E}_{\boldsymbol{\theta}} \left(f_{\sigma, \ket{\mathbf{0}} \bra{\mathbf{0}}} (\boldsymbol{\theta})\right)^2. \numberthis
    \end{align*}
    This follows from combining Lemma~\ref{le:eta} with the fact that $U_p(\boldsymbol{\theta}_{pq \pm }) = U_p^{(L,q)}({\boldsymbol{\theta}_p}) e^{-i (\theta_{pq} \pm \pi/2) H_{pq}} U_p^{(R,q)}({\boldsymbol{\theta}_p}) = U_p^{(L,q)}({\boldsymbol{\theta}_p}) e^{-i \theta_{pq} H_{pq}} e^{\pm \frac{i \pi H_{pq}}{2} } U_p^{(R,q)}({\boldsymbol{\theta}_p})$.
    Also,
    \begin{align*} \label{eq:cross_product_pi}
        &\Bigg| \int \limits_{\mathbf{U}_T^{(\theta)}} f_{\sigma, \ket{\mathbf{0}} \bra{\mathbf{0}}} (\boldsymbol{\theta}_{pq+})f_{\sigma, \ket{\mathbf{0}} \bra{\mathbf{0}}} (\boldsymbol{\theta}_{pq-}) \\
        &\left( \prod \limits_{j = 1, j \neq p}^{T} \text{d}U_j \right)\text{d}U_p^{(L,q)} \text{d}\theta_{pq} \Bigg| \\
        \leq &\sqrt{\int \limits_{\mathbf{U}_T^{(\theta)}} \left(f_{\sigma, \ket{\mathbf{0}} \bra{\mathbf{0}}} (\boldsymbol{\theta}_{pq+}) \right) ^ 2 \left( \prod \limits_{j = 1, j \neq p}^{T} \text{d}U_j \right)\text{d}U_p^{(L,q)} \text{d}\theta_{pq}} \\
        &\sqrt{\int \limits_{\mathbf{U}_T^{(\theta)}} \left( f_{\sigma, \ket{\mathbf{0}} \bra{\mathbf{0}}} (\boldsymbol{\theta}_{pq-}) \right)^2 \left( \prod \limits_{j = 1, j \neq p}^{T} \text{d}U_j \right)\text{d}U_p^{(L,q)} \text{d}\theta_{pq}} \numberthis \\
        = &\mathbb{E}_{\boldsymbol{\theta}} \left(f_{\sigma, \ket{\mathbf{0}} \bra{\mathbf{0}}} (\boldsymbol{\theta})\right)^2. \numberthis
    \end{align*}
    This follows directly from Cauchy Schwarz inequality. Plugging Eqs~\eqref{eq:cross_product_pi} and~\eqref{eq:squared_pi} into Eq~\eqref{eq:upper_bound_corrollary} completes the proof.
\end{proof}

    \begin{customco}{2} 
        Let $\sigma \in \mathbb{D}_n$, $C_T^{(n)}$ be an MPS ansatz, and $O \coloneqq 1/n\sum \limits_{i=1}^n \ket{0}\bra{0}_i$. Then, $f_{\sigma, O}$ exhibitis barren plateaus if $ h_2(\sigma) \ll  1/(2 ^ {2k} + 2)$.    \end{customco}
    \begin{proof}

    First, we recall an important lemma relating cost concentration with barren plateaus from~\cite{Arrasmith2022}.
        \begin{lemma} \cite{Arrasmith2022} \label{th:cost_concentration}
        For any ansatz $C(\boldsymbol{\theta})$, $\sigma \in \mathbb{D}_n$ and $ W \in \mathbb{H}_n$, if $ \ \forall \  p,q$, where $ 1 \leq p \leq T, 1 \leq q \leq m$,
        \begin{align}
            \text{Var}_{\boldsymbol{\theta}} \left( \partial_{pq} f_{\sigma,W}(\boldsymbol{\theta}) \right) \in \mathcal{O}(1/b^n)
        \end{align}
        for some $b > 1$, then 
        \begin{align}
            \text{Var}_{\boldsymbol{\theta}} \left( f_{\sigma,W}(\boldsymbol{\theta})\right) \in \mathcal{O}(1/b^n).
        \end{align}.
    \end{lemma}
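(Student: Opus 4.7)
The plan is to show that exponentially decaying gradient variance forces exponentially decaying cost variance via a Poincar\'e--Wirtinger inequality on the parameter torus, which is the standard route for this kind of equivalence in the VQA literature.

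First I would set up the parameter space. Under the conventional assumption that each $\theta_{pq}$ is distributed uniformly on $[0,2\pi]$ and that the generators $H_{pq}$ yield periodicity of $f_{\sigma,W}$ in each coordinate (automatic when $H_{pq}$ is a Pauli tensor product), the joint domain is a torus of total dimension $M = Tm \in \text{poly}(n)$. Integrating the derivative of a periodic function over one full period gives $\mathbb{E}_{\boldsymbol{\theta}}[\partial_{\theta_{pq}} f_{\sigma,W}(\boldsymbol{\theta})] = 0$ without requiring any $2$-design hypothesis.

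Second I would apply the sharp Poincar\'e--Wirtinger inequality on this torus,
\begin{align*}
    \text{Var}_{\boldsymbol{\theta}}\left(f_{\sigma,W}\right) \leq \sum_{p=1}^T \sum_{q=1}^m \mathbb{E}_{\boldsymbol{\theta}}\left[\left(\partial_{\theta_{pq}} f_{\sigma,W}\right)^2\right],
\end{align*}
which admits a short Fourier proof: expanding $f_{\sigma,W} - \mathbb{E}_{\boldsymbol{\theta}}(f_{\sigma,W}) = \sum_{k \neq 0} \hat{f}(k) e^{i k \cdot \boldsymbol{\theta}}$ with $k \in \mathbb{Z}^M$, Parseval gives $\text{Var}(f_{\sigma,W}) = \sum_{k \neq 0} |\hat{f}(k)|^2$ while $\sum_{p,q}\mathbb{E}[(\partial_{\theta_{pq}} f_{\sigma,W})^2] = \sum_k \|k\|_2^2 |\hat{f}(k)|^2 \geq \sum_{k \neq 0} |\hat{f}(k)|^2$, using $\|k\|_2^2 \geq 1$ for every nonzero integer vector $k$. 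Combining with the zero-mean identity from the previous step, $\mathbb{E}[(\partial_{\theta_{pq}} f_{\sigma,W})^2] = \text{Var}(\partial_{\theta_{pq}} f_{\sigma,W})$, so the hypothesis delivers $\mathbb{E}[(\partial_{\theta_{pq}} f_{\sigma,W})^2] \in \mathcal{O}(1/b^n)$ term by term. Summing over $M \in \text{poly}(n)$ indices preserves exponential decay, yielding $\text{Var}_{\boldsymbol{\theta}}(f_{\sigma,W}) \in \mathcal{O}(1/{b'}^n)$ for any $1 < b' < b$, which matches the conclusion "for some $b > 1$".

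The main obstacle is justifying strict periodicity, which underpins both the zero-mean identity and the unit Poincar\'e constant. When some $H_{pq}$ has incommensurable spectral gaps, $f_{\sigma,W}$ is only quasi-periodic in $\theta_{pq}$; this can be handled by rescaling the integration interval to absorb the spectrum, or by replacing the Fourier step with a Chebyshev-plus-mean-value argument that concentrates $\|\nabla f_{\sigma,W}\|$ via Markov's inequality applied to $\sum_{p,q}\mathbb{E}[(\partial_{\theta_{pq}} f_{\sigma,W})^2]$ and then transports the gradient bound along straight-line paths in the compact parameter domain into a bound on fluctuations of $f_{\sigma,W}$ itself. The polynomial-versus-exponential bookkeeping on the final bound is routine once the torus inequality is in hand.
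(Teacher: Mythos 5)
The paper does not prove this statement at all: it is imported verbatim from \cite{Arrasmith2022} and used only through its contrapositive in the proof of Corollary~2, so there is no in-paper argument to compare against. Your Poincar\'e--Wirtinger proof is a correct, self-contained derivation and is essentially a cleaner packaging of the Fourier-series argument that underlies the cited reference (there the cost is expanded as a trigonometric polynomial in each parameter and the variance is compared term by term to the derivative variances; your Parseval step does the same in one line). Two caveats are worth making explicit. First, your argument needs $f_{\sigma,W}$ to be periodic with unit-commensurable frequencies in every coordinate; this is implicitly guaranteed in the paper's setting because the parameter-shift rule (Lemma~\ref{le:parameter_shift_rule}) already presupposes generators with $\pm1$ spectrum, so your fallback discussion for incommensurable gaps is not needed here, though it is good that you flagged it. Second, the paper's barren-plateau definition computes $\text{Var}(\partial_{\theta_{pq}} f)$ with the surrounding sub-blocks Haar-distributed rather than over the uniform parameter measure; your proof lives entirely on the parameter torus, so to connect it to the paper's hypotheses one must invoke the $2$-design assumption to identify the two averages (the integrands are degree-$2$ polynomials in each sub-block, so this identification is legitimate). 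Your handling of the constant --- absorbing the $\text{poly}(n)$ sum into a slightly smaller base $b'$ --- correctly matches the ``for some $b>1$'' reading of the conclusion.
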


    The result then follows from the contrapositive of Lemma~\ref{th:cost_concentration}.
    \end{proof}

\begin{customthm}{3}
    For any $n$-qubit quantum circuit $V$, let $R_V = \max_i R_{V,i}$, where $R_{V,i}$ is the number of 2-qubit gates being applied on any qubits $j, k$ such that $j \leq i \leq k$. Then, for any product observable $W$, $ \| W_{V^{\dag}}\|_{\mathbb{K}}$ can be classically evaluated with cost $\mathcal{O}(2^{R_V})$.
\end{customthm}
\begin{proof}
    From~\cite{Jozsa2006}, we can see that given a classical description of a circuit $V$, an MPS description of $W_{V^{\dag}}$ with bond dimension $\mathcal{O}(2^{R_V})$ can be computed using computational cost scaling as $\mathcal{O}(2^{R_V}) $. The change from the standard basis to the orthonormal Pauli basis is efficient, involving only local rotations. Let $\widehat{W}_{V^{\dag}} $ be a vector of Pauli basis coefficients of $ W_{V^{\dag}}$. Then, $ \| W\|_{\mathbb{K}} = \sqrt{\| \widehat{W}_{V^{\dag}} * \widehat{W}_{V^{\dag}}\|_2/2^n}$, where $*$ is the Hadamard product. From~\cite{Oseledets2011}, the Hadamard and inner products of tensors represented as MPS can be computed efficiently, with cost scaling polynomially in their bond dimension, thus completing the proof.
\end{proof}

\begin{customthm}{4}
    For any parameterized circuit $C$, and an orthonormal basis $\mathbb{K}$ of $\mathbb{C}^{2^n \times 2^n}$, define 
    \begin{align} \label{eq:C_norm}
        \| W\|_{C,\mathbb{K}} \coloneqq \int \limits_{\boldsymbol{\theta}} \left\Vert W_{{C(\boldsymbol{\theta})}^{\dag}}\right\Vert_{\mathbb{K}} \text{d} \boldsymbol{\theta},
    \end{align}
    for any $W \in \mathbb{H}_n$. Then, $\| \cdot \|_{C,\mathbb{K}}$ is a norm on $\mathbb{H}_n$.
\end{customthm}
\begin{proof}
    First notice that $\| W \|_{C,\mathbb{K}}$ is non-negative since it is an average of norms. When $W = 0$, then $\| W \|_{C,\mathbb{K}} = 0$. Similarly, when $\| W \|_{C,\mathbb{K}} = 0$, we will have $\| W_{C(\boldsymbol{\theta})^{\dag}}\|_{\mathbb{K}} = 0 \ \forall \ \boldsymbol{\theta}$ and hence $W = 0$.

    Next, we prove the triangle inequality.
    \begin{align}
        &\| W^{(1)} \|_{C,\mathbb{K}} + \| W^{(2)} \|_{C,\mathbb{K}} \\
        = &\int \limits_{\boldsymbol{\theta}} \left\Vert W^{(1)}_{C(\boldsymbol{\theta})^{\dag}}\right\Vert_{\mathbb{K}} + \left\Vert W^{(2)}_{C(\boldsymbol{\theta})^{\dag}}\right\Vert_{\mathbb{K}} \text{d} \boldsymbol{\theta} \\
        \geq &\int \limits_{\boldsymbol{\theta}} \left\Vert  W^{(1)}_{C(\boldsymbol{\theta})^{\dag}} + W^{(2)}_{C(\boldsymbol{\theta})^{\dag}} \right\Vert_{\mathbb{K}} \text{d} \boldsymbol{\theta} \\
        = &\int \limits_{\boldsymbol{\theta}} \left\Vert \left(W^{(1)} + W^{(2)} \right)_{C(\boldsymbol{\theta})^{\dag}} \right\Vert_{\mathbb{K}} \text{d} \boldsymbol{\theta} \\
        = &\| W^{(1)} + W^{(2)}\|_{C,\mathbb{K}}.
    \end{align}
\end{proof}








\end{document}